\newif\iflong\longtrue
\newcommand{\cA}{{\mathcal A}}
\newcommand{\BB}{{\mathcal B}}
\newcommand{\OO}{{\mathcal O}}
\newcommand{\CC}{{\mathcal C}}
\newcommand{\GG}{{\mathcal G}}
\newcommand{\YY}{{\mathcal Y}}
\newcommand{\bN}{{\mathbb N}}
\newcommand{\ctvd}{{\sc Cliques or Trees Vertex Deletion}}
\newcommand{\VC}{{\sc Vertex Cover}}
\newcommand{\FVS}{{\sc Feedback Vertex Set}}
\newcommand{\CVD}{{\sc Cluster Vertex Deletion Set}}
\newcommand{\IVD}{{\sc Interval Vertex Deletion Set}}
\newcommand{\ChVD}{{\sc Chordal Vertex Deletion Set}}
\newcommand{\eps}{{\varepsilon}}
\newcommand{\defparproblem}[4]{
 \vspace{3mm}
\noindent\fbox{
  \begin{minipage}{.95\textwidth}
  \begin{tabular*}{\textwidth}{@{\extracolsep{\fill}}lr} \textsc{#1} \\ \end{tabular*}
  {\bf{Input:}} #2  \\
  {\bf{Parameter:}} #3\\
  {\bf{Question:}} #4
  \end{minipage}
  }
  \vspace{2mm}
}
\newtheorem{observation}{Observation}
\newtheorem{lemma}{Lemma}
\newtheorem{reduction rule}{Reduction Rule}
\newtheorem{definition}{Definition}
\newtheorem{proposition}{Proposition}
\title{A Polynomial Kernel for Deletion to the Scattered Class of Cliques and Trees \thanks{A preliminary version \cite{JacobMZ24} of this paper has appeared in proceedings of ISAAC-2024. Research of Diptapriyo Majumdar has been supported by Science and Engineering Research Board (SERB) grant SRG/2023/001592. Research of Meirav Zehavi has been supported by European Research Council (ERC) grant PARAPATH and Israel Science Foundation (ISF) grant no. 1470/24.}}
\author[1]{Ashwin Jacob \thanks{ashwinjacob@nitc.ac.in}}
\author[2]{Diptapriyo Majumdar\thanks{diptapriyo@iiitd.ac.in}}
\author[3]{Meirav Zehavi\thanks{meiravze@bgu.ac.il}}
\affil[1]{National Institute of Technology Calicut, Kozhikode, India}
\affil[2]{Indraprastha Institute of Information Technology Delhi, New Delhi, India}
\affil[3]{Ben-Gurion University of The Negev, Beersheba, Israel}
\date{}
\begin{document}

\maketitle

\begin{abstract}
The class of graph deletion problems has been extensively studied in theoretical computer science, particularly in the field of parameterized complexity.
Recently, a new notion of graph deletion problems was introduced, called {\em deletion to scattered graph classes}, where after deletion, each connected component of the graph should belong to at least one of the given graph classes. While fixed-parameter algorithms were given for a wide variety of problems, little progress has been made on the kernelization complexity of any of them.
 Here, we present the first non-trivial polynomial kernel for one such deletion problem, where, after deletion, each connected component should be a clique or a tree - that is, as densest as possible or as sparsest as possible (while being connected).  We develop a kernel of $\OO(k^5)$ vertices for the same.
\end{abstract}



\section{Introduction}
\label{sec:intro}

Graph modification problems form one of the most fundamental problem classes in algorithms and graph theory.
The input instance of a graph modification problem consists of an undirected/directed graph, and a non-negative integer $k$, and the objective is to decide if there exists a set of at most $k$ vertices/edges/non-edges whose deletion/addition yields in a graph belonging to some special graph class.
A specific graph modification problem allows to perform a specific graph operation, usually being {\em vertex deletion} or {\em edge deletion} or {\em edge addition} or {\em edge editing}.
Each of these operations, and vertex-deletion in particular, have been extensively studied from the perspective of classical and parameterized complexity.
For example, some vertex deletion graph problems that have received intense attention in the past three decades include {\VC}, {\FVS}, {\CVD}, {\IVD}, {\ChVD}, and more (see \cite{BoralCKP16,CaoM15,CaoM16,Chen16d,ChenKX10,DerbiszKMSSV22,Dumas023,fomin2013polynomial,KociumakaP14,JacobBDP21,RamanSS06}).

A graph class $\Pi$ is said to be {\em hereditary} if $\Pi$ is closed under induced subgraphs.
There are several hereditary graph classes $\Pi$ such that the corresponding {\sc $\Pi$-Vertex Deletion} problem is well-studied.
Such examples include all of the problems listed above.
Formally, the {\sc $\Pi$-Vertex Deletion} is defined as follows: given a graph $G$ and a non-negative integer $k$, we ask whether $G$ contains at most $k$ vertices whose deletion results in a graph belonging to class $\Pi$.
Lewis and Yannakakis \cite{LewisY80} proved that for every non-trivial $\Pi$, the {\sc $\Pi$-Vertex Deletion} problem is NP-complete.
Later, Cai \cite{Cai96} has proved that if a hereditary graph class $\Pi$ can be described by a finite set of forbidden subgraphs containing all minimal forbidden subgraphs in the class, then vertex deletion to $\Pi$ becomes fixed-parameter tractable (FPT) when solution size ($k$, the number of vertices that can be deleted) is considered as the parameter.

{Most of the computational problems that are {NP}-hard in general graphs can be solved in polynomial time when restricted to special graph classes.
For example, {\VC} is {NP-hard} on general graphs, but can be solved in polynomial time in forests, bipartite graphs, interval graphs, chordal graphs, claw-free graphs, and bounded treewidth graphs (see \cite{Courcelle90,CyganFKLMPPS15,golumbic2004algorithmic,minty1980maximal}).
Additionally, several other graph theoretic problems have also been studied in special graph classes (see \cite{BonamyDFJP19,BroersmaFGKPP15,CormenLRS09,EscoffierGM10,GolovachPL15,HeggernesKM09,JohnsonPP20,KlimosovaMMNPS20,KratschMT08,MartinPL20} for some examples).
{ If an input graph is such that each of its connected components belong to one of those special graph classes where a problem is polynomial time solvable, then the same problem can be solved in polynomial time by solving it over each component of the graph.}
Therefore, such a graph where each of the components belong to different graph classes are interesting.
We say that such graphs belong to a scattered graph class.
Vertex deletion problems are useful to find a set of few vertices whose removal results in a graph class where the problem of our interest is tractable.
Since the same problem is tractable in scattered graph classes (i.e. tractable in each of the graph class), vertex deletion to scattered graph classes are interesting to look at as well.}

{Many of the graph classes can be characterized by a set of forbidden graphs \cite{chudnovsky2006strong,Diestel-Book,golumbic2004algorithmic,lekkeikerker1962representation}. 
Vertex deletion problems for such graph classes boils down to hitting such forbidden subgraphs occuring as induced subgraphs of the input graph. 
Unlike this, for deletion to a scattered graph class, the deletion set $X$ might separate the vertices of the union of the forbidden subgraphs for each of the graph classes (instead of hitting them) so that all such graphs do not occur in any of the connected components of the graph $G-X$. This ramps up the difficulty for coming up with FPT, approximation and kernelization algorithms for deletion to scattered graph classes. A naive approach of finding the solutions (or kernels) for each of the deletion problems separately and ``combining'' them is unlikely to work.}

Ganian et al. \cite{GanianRS17} studied backdoors to scattered classes of CSP problems.
Subsequently, Jacob et al. \cite{JacobKMR23,JacobMR23jcss} built on the works by Ganian et al. \cite{GanianRS17} and initiated a systematic study of vertex deletion to scattered graph classes.
They considered the {\sc ($\Pi_1,\ldots,\Pi_d$)-Deletion} problem where the input instance is a graph $G$ a parameter $k$ with respect to $d$ (constant) hereditary graph classes $\Pi_1,\ldots,\Pi_d$.
The objective is to decide if there is a set of at most $k$ vertices $S$ such that every connected component of $G - S$ is in $\Pi_i$ for some $i \in [d]$.
After that, Jacob et al. \cite{JacobMR23jcss} considered specific pairs of hereditary graph classes $\Pi_1$ and $\Pi_2$ and have provided { single-exponential time} fixed-parameter tractable (FPT) algorithms and approximation algorithms for {\sc $(\Pi_1, \Pi_2)$-Deletion} problems.
Very recently, Jansen et al. \cite{JansenK023,JansenKW25} conducted a follow-up work on {\sc ($\Pi_1,\ldots,\Pi_d$)-Deletion} problems and have improved the results appearing in \cite{JacobKMR23}.
{A common theme for the FPT algorithms for deletion to scattered graph classes is a  non-trivial ``unification'' of the techniques used in the deletion problems of each of the graph classes.}

\paragraph{Our Problem and Results:} To the best of our knowledge, vertex deletion to scattered graph classes is essentially unexplored from the perspective of polynomial kernelization that is a central subfield of parameterized complexity.
The only folklore result that follows from Jacob et al.~\cite{JacobMR23jcss} states that if there are two hereditary graph classes $\Pi_1$ and $\Pi_2$ such that both $\Pi_1$ and $\Pi_2$ can be described by finite forbidden families and $P_d$ (the induced path of $d$ vertices) is a forbidden induced subgraph for $\Pi_1$ for some fixed constant $d$, then the problem {\sc $(\Pi_1, \Pi_2)$-Deletion} can be formulated as a {\sc $d$-Hitting Set} problem and hence admits a polynomial kernel (see \cite{Faisal2010,NeidermeierR2003}).
This folklore result is very restrictive and does not capture any hereditary graph class whose forbidden sets are not bounded by a fixed constant.


In this paper, we initiate the study of vertex deletion to scattered graph classes from the perspective of polynomial kernelization.
We consider the problem {\ctvd} where given a graph $G$ and a non-negative integer $k$, we ask if $G$ contains a set $S$ of at most $k$ vertices, such that $G - S$ is a simple graph and every connected component of it is either a clique or a tree -- that is, as densest as possible or as sparsest as possible (while being connected).
Naturally, we are specifically interested in the case where the input graph is already a simple graph.
However, our preprocessing algorithm can produce intermediate multigraphs.
Hence, we directly consider this more general formulation.
Formally, we define our problem as follows.

\defparproblem{{\ctvd} (CTVD)}{An undirected (multi)graph $G = (V, E)$ and a non-negative integer $k$.}{$k$}{Does $G$ contain a set $S$ of at most $k$ vertices such that $G -S$ is a simple graph and every connected component of $G - S$ is either a clique or a tree?}

This problem is particularly noteworthy as it captures the essence of scattered classes: allowing the connected components to belong to vastly different graph classes and ideally the simplest ones where various computational problems are polynomial time solvable.
Here, we indeed consider the extremes: the simplest densest graph (cliques) and the most natural class of sparsest connected graphs (trees).
If $X$ is a feasible solution to {\ctvd} for a graph $G$, then we call $X$ a {\em (clique, tree)-deletion set} of $G$.
We consider the (upper bound on the) solution size $k$ as the most natural parameter.
Jacob et al. \cite{JacobMR23jcss} proved that {\ctvd} is in FPT ---specifically, that it admits an algorithm that runs in $\OO^*(4^k)$ time.
We prove the following result on polynomial kernelization for this problem.

\begin{restatable}{theorem}{MainResult}
\label{thm:main-result}
{\ctvd} (CTVD) admits a kernel with $\OO(k^5)$ vertices.	
\end{restatable}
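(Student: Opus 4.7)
The plan is to follow the standard kernelization template for \emph{deletion to a non-trivial hereditary property}: (i) compute an approximate deletion set, (ii) decompose the graph through it, (iii) reduce each piece, and (iv) bound the number of pieces. First I would compute in polynomial time an approximate (clique, tree)-deletion set $X$ of size $\OO(k)$, or report a no-instance. This is possible because the minimal forbidden induced subgraphs for CTVD split into two types: the paw and the diamond, both on four vertices, and long induced cycles $C_{\ell}$ with $\ell\geq 4$. The four-vertex obstructions admit a constant-factor approximation via sunflower-type arguments for $d$-\textsc{Hitting Set}, and what remains after hitting them are essentially $C_{\geq 4}$-type obstructions, for which an LP-based or iterative-compression approximation \emph{\`a la} \FVS yields a constant factor. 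The union of the two solutions gives~$X$, after which every connected component of $G-X$ is either a clique or a tree.

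The next step is to introduce reduction rules that bound the size of every such component by $\text{poly}(k)$. For a clique component $K$, I would group the vertices of $K$ by their neighborhood in $X$: vertices sharing the same trace act as twins and, by a sunflower-style argument, keeping $\OO(k)$ representatives per trace class suffices, while the number of relevant trace classes is bounded as a function of $|X|$. For a tree component $T$, I would root $T$ at its interface $N_G(X)\cap V(T)$ and adapt the polynomial kernel of \FVS: leaves and subtrees disjoint from $X$ that are ``interchangeable'' can be pruned, long $X$-free paths can have all but $\OO(k)$ internal vertices marked irrelevant through expansion-lemma arguments, and the number of branch vertices near $X$ is capped. Together these trim each component to polynomial size in $k$.

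Finally, I would bound the number of surviving components of $G-X$. Any component with no edges to $X$ is already an internally valid clique or tree and has no external interaction, so it can be safely removed from the instance. Every surviving component therefore has at least one edge to $X$, and by identifying components with identical $X$-interfaces and keeping only $\OO(k)$ representatives per interface signature, the number of surviving components stays polynomial in $k$. Multiplying the per-component size bound by the number of components, and absorbing lower-order terms, should yield the claimed $\OO(k^5)$ kernel. I expect the hardest step to be the tree-component reductions in the presence of $X$: because $X$-vertices may form triangles with tree vertices, paws, diamonds, and long induced cycles can weave across $X$ and through several different components simultaneously, so each trimming of a tree must be proved to preserve all possible deletion choices inside $X$ at once. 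This step is analogous to, but strictly more delicate than, the classical \FVS kernel, and I expect it to dominate both the technical effort and the final exponent.
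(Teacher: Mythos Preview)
Your high-level template matches the paper's: compute an $\OO(k)$-size approximate solution $S$, then bound separately the clique components and the tree components of $G-S$. The tree side is indeed handled essentially as you sketch---flowers plus an expansion-lemma rule bounding the $V_2$-degree of each $v\in S$---and contributes only $\OO(k^2)$ vertices, so that part of your plan is on the right track.

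The clique side, however, has a real gap. You propose to partition the vertices of a clique $K$ by their full neighborhood trace in $S$ and keep $\OO(k)$ representatives per class, asserting that ``the number of relevant trace classes is bounded as a function of $|X|$.'' That bound is $2^{|S|}$, not a polynomial in $k$: nothing prevents a single clique from containing $2^{\Theta(k)}$ vertices with pairwise distinct $S$-neighbourhoods, and your outline offers no mechanism to collapse them. The same defect recurs in your component-count step, where ``interface signatures'' can again be exponentially many; the paper caps the number of clique components at $\OO(k)$ by a direct $2$-expansion-lemma rule, not by signatures.

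The missing idea is that the \emph{finite} obstructions (paw, diamond, $C_4$) have only four vertices, so any such obstruction through a vertex $v\in K$ meets $S$ in at most three vertices. It therefore suffices, for every $Z\subseteq S$ with $|Z|\le 3$ and every adjacency pattern $f\colon Z\to\{0,1\}$, to mark $k+\OO(1)$ vertices of $K$ realising that pattern; this marks $\OO(k^4)$ vertices per clique, and any unmarked vertex is shown to be irrelevant. Long induced cycles $C_{\ge 5}$ require a separate replacement argument (an unmarked $v$ on such a cycle is swapped for a marked $v'$ agreeing with $v$ on the two relevant $S$-vertices, and the resulting possibly non-induced cycle is shown still to force a $C_{\ge 4}$ or a diamond), but that argument again uses only size-$\le 2$ subsets of $S$. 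This bounded-trace marking is precisely what drives the $\OO(k^5)$ exponent; your full-trace twin-class approach, as stated, does not reach a polynomial bound.
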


Our theorem is the first non-trivial result on a polynomial kernel for vertex deletion to pairs of graph classes.
{The proof of this kernelization upper bound is based on several non-trivial insights, problem specific reduction rules, and structural properties of the solutions}.

\paragraph{Remark.} Very recently, Tsur \cite{TSUR2025Kernel} proved that {\ctvd} admits a kernel with $\OO(k^4)$ vertices and it was subsequently improved to a kernel with $\OO(k^2)$ vertices by Kumabe \cite{Kumabe2025}.
%
\paragraph{Organization of the Paper:} In Section \ref{sec:prelims}, we introduce basic terminologies and notations.
Section \ref{sec:clique-tree-kernel} is devoted to the proof of our main result.
Finally, in Section \ref{sec:conclusion}, we conclude with some future research directions.

\section{Preliminaries}
\label{sec:prelims}

\paragraph{Graph Theory:}
We use standard graph theoretic terminologies from Diestel's book \cite{Diestel-Book}. 
For a vertex $v$ in $G$, let $d_G(v)$ denote the degree of $v$ in $G$, which is the number of edges in $G$ incident to $v$. When we look at the number of edges incident on $v$, we take the multiplicity of every edge into account.
A {\em pendant vertex} in a graph $G$ is a vertex having degree one in $G$.
A {\em pendant edge} in a graph $G$ is an edge incident to a pendant vertex in $G$.
A {\em path} $P$ in a graph is a sequence of distinct vertices $(v_1,\ldots,v_r)$ such that for every $1 \leq i \leq r-1$, $v_i v_{i+1}$ is an edge.
A {\em cycle} in a graph is a sequence of vertices $(v_1,v_2,\ldots,v_r,v_1)$ such that $r \geq 2$, $v_1,\ldots,v_r$ are distinct, $v_r v_1$ is an edge, and for every $1 \leq i \leq r-1$, $v_i v_{i+1}$ is an edge.
{
The {\em length} of a path (respectively, a cycle) is defined as the number of edges in the path.
A {\em degree-2-path} in $G$ is a path $P$ such that all its internal vertices have degree exactly 2 in $G$.
If a graph $G$ has a degree-2-path $P = (v_1,\ldots,v_{r})$ such that $v_1$ is a pendant vertex, for every $i \in \{2,\ldots,r-1\}$, $d_G(v_i) = 2$ and $d_G(v_{r}) > 2$, then we call $P$ a {\em degree-2-tail} of $r$ vertices.
If $G$ has a degree-2-path $P = (v_1,\ldots,v_r)$ such that for all $i \in \{2,\ldots,r-1\}$, $d_G(v_i) = 2$, and $d_G(v_1), d_G(v_r) > 2$, then we call $P$ a {\em degree-2-overbridge} of $r$ vertices.
An induced cycle $(v_1, v_2, \ldots, v_{\ell}, v_1)$ with $\ell \geq 4$ is a {\em pendant hole} attached to $v_1$ if for every $2 \leq i \leq \ell-1$, $v_i$ has no neighbor other than $v_{i-1}$ and $v_{i+1}$, and $v_{\ell}$ is adjacent to only $v_1$ and $v_{\ell -1}$.}
Sometimes, for simplicity, we use $P = v_1-v_2-\ldots-v_r$ to denote the same path $P$ of $r$ vertices.
The graph $K_t$ for integer $t \geq 1$ is the clique of $t$ vertices.
For a non-negative integer $c$, the graph $cK_1$ is the collection of $c$ isolated vertices.
The graphs $C_t$ and $P_t$ for integer $t \geq 1$ are the cycle and path of $t$ vertices respectively.
We define a {\em paw graph} as the graph with four vertices $u_1, u_2, u_3$ and $u_4$ where $u_1, u_2, u_3$ form a triangle, and $u_1$ alone is adjacent to $u_4$ (thus, $u_4$ is a pendant vertex). We define a {\em diamond graph} as the graph with four vertices $u_1, u_2, u_3$ and $u_4$ where $u_1, u_2, u_3$ form a triangle, and $u_1, u_2$ are adjacent to $u_4$. Note that both paw and diamond contain a triangle as well as a $2K_1$ as induced subgraphs.
We say that a vertex $x \in V(G)$ is {\em adjacent} to a subgraph $G[Y]$ for some $x \notin Y$ if $x$ has a neighbor in $Y$ in the graph $G$.
A {\em cut} of $G$ is a bipartition $(X, Y)$ of $V(G)$ into nonempty subsets $X$ and $Y$.
The set $E_G(X, Y)$ is denoted as the edges {\em crossing the cut}.
We omit the subscript when the graph is clear from the context.
Let $C$ be a cycle having $r$ vertices.
We use $C = v_1-v_2-\ldots-v_r-v_1$ to denote the cycle with edges $v_i v_{i+1}$ for every $1 \leq i \leq r-1$ and the edge $v_r v_1$. 

\begin{definition}[$v$-flower]
\label{defn:flower}
For a graph $G$ and a vertex $v$ in $G$, a {\em $v$-flower}  is the structure formed by a family of $\ell$ cycles $C_1, C_2, \dots C_{\ell}$ in $G$ all containing $v$ and no two distinct cycles $C_i$ and $C_j$ sharing any vertex except $v$.
We refer to the $C_i$s' as the {\em petals} and  to $v$ as the {\em core}.
The number of cycles $\ell$ is the {\em order} of the $v$-flower. 
\end{definition}

\begin{proposition}[\cite{CyganFKLMPPS15}, Lemma 9.6]
\label{theorem:flower-cycle-hitting}
 Given a graph $G$ with $v \in V (G)$ and an integer $k$, there exists a polynomial time algorithm that either provides a $v$-flower of order $k+1$ or compute a set $Z \subseteq V (G) \setminus \{v\}$ with at most $2k$ vertices that intersects every cycle of $G$ that passes through $v$.
\end{proposition}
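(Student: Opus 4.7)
The plan is to reduce the problem to a max-flow/Menger computation on a carefully constructed auxiliary graph. First, I would subdivide every edge incident to $v$: for each edge $e = vu$ in $G$, introduce a new degree-2 vertex $w_e$, replacing $e$ by the two edges $vw_e$ and $w_eu$. Call the resulting graph $\widetilde G$. Cycles through $v$ in $G$ correspond bijectively to cycles through $v$ in $\widetilde G$, and $v$-flowers of order $\ell$ correspond bijectively to $v$-flowers of order $\ell$ in $\widetilde G$. The advantage is that now every edge at $v$ is recorded by a distinct vertex $w_e$, which lets us convert edge counts to vertex counts.

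Next, I would encode ``cycles through $v$'' as ``internally vertex-disjoint paths between two designated vertices'' by splitting $v$: replace $v$ by two twin copies $v^-, v^+$, each made adjacent to exactly the same subdivision vertices that $v$ was adjacent to in $\widetilde G$. In the resulting graph $H$, internally vertex-disjoint $v^-$--$v^+$ paths correspond exactly to vertex-disjoint cycles through $v$ in $\widetilde G$ (and thus in $G$) sharing only $v$. A polynomial-time max-flow algorithm computes the maximum number $\ell^*$ of such paths. If $\ell^* \ge k+1$, then we read off $k+1$ paths, and reattaching $v$ gives the desired $(k+1)$-flower in $G$.

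Otherwise $\ell^* \le k$, and by Menger's theorem there is a vertex set $Z^* \subseteq V(H)\setminus\{v^-,v^+\}$ with $|Z^*| \le k$ that separates $v^-$ from $v^+$ in $H$. Form $Z \subseteq V(G)\setminus\{v\}$ by replacing each subdivision vertex $w_e \in Z^*$ (with $e=vu$) by its non-$v$ endpoint $u$, and keeping every original vertex of $G$ that lies in $Z^*$. Then $Z$ hits every cycle of $G$ through $v$, and $|Z| \le k \le 2k$. For the edge-count bound, I would re-examine the min-cut structure: each of the $\ell^*$ flow paths uses exactly two edges incident to $v$ (one via $v^-$ and one via $v^+$), and the min-cut $Z^*$ can be chosen (by a Gallai-type swap along augmenting paths) so that every edge from $v$ or from $v^\pm$ into the ``$v^-$ side'' of the cut is charged to a distinct flow path; summing over the two sides gives at most $2\ell^* \le 2k$ edges of $G$ between $v$ and $Z$.

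The main obstacle is the last step, namely matching the $2k$ edge-count simultaneously with the $2k$ vertex-count bound, which needs a careful choice of the min-cut rather than an arbitrary one. In particular, the subdivision trick is essential for the multigraph case: without it, a single vertex of $Z$ could be joined to $v$ by arbitrarily many parallel edges, breaking the edge bound. With every edge at $v$ represented by its own subdivision vertex, each such parallel edge is forced to be paid for separately in the cut, and the factor $2$ arises precisely because a min-cut in the ``twin'' construction can be chosen to sit symmetrically on both sides of the two copies $v^-, v^+$.
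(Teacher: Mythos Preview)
The paper does not prove this proposition at all; it is quoted as Lemma~9.6 of the textbook \cite{CyganFKLMPPS15} and used as a black box. So there is no in-paper proof to compare against beyond the textbook argument.

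Your plan---subdivide every edge at $v$, split $v$ into a source/sink pair $v^-,v^+$, and apply Menger's theorem---is precisely the standard route, and it cleanly yields the dichotomy ``$(k{+}1)$-flower, or a hitting set $Z$ with $|Z|\le \ell^\ast\le k$''. Where your write-up goes astray is the final edge-count step: you are working much harder than necessary, and the ``Gallai-type swap along augmenting paths'' you invoke is not an argument yet. In fact the proposition as you have stated it is \emph{false} for unrestricted multigraphs: take $G$ to consist of $v$ and $u$ joined by three parallel edges and set $k=1$. There is no $v$-flower of order $2$ (any two $2$-cycles share $u$), every cycle-hitting set must contain $u$, and there are $3>2k$ edges between $v$ and $Z=\{u\}$. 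So no amount of clever min-cut selection will rescue the bound in this generality.

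The missing hypothesis is that every edge has multiplicity at most two, which the present paper enforces beforehand via Reduction Rule~\ref{rule:multiplicity-reduction} (and which the textbook also assumes at that point of the FVS kernel). Once you add it, the edge bound is a one-liner rather than an obstacle: your construction already gives $|Z|\le \ell^\ast\le k$, and since each $u\in Z$ is joined to $v$ by at most two edges, the number of $v$--$Z$ edges is at most $2|Z|\le 2k$. You should replace your last paragraph by this observation and make the multiplicity assumption explicit.
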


Let $q$ be a positive integer and $G$ be a bipartite graph with vertex bipartition $(A, B)$.
For $\widehat A \subseteq A$ and $\widehat B \subseteq B$, a set $M \subseteq E(G)$ of edges is called a {\em $q$-expansion} of $\widehat A$ into $\widehat B$ if
\begin{enumerate}[(i)]
	\item every vertex of $\widehat A$ is incident to exactly $q$ edges in $M$, and
	\item exactly $q|A|$ vertices of $\widehat B$ are incident to the edges in $M$.
\end{enumerate}
The vertices of $\widehat A$ and $\widehat B$ that are the endpoints of the edges of $M$ are said to be {\em saturated} by $M$.
As a remark, it is important to note that by definition of $q$-expansion $M$ of $\widehat A$ into $\widehat B$, all vertices of $\widehat A$ are saturated by $M$, and $|\widehat B| \geq q|\widehat A|$. But not all vertices of $\widehat B$ are guaranteed to be saturated by $M$.
 
\begin{lemma}[$q$-Expansion Lemma \cite{CyganFKLMPPS15,Thomasse10}]
\label{lemma:expansion}
{ Let $q \in \bN$ and $G$ be a bipartite graph with $n$ vertices and $m$ edge.
 Suppose that the vertex bipartition of $G$ is $(A, B)$ such that $|B| \geq q|A|$, and there is no isolated vertex in $B$.}
Then, there exist non-empty vertex sets $X \subseteq A$ and $Y \subseteq B$ such that
\begin{enumerate}[(i)]
	\item there is a $q$-expansion $M$ of $X$ into $Y$, and
	\item no vertex in $Y$ has a neighbor outside $X$, that is, $N(Y) \subseteq X$.
\end{enumerate}
Furthermore, the sets $X$ and $Y$ can be found in time $\OO(mn^{1.5})$.
\end{lemma}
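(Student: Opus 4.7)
The plan is to prove the lemma via an iterative shrinking procedure that maintains the structural invariant $N(Y_i) \subseteq X_i$ and repeatedly peels off Hall-violating subsets of $X_i$ until a valid $q$-matching is found on the remaining instance. I initialize $X_0 := A$ and $Y_0 := B$, so the hypotheses yield $|Y_0| \geq q|X_0|$, no vertex of $Y_0$ is isolated in $G[X_0 \cup Y_0]$, and $N(Y_0) \subseteq X_0$ holds trivially since $G$ is bipartite.

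At iteration $i$, I form the $q$-duplicate bipartite graph $G_i$ obtained from $G[X_i \cup Y_i]$ by replacing each $a \in X_i$ with $q$ clones sharing its neighborhood in $Y_i$. If $G_i$ admits a matching saturating all clones of $X_i$, this matching projects to a $q$-expansion of $X_i$ into $Y_i$; I output $X := X_i$, $Y := Y_i$, and condition (ii) is immediate from the invariant $N(Y_i) \subseteq X_i$. Otherwise, Hall's theorem (applied to the duplicated graph, so that violating sets may be taken to be unions of full clone-classes) supplies a nonempty $S \subseteq X_i$ with $|N_G(S) \cap Y_i| \leq q|S| - 1$, and I update $X_{i+1} := X_i \setminus S$ and $Y_{i+1} := Y_i \setminus N_G(S)$.

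The heart of the argument is showing that this update preserves every invariant while forcing strict progress. For $y \in Y_{i+1}$, the condition $y \notin N(S)$ means $y$ has no neighbor in $S$, so all of $y$'s neighbors (which lay in $X_i$ by the prior invariant) now lie in $X_{i+1}$; this simultaneously yields $N(Y_{i+1}) \subseteq X_{i+1}$ and the non-isolation of every vertex of $Y_{i+1}$. A direct count gives
\[
|Y_{i+1}| - q|X_{i+1}| \;=\; (|Y_i| - |N(S)\cap Y_i|) - q(|X_i| - |S|) \;\geq\; (|Y_i| - q|X_i|) + 1,
\]
so the slack strictly increases at every step. An immediate consequence is that $S = X_i$ is impossible: that would force $N(S)\cap Y_i = Y_i$ (by $N(Y_i)\subseteq X_i$ and non-isolation), hence $|Y_i| < q|X_i|$, contradicting the non-negative slack invariant. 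Thus $X_{i+1}$ is always nonempty, and since $|X_i|$ strictly shrinks but remains positive, the loop terminates after at most $|A|$ rounds at a stage where Hall's condition holds; the output $(X,Y)$ then satisfies $|X| \geq 1$ and $|Y| \geq q|X| \geq 1$.

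For the running time, each iteration reduces to one maximum bipartite matching computation on $G_i$, which (treating the $q$-duplication as a capacity-$q$ edge from a super-source in a flow network) has $\OO(n)$ vertices and $\OO(m)$ edges and can be solved by Hopcroft--Karp in $\OO(m\sqrt{n})$ time. Combined with the at most $|A| = \OO(n)$ outer iterations, the total is $\OO(m n^{1.5})$. The main technical subtlety I expect is the tight coupling between what is removed from $X_i$ and from $Y_i$: deleting exactly $N_G(S) \cap Y_i$ from $Y_i$ is forced by the $N(Y) \subseteq X$ invariant, while deleting any more would destroy the $+1$ slack increment that powers termination — threading this needle is what makes the single step $X_i \mapsto X_{i+1}$, $Y_i \mapsto Y_{i+1}$ work cleanly.
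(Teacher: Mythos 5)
Your proof is correct. Note that the paper itself gives no proof of this lemma --- it is imported verbatim from the literature (Thomass\'e; Cygan et al.) --- so there is no in-paper argument to compare against. Your derivation is the standard one: iterate on $(X_i,Y_i)$, test for a saturating matching in the $q$-duplicated graph, and on failure peel off a Hall-violating set $S$ together with $N(S)\cap Y_i$. All the delicate points are handled properly: the reduction of Hall violators to unions of full clone-classes, the preservation of $N(Y_{i+1})\subseteq X_{i+1}$ and of non-isolation, the strict increase of the slack $|Y_i|-q|X_i|$ which rules out $S=X_i$ and hence guarantees non-emptiness of the output, and the $\OO(|A|)\cdot\OO(m\sqrt{n})=\OO(mn^{1.5})$ running time via a capacitated-source flow formulation. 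This matches the textbook proof and fully establishes the statement as used in the paper.
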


Recently, Fomin et al. \cite{FominLLSTZ19} have designed the following generalization of the Lemma \ref{lemma:expansion} ($q$-Expansion Lemma) as follows.

\begin{lemma}[New $q$-Expansion Lemma \cite{BabuJKR24,FominLLSTZ19,JacobMR23algo}]
\label{lemma:new-expansion-lemma}
Let $q$ be a positive integer and $G$ be a bipartite graph with bipartition $(A, B)$.
Then there exists $\widehat A \subseteq A$ and $\widehat B \subseteq B$ such that there is a $q$-expansion $M$ of $\widehat A$ into $\widehat B$ in $G$ such that
\begin{enumerate}[(i)]
	\item $N(\widehat B) \subseteq \widehat A$, and
	\item $|B \setminus \widehat B| \leq q|A \setminus \widehat A|$.
\end{enumerate}
Furthermore, the sets $\widehat A$, $\widehat B$ and the $q$-expansion $M$ can be computed in polynomial time.
\end{lemma}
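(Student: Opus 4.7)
The plan is to prove Lemma~\ref{lemma:new-expansion-lemma} by designing an iterative procedure that repeatedly invokes the classical $q$-Expansion Lemma (Lemma~\ref{lemma:expansion}) as a subroutine. We maintain tentative sets $\widehat A \subseteq A$, $\widehat B \subseteq B$ and a tentative edge set $M$, all initialized to empty, and preserve two invariants throughout: (a) every $G$-neighbor of a vertex in $\widehat B$ already lies in $\widehat A$, and (b) $M$ is a $q$-expansion of $\widehat A$ into $\widehat B$ in $G$. Both invariants hold trivially at the start.

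Each iteration begins by setting $A' := A \setminus \widehat A$, $B' := B \setminus \widehat B$ and considering the bipartite graph $G' := G[A' \cup B']$. If $|B'| \leq q|A'|$, condition (ii) already holds and we output $(\widehat A, \widehat B, M)$. Otherwise, whenever some $b \in B'$ is isolated in $G'$, all its $G$-neighbors must already lie in $\widehat A$, so moving $b$ into $\widehat B$ preserves (a) without changing $M$. Once no isolated vertex remains in $B'$ but still $|B'| > q|A'|$, the hypotheses of Lemma~\ref{lemma:expansion} are met on $G'$, producing nonempty sets $X \subseteq A'$, $Y \subseteq B'$ and a $q$-expansion $M'$ of $X$ into $Y$ with $N_{G'}(Y) \subseteq X$. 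We then update $\widehat A \leftarrow \widehat A \cup X$, $\widehat B \leftarrow \widehat B \cup Y$, $M \leftarrow M \cup M'$ and iterate.

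To verify invariant (a) after the augmentation, any $G$-neighbor of $y \in Y$ lies either in $A'$, in which case it belongs to $X$ by the conclusion of Lemma~\ref{lemma:expansion}, or in the old $\widehat A$, which is still contained in the updated $\widehat A$. Invariant (b) survives because $X \subseteq A'$ and $Y \subseteq B'$ are disjoint from all vertices saturated by the previous $M$, so $M \cup M'$ is a vertex-disjoint union of $q$-expansions and hence a $q$-expansion of the updated $\widehat A$ into the updated $\widehat B$. Termination in at most $|A|+|B|$ iterations is immediate because each non-terminating step either strictly shrinks $B'$ (removing an isolated vertex) or strictly shrinks $A'$ (removing the nonempty set $X$); since each iteration costs a single $\OO(mn^{1.5})$ call to Lemma~\ref{lemma:expansion}, the overall running time is polynomial.

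The main point deserving care is the bookkeeping for condition (i): $N_G(\widehat B) \subseteq \widehat A$ refers to the original graph $G$, whereas Lemma~\ref{lemma:expansion} only yields the corresponding containment inside the current residual graph $G'$. The resolution is exactly the observation that once a vertex leaves $A'$, it has been irrevocably committed to $\widehat A$, so any edge from a freshly added $\widehat B$-vertex back to a previously processed $A$-vertex is automatically absorbed by the enlarged $\widehat A$. Once this monotonicity is articulated cleanly, the remaining verification of the three conditions of the lemma and of polynomial running time is routine.
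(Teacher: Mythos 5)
Your argument is correct, but note that the paper itself does not prove Lemma~\ref{lemma:new-expansion-lemma} at all: it is imported verbatim from the cited works \cite{FominLLSTZ19,BabuKR22,JacobMR23algo}, so there is no in-paper proof to compare against. Your derivation --- bootstrapping the new version from the classical $q$-Expansion Lemma (Lemma~\ref{lemma:expansion}) by iterating on the residual graph $G[A'\cup B']$, absorbing isolated vertices of $B'$ into $\widehat B$, and stopping as soon as $|B'|\leq q|A'|$ --- is a standard and valid way to obtain the statement, and it gives a self-contained proof that the paper only outsources. The invariants you maintain do the work: condition (i) holds because a vertex of $B'$ isolated in $G'$ has all its $G$-neighbours in $\widehat A$, and for a freshly added $Y$ the containment $N_{G'}(Y)\subseteq X$ from Lemma~\ref{lemma:expansion} combines with the already-committed $\widehat A$; condition (ii) is exactly the termination test; and $M\cup M'$ stays a $q$-expansion because $X\subseteq A'$ and $Y\subseteq B'$ are disjoint from the previously saturated vertices, while adding unsaturated (isolated) vertices to $\widehat B$ is harmless since the definition of a $q$-expansion does not require all of $\widehat B$ to be saturated --- a point you use implicitly and could state in one sentence. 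Two further small remarks worth making explicit: when the procedure terminates with $\widehat A=\widehat B=\emptyset$ (e.g.\ if $|B|\leq q|A|$ initially), the empty $M$ is vacuously a $q$-expansion, which is fine because, unlike Lemma~\ref{lemma:expansion}, the new lemma does not promise nonempty sets; and the call to Lemma~\ref{lemma:expansion} is always legitimate because after removing isolated vertices an empty $A'$ forces an empty $B'$, so the subroutine is only invoked with both sides nonempty and $|B'|\geq q|A'|$. With those sentences added, the proof is complete and the polynomial running-time bound (at most $|A|+|B|$ iterations, each one call of cost $\OO(mn^{1.5})$) is as you state.
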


{Observe that the { statement of Lemma \ref{lemma:new-expansion-lemma}} does not require the condition that $B$ has no isolated vertex and $|B| \geq q|A|$ that were required for the Lemma \ref{lemma:expansion}.}
In particular, if $|B| > q|A|$, then it must be that $|\widehat B| > q|\widehat A|$ and $\widehat B$ will contain some vertex that is not saturated by the $q$-expansion $M$.


\paragraph{Forbidden Subgraph Characterization:}
Given a graph class $\GG$, {any (induced) subgraph that is not allowed to appear in any graph of $\GG$ is called an {\em obstruction} for $\GG$} (also known as {forbidden subgraphs} or forbidden induced subgraphs).
{
An obstruction $O$ is a {\em minimal} obstruction to the graph class $\GG$ if there does not exist any obstruction $Q \neq O$ that is an induced subgraph of $O$.
For our problem {\ctvd}, we consider the set of all minimal obstructions.
For the ease of presentation, we use ``obstructions to {\ctvd}'' denote ``the set of all minimal obstructions to the class of graphs that is a disjoint union of cliques and trees.}
We first identify the obstructions for {\ctvd}.
Clearly, on simple graphs, we cannot have an obstruction for both a tree and a clique in the same connected component.
If $\GG$ is the class of all cliques, then the obstruction for $\GG$ is $2K_1$ and if $\GG$ is the class of all forests, then any cycle $C_t$ with $t \geq 3$ is an obstruction for $\GG$.
The obstructions for both a clique is the graph $2K_1$, and for trees are the cycles $C_t$ with $t \geq 3$.
Note that a cycle $C_t$ with $t \geq 4$ contains $2K_1$ as an induced subgraph.
Throughout the paper, we sometimes abuse the notation where an obstruction (or a forbidden induced subgraph) is viewed as a set and sometimes it is viewed as an (induced) subgraph.

 \begin{observation}
 \label{obs:Ct-obstruction-clique-tree}
 For every integer $t \geq 4$, the cycles $C_t$ contains $2K_1$ as induced subgraph. 
 \end{observation}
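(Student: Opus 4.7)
The plan is to exhibit, for each $t \ge 4$, two vertices of $C_t$ that are non-adjacent in the cycle; any such pair induces a $2K_1$. Writing $C_t$ as $v_1$-$v_2$-$\ldots$-$v_t$-$v_1$, the only neighbors of $v_1$ are $v_2$ and $v_t$. I would simply pick the pair $\{v_1, v_3\}$ and check that $v_3 \notin \{v_2, v_t\}$: the first is immediate, while $v_3 \neq v_t$ precisely when $t \neq 3$, i.e., for all $t \ge 4$. Hence $v_1 v_3 \notin E(C_t)$, so $\{v_1, v_3\}$ induces a $2K_1$.

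The only point requiring any care is the boundary case $t = 4$, where $v_t = v_4$ is distinct from $v_3$ so the argument still goes through; and conversely for $t = 3$ the choice $v_3 = v_t$ makes $v_1$ and $v_3$ adjacent, which is exactly why the hypothesis $t \ge 4$ is needed and shows the bound is tight. No real obstacle is anticipated beyond this one-line verification, so the proof reduces to a single sentence identifying the induced $2K_1$.
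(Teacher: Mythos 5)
Your proof is correct and is exactly the intended (and essentially only) argument; the paper states this as an observation without proof precisely because the pair $\{v_1, v_3\}$ is non-adjacent whenever $t \ge 4$. Your care with the boundary case $t=4$ and the tightness at $t=3$ is fine but not needed beyond the one-line verification.
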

 
Thus, we can conclude that the obstructions for {\ctvd} are cycles $C_t$ with $t \geq 4$ and connected graphs with both $2K_1$ and $C_3$ as induced subgraphs.
For multigraphs, a vertex with a self-loop and two vertices with two (or more) edges are obstructions as well. 
If a connected graph has both $2K_1$ and $C_3$ as induced subgraphs, a (clique, tree)-deletion set  either intersects the union of the vertex sets of these subgraphs or contains a subset separating them.
The following lemma claims that if a connected graph contains both $2K_1$ and  $C_3$ as induced subgraphs, then it contains a paw or a diamond.  

\begin{lemma}
\label{lemma:forbidden-2k1-c3}
A connected graph $G$ with both $2K_1$ and $C_3$ as induced subgraphs contains either a paw or a diamond as an induced subgraph.
\end{lemma}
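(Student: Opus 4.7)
The plan is to fix a triangle $T = \{a,b,c\}$ (which exists since $G$ contains $C_3$ as an induced subgraph) and to analyze the structure of $G$ around $T$ under the assumption, for contradiction, that $G$ contains neither a paw nor a diamond as an induced subgraph.

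First, I would handle the immediate vicinity of $T$. If some vertex $x \in V(G)\setminus T$ has exactly one neighbor in $T$, then $T\cup\{x\}$ induces a paw; if $x$ has exactly two neighbors in $T$, then $T\cup\{x\}$ induces a diamond. So I may assume every vertex of $N(T)\setminus T$ is adjacent to all three vertices of $T$. Next, if two vertices $x,y\in N(T)\setminus T$ are non-adjacent, then together with any two vertices of $T$, say $a$ and $b$, the set $\{a,b,x,y\}$ induces a diamond (the non-edge being $xy$). Hence $N(T)\setminus T$ is a clique, and combined with $T$ and the fact that every such vertex is adjacent to all of $T$, the closed neighborhood $N[T]$ is a clique.

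Now I would invoke the $2K_1$. Let $u,v\in V(G)$ be non-adjacent. Since $N[T]$ is a clique, it cannot contain both $u$ and $v$; say $v \notin N[T]$. Using connectivity of $G$, pick a shortest path $v = w_0, w_1, \ldots, w_k$ from $v$ to $T$, with $w_k \in T$ and $w_i \notin T$ for $i<k$. Minimality of the path forces $w_i \notin N[T]$ for all $i < k-1$ (otherwise one could shortcut through a $T$-neighbor), and $w_{k-1} \in N(T)\setminus T$. Because $v \notin N[T]$, one has $k \geq 2$, so $w_{k-2}$ is well-defined, lies outside $N[T]$, and is adjacent to $w_{k-1}$. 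By our standing assumption, $w_{k-1}$ is adjacent to every vertex of $T$, in particular to $a$ and $b$. Then $\{a, b, w_{k-1}, w_{k-2}\}$ induces a paw: the triangle $a, b, w_{k-1}$ together with the pendant $w_{k-2}$ attached only to $w_{k-1}$ (since $w_{k-2} \notin N[T]$ kills any edge to $a$ or $b$). This contradicts our assumption and completes the proof.

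The argument is essentially a three-step case analysis, and I do not expect a significant obstacle. The one point requiring care is the last step: one must verify, using the minimality of the path and the location of $v$ outside $N[T]$, both that $k \geq 2$ and that $w_{k-2}$ has no edges into $T$, so that the induced subgraph on $\{a, b, w_{k-1}, w_{k-2}\}$ is genuinely a paw rather than, say, a diamond or $K_4$. Everything else is straightforward neighborhood bookkeeping around $T$.
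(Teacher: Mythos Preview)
Your proof is correct. The paper's argument is considerably shorter: it simply asserts that since $G$ is connected, contains a triangle, and is not a clique, there exist a triangle $U=\{u_1,u_2,u_3\}$ and a vertex $w\notin U$ adjacent to some but not all of $U$, and then splits into the paw/diamond cases depending on whether $w$ hits one or two vertices of $U$. (This assertion is justified by taking a maximal clique of size at least three and a neighbor outside it.)

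Your route is different in that you fix an arbitrary triangle $T$, deduce under the no-paw/no-diamond hypothesis that $N[T]$ is a clique, and then use a shortest path from a vertex outside $N[T]$ to $T$ to manufacture the paw on $\{a,b,w_{k-1},w_{k-2}\}$. In effect you are constructing the pair $(U,w)$ that the paper takes for granted, with $U=\{a,b,w_{k-1}\}$ and $w=w_{k-2}$. Your argument is longer but entirely self-contained; the paper's is quicker but leans on an unstated (easy) maximal-clique step. Both yield the same conclusion with comparable effort once that step is made explicit.
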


\begin{proof}
Since $G$ is a connected graph that is not a clique and contains a triangle, there are three vertices $u_1, u_2, u_3 \in V(G)$ and a vertex $w \notin U$ such that $U = \{u_1,u_2,u_3\}$ induces a triangle, $w$ is adjacent to a vertex in $U$ and not adjacent to a vertex in $U$. Suppose, without loss of generality that $w$ is adjacent to $u_1$ and not adjacent to $u_2$.
If $w$ is not adjacent to $u_3$, the graph induced by $ \{u_1, u_2, u_3,w\}$ forms a paw. Otherwise, it forms a diamond.
\end{proof}

From Lemma \ref{lemma:forbidden-2k1-c3}, we get a forbidden subgraph characterization for the class of graphs where each connected component is a clique or a tree.

\begin{lemma}
\label{ctvd-forbidden-char}
Let $\GG$ be the class of all simple graphs where each connected component is a clique or a tree.
Then, a simple graph $G$ belongs to $\GG$ if and only if $G$ does not contain any paw, diamond or cycle $C_i$ with $i \geq 4$ as an induced subgraph.
\end{lemma}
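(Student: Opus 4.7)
The plan is to prove both directions of the stated equivalence separately.

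For the forward direction, I would argue that each of the three types of forbidden subgraphs (paw, diamond, and cycle $C_i$ with $i \geq 4$) is connected, so an induced copy in $G$ must lie entirely in one connected component. Then I would just check by hand that none of them can be an induced subgraph of a tree or a clique. A tree is triangle-free, so it contains no paw or diamond (both carry a $C_3$), and of course no cycle of any length. A clique $K_n$ contains no induced $2K_1$, but every paw and every diamond contains an induced $2K_1$, and so does $C_i$ for $i \geq 4$ by Observation~\ref{obs:Ct-obstruction-clique-tree}. This rules out every forbidden subgraph in either kind of component.

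For the backward direction, I would take an arbitrary connected component $H$ of $G$ and split on whether $H$ contains a triangle. If $H$ contains no $C_3$, then because $H$ also avoids every induced $C_i$ with $i \geq 4$ and any cycle in a graph contains an induced cycle, $H$ must be acyclic; as a connected acyclic graph it is a tree. If on the other hand $H$ does contain an induced $C_3$, I claim that $H$ is a clique. For contradiction, assume $H$ is not complete; then there exist two non-adjacent vertices in $H$, so $H$ contains $2K_1$ as an induced subgraph. At this point $H$ is a connected graph with both $2K_1$ and $C_3$ as induced subgraphs, so Lemma~\ref{lemma:forbidden-2k1-c3} hands us an induced paw or diamond inside $H$, contradicting the forbidden-subgraph hypothesis on $G$.

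The whole argument is essentially a case analysis; the only nontrivial ingredient is the appeal to Lemma~\ref{lemma:forbidden-2k1-c3}, which already did the structural work of turning the coexistence of $2K_1$ and $C_3$ inside a connected graph into the existence of a paw or diamond. Beyond that, the proof consists only of the elementary observation that a connected graph without induced cycles is a tree, plus a direct check of which of the three small obstructions embed into trees and cliques. I would therefore expect the write-up to be short, with no further obstacles.
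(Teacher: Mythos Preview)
Your proposal is correct and follows essentially the same approach as the paper: both directions hinge on Lemma~\ref{lemma:forbidden-2k1-c3}, and your backward direction is the same case analysis (the paper phrases it contrapositively, assuming a component is neither a clique nor a tree and then extracting both a $2K_1$ and a $C_3$). Your forward direction is in fact a bit more direct than the paper's, simply checking each obstruction against trees and cliques rather than arguing via the list of minimal obstructions, but this is a cosmetic difference only.
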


\begin{proof}
We first give the forward direction ($\Rightarrow$) of the proof.
Let a simple graph $G$ { be} in ${\GG}$.
The obstruction for cliques is the graph $2K_1$, and the obstructions for trees are cycles $C_t$ with $t \geq 3$.
Due to the Observation \ref{obs:Ct-obstruction-clique-tree}, the cycles $C_t$ with $t \geq 4$ contain $2K_1$ as an induced subgraph. 
Thus, we can conclude that the obstructions for ${\GG}$ include all cycles $C_t$ with $t \geq 4$.
It follows from Lemma \ref{lemma:forbidden-2k1-c3} that all connected graphs with both $2K_1$ and  $C_3$ as induced subgraphs contain a paw or a diamond as an induced subgraph.
 Thus, the obstructions for $\GG$ include  paws, diamonds and cycles $C_i$ with $i \geq 4$ as induced subgraphs.
 This completes the proof that $G$ does not contain any paw, diamond or cycle $C_i$ with $i \geq 4$ as induced subgraph.
 
Now, we give the backward direction ($\Leftarrow$) of the proof.
Let $G$ be a simple graph that does not contain any paw, diamond or $C_i$ for some $i \geq 4$ as induced subgraphs but $G \notin \GG$.
Then, $G$ contains a connected component $K$ that is neither a clique nor a tree.
If $K$ contains an induced cycle with more than three vertices, then this leads to a contradiction.
Hence, we can assume that $K$ contains a pair of nonadjacent vertices as well as a triangle (i.e. $C_3$).
It follows from Lemma \ref{lemma:forbidden-2k1-c3} that $K$ contains either a paw or a diamond as an induced subgraph.
This also leads to a contradiction completing the proof of the lemma.
\end{proof}

\paragraph{Parameterized Complexity and Kernelization:}
A {\em parameterized problem} $L$ is a set of instances $(x, k) \in \Sigma^* \times \bN$ where $\Sigma$ is a finite alphabet and $k \in \bN$ is a parameter.
The notion of `tractability' in parameterized complexity is defined as follows.

\begin{definition}[Fixed-Parameter Tractability]
\label{def:FPT}
A parameterized problem $L$ is {\em fixed-parameter tractable} (or FPT) if there exists an algorithm $\cA$ that on input $(x, k) \in \Sigma^* \times \bN$, runs in $f(k)|x|^{\OO(1)}$ time for some computable function $f: \bN \rightarrow \bN$ and correctly decides if $(x, k) \in L$. 
This algorithm $\cA$ is called {\em fixed-parameter algorithm} (or {\em FPT algorithm}) for the problem $L$.	
\end{definition}

We say that two instances $(x, k)$ of $L$ and $(x', k')$ of $L$ are {\em equivalent} if $(x, k) \in L$ if and only if $(x', k') \in L$.
The notion of kernelization for in the area of parameterized complexity is defined as follows.

\begin{definition}[Kernelization]
\label{defn:kernel}
A {\em kernelization} for a parameterized problem $L \subseteq \Sigma^* \times \bN$ is an algorithm that given an instance $(x, k)$ of $L$, outputs an equivalent instance $(x', k')$ (called {\em kernel}) of $L$ in time polynomial in $|x| + k$ such that $|x'| + k' \leq g(k)$ for some computable function $g: \bN \rightarrow \bN$.
If $g(k)$ is $k^{\OO(1)}$, then $L$ admits a {\em polynomial kernel}.
\end{definition}

A kernelization algorithm usually consists of a collection of {\em reduction rules} that have to be applied exhaustively.
A reduction rule is {\em safe} if given an instance $(x, k)$ of $L$, one application of the reduction rule outputs an equivalent instance $(x', k')$ of $L$.
It is well-known due to \cite{CyganFKLMPPS15,DowneyFellowsBook13,FominLLSTZ19} that a decidable parameterized problem is FPT if and only if it admits a kernelization.
We refer to \cite{CyganFKLMPPS15,DowneyFellowsBook13,FominLLSTZ19} for more formal details related to parameterized complexity and kernelization.

\section{Polynomial Kernel for Cliques and Trees}
\label{sec:clique-tree-kernel}

This section is devoted to the proof of our main result (Theorem \ref{thm:main-result}) which is a polynomial kernel with $\OO(k^5)$ vertices for {\ctvd}.
As the first step, we invoke the following proposition by Jacob et al. \cite{JacobMR23jcss} that computes a (clique, tree)-deletion set $S \subseteq V(G)$ with at most $4k$ vertices.

\begin{proposition}[\cite{JacobMR23jcss}, Theorem 6]
\label{lemma-ctvd-approximation}  {\ctvd} admits a $4$-approximation algorithm.
\end{proposition}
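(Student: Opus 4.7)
The plan is a two-phase greedy approximation: first pack vertex-disjoint paw/diamond obstructions, then destroy the remaining induced cycles inside the triangle-free components via a feedback vertex set subroutine. In the first phase I would repeatedly search for any induced paw or diamond in the current graph (either can be detected in $\OO(n^4)$ time), add all four of its vertices to the output set $S$, and delete them, stopping when none remain. Let $\cF$ be the resulting collection of pairwise vertex-disjoint paw/diamond obstructions and let $S_1 = V(\cF)$, so $|S_1| = 4|\cF|$.

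After the first phase, no induced paw or diamond is left in the remaining graph, so by Lemma \ref{lemma:forbidden-2k1-c3} each connected component is either a clique (no induced $2K_1$) or triangle-free (no induced $C_3$). I would leave the clique components untouched and, on each triangle-free component, invoke a classical polynomial-time $2$-approximation for \FVS{} (such as Bar-Yehuda et al.\ or Becker--Geiger) to obtain a set $S_2$. This is the right subproblem because a triangle-free graph has no induced clique on three or more vertices, so the only way to split such a component into pieces each of which is a clique or a tree is to turn it into a forest.

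For the analysis, fix any optimal CTVD solution $S^*$ and partition it as $S^*_1 = S^* \cap V(\cF)$ and $S^*_2 = S^* \setminus V(\cF)$. Each obstruction in $\cF$ must be hit by $S^*$ and the obstructions are pairwise disjoint, so $|S^*_1| \ge |\cF|$ and hence $|S_1| = 4|\cF| \le 4|S^*_1|$. For each triangle-free component $C$ produced by the first phase, $V(C)\cap V(\cF)=\emptyset$, so $S^* \cap V(C) \subseteq S^*_2$. Since $C - (S^* \cap V(C))$ is an induced subgraph of $G - S^*$ and triangle-freeness forbids any of its components from lying inside a clique component of size at least three in $G - S^*$, the graph $C - (S^* \cap V(C))$ must be a forest, meaning $S^* \cap V(C)$ is an \FVS{} of $C$. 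The $2$-approximation therefore yields $|S_2| \le 2\sum_C |S^* \cap V(C)| \le 2|S^*_2|$, and combining the two estimates gives $|S| \le 4|S^*_1| + 2|S^*_2| \le 4|S^*|$.

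The main delicacy is avoiding double counting when combining the two charges: the factor $4$ from the paw/diamond packing must apply to a part of $S^*$ that is disjoint from the part charged by the factor $2$ from the \FVS{} subroutine, and this disjointness is automatic because the vertices of $\cF$ are removed before the second phase begins. A secondary but essential point is the reduction of CTVD to pure \FVS{} inside each triangle-free component, which hinges on the observation that such components cannot absorb induced cliques of size at least three.
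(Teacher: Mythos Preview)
The paper does not supply its own proof of this proposition; it is imported verbatim from \cite{JacobMR23jcss} and used as a black box. There is therefore nothing in the present paper to compare your argument against.

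That said, your two-phase scheme is correct. The crucial steps all check out: (i) every induced paw or diamond must be hit by any CTVD solution (by Lemma~\ref{ctvd-forbidden-char}), so a maximal vertex-disjoint packing $\cF$ of such obstructions gives $|V(\cF)| \le 4|S^*\cap V(\cF)|$; (ii) once no paw or diamond remains, Lemma~\ref{lemma:forbidden-2k1-c3} forces each component to be a clique or triangle-free; (iii) for a triangle-free component $C$, any connected piece of $C - S^*$ lies inside a clique-or-tree component of $G-S^*$, and triangle-freeness caps the clique case at two vertices, so $S^*\cap V(C)$ is a feedback vertex set of $C$ and the $2$-approximation applies. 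Your disjointness observation ($V(\cF)$ is removed before Phase~2) is exactly what prevents double charging. The argument, as written, is for simple graphs; if one also wants to cover the multigraph formulation used later in the kernel, self-loops and parallel edges can be dispatched by obvious forced moves before Phase~1 without affecting the ratio.
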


We begin with the following observation, whose proof is trivial.

\begin{observation}
\label{observation:subgraph-yes-instance}
The following statements hold true.
\begin{enumerate}
	\item For any subset $Z \subseteq V(G)$, if $(G,k)$ is a yes-instance for {\ctvd} with solution $X$, then $(G-Z,k)$ is a yes-instance for {\ctvd} with solution $X \setminus Z$. 
	\item { Suppose that $Z \subseteq V(G)$ is a vertex subset that is disjoint from every induced subgraph isomorphic to an obstruction for CTVD and $(G - Z, k)$ is a yes-instance to {\ctvd} with solution $X$ of size at most $k$.
	Then, $X$ is a solution of size at most $k$ to the instance $(G, k)$.}
\end{enumerate}
\end{observation}

\paragraph{Overview of the Kernelization Algorithm:}
Our first step is to invoke Proposition \ref{lemma-ctvd-approximation} to get a set $S$ such that $|S| \leq 4k$ (as otherwise $(G, k)$ is a no-instance).
In Section \ref{sec:initial-preprocessing-rules}, we provide some reduction rules that ensures us that every connected component of $G - S$ has a neighbor in $S$, and the graph has no degree-2-path of $G$ has more than 4 vertices.
Subsequently, in Section \ref{sec:clique-vertex-bounding}, we provide some reduction rules and prove that the number of vertices in the connected components of $G - S$ that are cliques is $\OO(k^5)$.
Finally, in Section \ref{sec:tree-vertices-G-S}, we prove a variant of Proposition \ref{theorem:flower-cycle-hitting} and use reduction rules related to that.
In addition, we use New $q$-Expansion Lemma (i.e. Lemma \ref{lemma:new-expansion-lemma}) to develop further reduction rules and structural properties of the graph to prove that the number of vertices in the connected components of $G - S$ that are trees is $\OO(k^2)$. 

\subsection{Initial Preprocessing Rules}
\label{sec:initial-preprocessing-rules}

Let $S$ be a $4$-approximate (clique, tree)-deletion set of $G$ obtained from Proposition \ref{lemma-ctvd-approximation}.
If $|S| > 4k$, we conclude that $(G,k)$ is a no-instance and return a trivial constant sized no-instance.
Hence, we can assume without loss of generality that $|S| \leq 4k$. 
We also assume without loss of generality that $G$ has no connected component that is a clique or a tree.
If such components are there, we can delete those components.
Hence, we can naturally assume from now onwards, that every connected component of $G - S$ (that is either a clique or a tree) has some neighbor in $S$.
{ But some of our subsequent reduction rules can create some component in $G$ that is a clique or a tree}.
So, we state the following reduction rule for the sake of completeness.
{ In this following reduction rule (whose safeness easily follows), we delete every connected component $C$ in $G - S$ that has no neighbor in $S$}.
All the obstructions for {\ctvd} are connected graphs and intersect with $S$. 
Thus, no obstruction can be part of isolated component $C$, and also $S$.

\begin{reduction rule}
\label{rr1}
{ If a connected component $C$ of $G$ is a clique or a tree, then remove $C$ from $G$.}
The new instance is $(G - C, k)$.
\end{reduction rule}

Since one of our subsequent reduction rules can create parallel edges which is also an obstruction, we state the following reduction rule whose safeness is also trivial.

\begin{reduction rule}
\label{rule:multiplicity-reduction}
If there is an edge with multiplicity more than two, reduce the multiplicity of that edge to exactly two.	
\end{reduction rule}

{%
As we consider the more generalized formulation where the input is allowed to be a multigraph and double-edges are also among the obstruction sets, we need the following reduction rule that is similar to the Buss-Rule analog for the {\VC} problem (see \cite{CyganFKLMPPS15,DowneyFellowsBook13}).

\begin{reduction rule}
\label{rule:high-degree-double-edge}
If there is $u \in V(G)$ that has $k+1$ adjacent vertices $v_1,\ldots,v_{k+1}$ such that for every $i \in [k+1]$, there is a double-edge between $u$ and $v_i$, then the new instance is $(G - u, k-1)$.
\end{reduction rule}

The safeness of the above reduction rule is obvious because double-edges are obstructions.
Clearly, if $u$ is not picked in a solution, then at least $k+1$ vertices become essential.
Hence, any solution of size at most $k$ must contain $u$.}

We note that any induced cycle of length at least 4 is a cycle with no chord and has at least 4 vertices.
From now onwards, for the sake of simplicity, we will often use the term {\em chordless cycle} to mean a cycle $C_{\ell}$ with $\ell \geq 4$ as induced subgraphs.
We also have the following reduction rule that helps us bounding the number of pendant vertices attached to any vertex.

\begin{reduction rule}
\label{rr4}
If there exists a vertex $u$ in $G$ adjacent to vertices $v$ and $v'$ that are pendants in $G$, then remove $v$ from $G$. 
The new instance is $(G - v, k)$.
\end{reduction rule}

\begin{lemma}
\label{lemma:safeness-pendant-vertex-deletion-rule}
Reduction Rule \ref{rr4} is safe.
\end{lemma}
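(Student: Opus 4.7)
The plan is to prove both directions of the equivalence between $(G,k)$ and $(G-v,k)$. The forward direction is an immediate consequence of Observation \ref{observation:subgraph-yes-instance}: if $X$ is a solution of size at most $k$ for $G$, then $X \setminus \{v\}$ is a solution of the same or smaller size for $G - v$. All the real work is in the reverse direction.

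For the reverse direction, I would assume $X'$ is a solution for $(G-v,k)$ of size at most $k$ and argue that either $X'$ itself, or a single-vertex modification of it, solves $(G,k)$. Since $v \notin V(G-v)$, the vertex $v$ is not in $X'$; so in $G - X'$, the vertex $v$ sits with the same neighborhood it had in $G$, namely a single pendant edge to $u$. I would then split into cases based on whether $u \in X'$ and whether $v' \in X'$. The easy cases are: (i) if $u \in X'$, then $v$ is isolated in $G - X'$, so $X'$ works; (ii) if $u \notin X'$ and $v' \notin X'$, then because $v'$ is pendant in $G$, the component of $u$ in $(G-v)-X'$ is either a tree with $v'$ as a pendant leaf, or the $K_2$ on $\{u,v'\}$, and in either case attaching another pendant $v$ at $u$ preserves the ``tree or clique'' property.

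The only delicate case is (iii): $u \notin X'$ but $v' \in X'$. Let $C$ denote the component of $u$ in $(G-v)-X'$; it must be a clique or a tree. If $C$ is a tree, or a clique of size at most $2$, then adding $v$ as a new pendant at $u$ still yields a tree, so $X'$ solves $(G,k)$. The obstacle is when $C$ is a clique $K_r$ with $r \geq 3$: appending the pendant $v$ at $u$ creates a paw (or a clique-with-pendant), which is forbidden by Lemma \ref{ctvd-forbidden-char}. To handle this subcase I would swap deletions: take
\[
X'' = (X' \setminus \{v'\}) \cup \{u\},
\]
so $|X''| = |X'| \leq k$. In $G - X''$, the vertex $u$ is removed, so the clique $C$ shrinks to $K_{r-1}$ (still a clique), while $v$ and $v'$ each lose their unique neighbor and become isolated singletons. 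No other component is affected (since $u$'s only neighbors in $G - X'$ lie in $C \cup \{v\}$). Hence $G - X''$ has every component a clique or a tree, and $X''$ is a valid solution for $(G,k)$.

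The main obstacle, as sketched, is this last subcase where $C$ is a clique of size at least three; the critical observation that makes the swap work is that $v$ and $v'$ are \emph{twin} pendants at $u$, which is precisely the structural feature the rule exploits. Once the swap is identified, verifying the cases above is a direct check against the forbidden-subgraph characterization from Lemma \ref{ctvd-forbidden-char}.
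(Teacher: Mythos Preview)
Your proof is correct and uses essentially the same key idea as the paper: the swap $X'' = (X' \setminus \{v'\}) \cup \{u\}$ when the straightforward choice fails. The only difference is stylistic—the paper argues via the forbidden-subgraph characterization (any obstruction in $G-X'$ must be a paw through $v$, hence through $u$, and replacing $v$ by $v'$ yields a paw in $G-v$ forcing $v'\in X'$), whereas you do an explicit case analysis on the component of $u$; both routes arrive at the same swap and the same verification.
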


\begin{proof}
From Observation \ref{observation:subgraph-yes-instance}, if $(G, k)$ is a yes-instance, then $(G -v, k)$ is also a yes-instance.
For the backward direction ($\Leftarrow$), let $X$ be a solution to $(G-v, k)$.
Targeting a contradiction, suppose that $X$ is not a feasible solution of $G$.
First of all observe that $v$ is a pendant, therefore, there cannot be a double-edge containing $v$.
As $G - (X \cup \{v\})$ is a simple graph, and $v$ is a pendant vertex hence $G - X$ is also a simple graph.
So, we can assume that $G - X$ is a simple graph but has a connected component that is neither a clique nor a tree.
Then by Lemma \ref{ctvd-forbidden-char}, $G - X$ contains a paw, or a diamond, or a cycle $C_i$ with $i \geq 4$ as an induced subgraph.
Note that this induced subgraph is not present in $G-(X \cup \{v\})$ as $X$ is a solution in $G-v$.
Thus one of the vertices of this induced subgraph has to be $v$.
Since $v$ is a degree-1 vertex in $G$, the induced subgraph has to be a paw that contains $v$ and its unique neighbor $u$.
Let the vertices of this paw of $G - X$ be $u,v,w_1$ and $w_2$.
Note that $u, v, w_1, w_2 \notin X$.
Since $v'$ is also a degree-1 vertex adjacent to $u$, the graph induced by  $u,v',w_1$ and $w_2$ is also a paw in $G$.
As $u, v', w_1, w_2$ { form} a paw in $G - v$, $X$ is a solution to $G - v$, and $u, w_1, w_2 \notin X$, it must be that $v' \in X$.
As $v' \in X$, we set $X' = (X \setminus \{v'\}) \cup  \{u\}$.
We claim that $X'$ is also a feasible solution to the graph $G$.
 
One crucial observation is that the collection of all chordless cycles, double-edges, and diamonds of $G$ and $G - v$ are the same.
Observe that all diamonds, double-edges, and chordless { cycles of} $G - v$ are already intersected by $X$.
Hence, $X$ also intersects all chordless cycles, double-edges, and diamonds of $G$.
Also, $X'$ contains all vertices of $X$ except $v'$.
As $v'$ is a pendant vertex in $G$, $X'$ intersects all double-edges, all chordless cycles, and all diamonds of $G$.
As $v'$ is not participating in any chordless cycle, double-edge, and diamond of $G - v$, the objective of $v'$ is to only intersect the paws of $G - v$ that contain $v'$.
Observe that any such paw of $G - v$ will also contain $u$.
As $u \in X'$ and $X' = (X \setminus \{v'\}) \cup \{u\}$, hence $X'$ intersects every paw of $G - v$ containing $v'$.
Thus $X'$ is a solution to $G - v$.
Now, we argue that $X'$ intersects all paws, diamond, double-edges, and chordless cycles of $G$.
First of all, as $v$ is a pendant vertex, $v$ does not participate in any chordless cycle, double-edge, or diamond of $G$.
We have already argued that all double-edges, all diamonds, and all chordless cycles of $G$ are intersected by $X'$.
As the set of double-edges of $G$ and $G - v$ are the same, and any paw of $G$ that contains $v$ also contains its unique neighbor $u$.
As $u \in X'$, it follows that $X'$ intersects all paws containing $v$.
Therefore, $X'$ is a solution to $G$.
As $|X'| = |X|$, this completes the proof of safeness of this reduction rule.
\end{proof}

We can conclude that if Reduction Rule \ref{rr4} is not applicable, then every vertex in $G$ is adjacent to at most one pendant vertex in $G$.
From now, we assume that every vertex in $G$ is adjacent to at most one pendant vertex.
Our next two reduction rules help us to reduce the length (the number of vertices) of a degree-2-path in $G$. 
Note that a degree-2-path can be of two types, either a degree-2-tail or a degree-2-overbridge.
For both types, we need the following two reduction rules.

\begin{reduction rule}
\label{rr5}
Let $P = (v_1, v_2, \ldots, v_{\ell})$ be degree-2-tail { with $\ell$ vertices} such that $d_G(v_1) > 2$, $d_G(v_{\ell}) =1$ and $Z = \{v_3, v_4, \dots, v_{\ell}\}$.
Then, remove $Z$ from $G$. The new instance is $(G-Z, k)$.  
\end{reduction rule}

\begin{lemma}
Reduction Rule \ref{rr5} is safe.
\end{lemma}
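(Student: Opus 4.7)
The plan is to establish both directions of the equivalence between $(G, k)$ and $(G - Z, k)$ being yes-instances of CTVD. The forward direction will follow immediately from Observation~\ref{observation:subgraph-yes-instance}. For the converse, given a (clique, tree)-deletion set $X'$ of $G - Z$ with $|X'| \leq k$, I will argue that the same set $X := X'$ is already a valid solution for $G$. The guiding intuition is that reinserting the vertices of $Z$ into $G - Z$ only attaches the pendant path $v_3 - v_4 - \cdots - v_\ell$ to the vertex $v_2$, which is a purely tree-like extension.

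To make this precise, I would first observe that every vertex of $Z$ has degree at most $2$ in $G$ and lies on a tree-like path ending at the pendant $v_\ell$; in particular, no vertex of $Z$ belongs to any triangle or any induced cycle of $G$. Via Lemma~\ref{ctvd-forbidden-char}, this already rules out the appearance in $G - X$ of any new paw, diamond, or induced $C_i$ (with $i \geq 4$) that uses a vertex of $Z$: triangles cannot include vertices from $Z$ (and thus neither paws nor diamonds can), and any new induced cycle would have to traverse the attached path and return, which is impossible since $v_\ell$ is a pendant in $G$.

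The main obstacle, and where the argument requires care, is the behaviour at the interface vertex $v_2$: the component of $v_2$ in $G - X$ strictly grows upon reinsertion of $Z$, and I need to verify that this component still belongs to the cliques-or-trees class. A short case analysis on whether $v_2 \in X'$ settles this. If $v_2 \in X'$, then the reinsertion simply creates a fresh tree component that is a path on $\ell - 2$ vertices. If $v_2 \notin X'$, then $v_2$ has degree at most $1$ in $G - Z$, so its component in $G - Z - X'$ is either a tree --- which remains a tree when a path is attached at a leaf --- or a clique on at most $2$ vertices (since any clique on $3$ or more vertices has minimum degree $\geq 2$), which likewise becomes a tree after the attachment. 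Consequently every component of $G - X$ is a clique or a tree, which completes the proof of the backward direction and hence establishes the safeness of the rule.
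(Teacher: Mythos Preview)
Your proof is correct; the component case analysis on $v_2$ is a complete argument for the backward direction. This is a genuinely different route from the paper's: the paper stays at the level of forbidden induced subgraphs (Lemma~\ref{ctvd-forbidden-char}), arguing that no vertex of $Z$ can lie in an induced cycle (the pendant $v_\ell$ blocks this) or in a paw or diamond (since $v_2$, the only possible attachment point for a triangle, is itself not in any triangle). Your first paragraph sketches this same idea, but the step ``no vertex of $Z$ lies in a triangle, hence none lies in a paw'' is too quick --- the pendant vertex of a paw is not in a triangle, so one must also observe that its neighbour (which would be $v_2$ or a vertex deeper in $Z$) is not in a triangle either; that is precisely what the paper checks. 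It is therefore really your component analysis that carries the proof. That argument is more elementary here (it avoids Lemma~\ref{ctvd-forbidden-char} altogether and directly verifies that attaching a pendant path to a tree or to a $K_1$/$K_2$ yields a tree), whereas the paper's forbidden-subgraph approach has the advantage of matching the template used for the neighbouring reduction rules.
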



\begin{proof}
From Observation \ref{observation:subgraph-yes-instance}, if $(G, k)$ is a yes-instance, then $(G -Z, k)$ is also a yes-instance. 

{ For the backward direction ($\Leftarrow$), it is sufficient to observe that $Z$ is disjoint from every obstruction of $G$.
A crucial observation is that $Z$ is disjoint from every cycle of $G$.
Hence, $Z$ is disjoint from every induced cycles of length at least 4 in $G$.
Consider a diamond $O$ in $G$.
Observe that every vertex of a diamond participates in a triangle in $O$ itself.
But, no vertex of $Z$ participates in any cycle.
Therefore, $Z$ is disjoint from every diamond of $G$.
Consider a paw $O$ of $G$.
All but one vertex of $O$ participates in a triangle in $O$ itself.
Since the vertices of $Z$ do not participate in any cycle, any $y \in O \cap Z$ is a pendant vertex of $O$.
But $N_G(Z) = \{v_2\}$.
But $v_2$ also does not participate in any cycle of $G$.
Hence, $v_2$ cannot participate in the triangle of $O$.
Therefore, $Z$ is disjoint from paw.
As $Z$ is disjoint from every obstruction of $G$, due to the second item of Observation \ref{observation:subgraph-yes-instance}, the backward direction holds true.}
%
\end{proof}

Our previous reduction rule has illustrated that we can shorten a long degree-2-tail to length at most two.
Now, we consider a degree-2-overbridge $P$ of length $\ell$.
Our next lemma gives us a structural characterization that if we contract all but a few vertices of $P$, then the set of all paws and diamonds remain the same even after deleting those vertices.  

\begin{lemma}
\label{lemma:rr7-paws-diamonds-same}
 Let $P = (v_1, v_2, \ldots, v_{\ell})$ be a degree-2-overbridge { with $\ell$ vertices in $G$ such that $\ell \geq 5$, $v_1 \neq v_{\ell}$,} and $Z = \{v_3, v_4, \dots, v_{\ell-2}\}$.
Consider $G'$ the graph obtained by deleting the vertices of $Z$ and then adding the edge $v_2 v_{\ell-1}$.
Then the following statements hold true.
\begin{enumerate}[(i)] 
	\item\label{paw-diamond-G-Z} Every paw and every diamond of $G$ is disjoint from $Z$.
	\item\label{paw-diamond-G-prime} { No paw or diamond in $G'$ contain the edge $v_2 v_{\ell-1}$.}
	\item\label{same-diamond-paw} The set of paws and diamonds in both $G$ and $G'$ are the same. 
	\item\label{same-double-edge} {The set of double-edges of $G$ and $G'$ are the same.}
\end{enumerate}
\end{lemma}


\begin{proof}
We prove the items in the given order.
\begin{enumerate}[(i)]
	\item Observe that every vertex of $Z$ has degree two in $G$ and $N_G(Z)$ has two vertices $v_2$ and $v_{\ell-1}$ both having degree two in $G$ as well. 
		Moreover, both in a paw or in a diamond, none of the vertices in the graph is such that it has a degree-2 vertex and both its neighbors also have degree at most two.
		Hence, $G$ cannot have an induced paw or an induced diamond that intersects $Z$.
		
		\item 
		{ For the sake of contradiction, suppose that there is a paw or a diamond of $G'$ that contains the edge $v_2 v_{\ell - 1}$.
		Consider a paw (or the diamond) of $G'$ that contains the edge $v_2 v_{\ell - 1}$.
		Since $v_2$ and $v_{\ell-1}$ are adjacent to only $v_1$ and $v_{\ell}$, respectively, in $G'$, the paw or diamond should contain one of $v_1$ or $v_{\ell}$.
		Now, we observe that in both paw and diamond graphs, if a subset of the graph induces a $P_3$, then the middle vertex of the $P_3$ must have degree-3.
		Considering the vertices $v_1, v_2$ and $v_{\ell - 1}$ that induces a $P_3$, but $v_2$ has degree two in $G$.
		Hence there cannot exist a paw or a diamond that contains $v_1, v_2, v_{\ell-1}$.
		By symmetry, considering $v_2, v_{\ell - 1}$ and $v_{\ell}$ that induces a $P_3$, note that the vertex $v_{\ell-1}$ has degree two in $G$.
		Hence, there cannot exist a diamond or a paw in $G'$ containing the vertices $v_2, v_{\ell-1}$ and $v_{\ell}$.
		Thus, we conclude that there are no paws or diamonds in $G'$ containing the edge $v_2 v_{\ell-1}$}.

		\item 
		Consider an arbitrary paw or an arbitrary diamond $O$ of $G$.
		Due to (\ref{paw-diamond-G-Z}) of this lemma, $O$ cannot intersect $Z$.
		Hence, the edge $v_2v_3$ cannot be present in $O$.
		Similarly, the edge $v_{\ell-2}v_{\ell-1}$ cannot be present in $O$.
		Moreover, observe that any paw $O$ of $G$ cannot contain both $v_2$ and $v_{\ell-1}$ together.
		{ Additionally, it follows from (\ref{paw-diamond-G-prime}) of this lemma that no paw or diamond $O'$ of $G'$ can contain the edge ${v_2 v_{\ell-1}}$.
		Hence, no paw or diamond of $G'$ contain both $v_2$ and $v_{\ell - 1}$ together.}
		If $v_2, v_{\ell-1} \in O$, then $O$ must contain $v_1$ as well as $v_{\ell}$.
		In such a case, $O = \{v_1,v_2,v_{\ell-1}, v_{\ell}\}$ that neither induces a paw nor induces a diamond both in $G$ and in $G'$.
		Therefore, any paw (and diamond) of $G$ is a paw (and diamond) of $G'$ and any paw/diamond of $G'$ is a paw/diamond of $G$ respectively.
		
		\item For this item, consider any double-edge of $G$ and $G'$.
		As $P = (v_1,\ldots,v_{\ell})$ is a degree-2-over bridge. 
		As the vertices $v_2,\ldots,v_{\ell-1}$ are of degree exactly two, there is no double-edge of $G$ that intersects $Z \cup \{v_2, v_{\ell-1}\}$.
		Hence, no double-edge of $G'$ intersects $v_2, v_{\ell-1}$.
		Therefore, the set of double-edges of $G$ and $G'$ are the same.
\end{enumerate}
This completes the proof of the lemma.
\end{proof}

Our next reduction rule exploits the above lemma and reduces the length of a degree-2-overbridge to at most four.

\begin{reduction rule}
\label{rr7}
Let $P = (v_1, v_2, \ldots, v_{\ell})$ be a degree-2-overbridge { with $\ell \geq 5$ vertices in $G$} and $Z = \{v_3, v_4, \dots, v_{\ell-2}\}$. 
Suppose that $G'$ is the graph obtained by removing $Z$ and adding the edge $v_2 v_{\ell-1}$.
The new instance is $(G', k)$.  
We refer to Figure \ref{fig:rr7-long-path} for an illustration.
\end{reduction rule}

\begin{figure}[t]
\centering
    \includegraphics[scale=0.3]{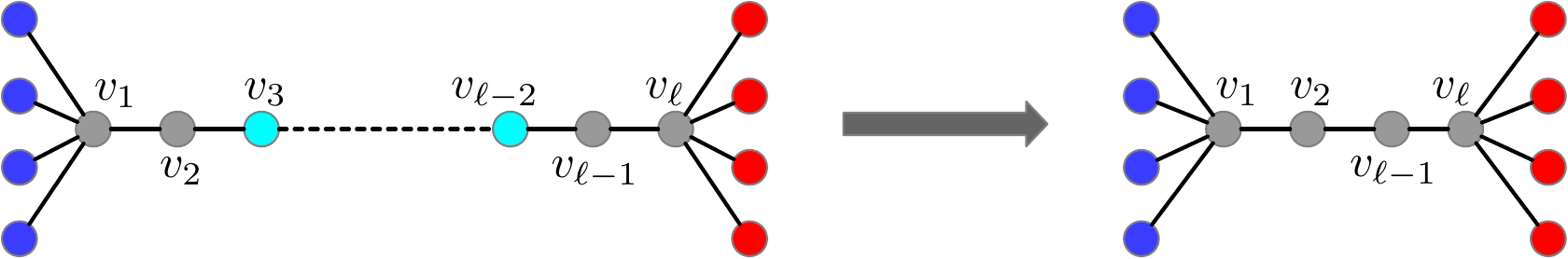}
    \caption{An illustration of applying Reduction Rule \ref{rr7}.}
    \label{fig:rr7-long-path}
\end{figure}

\begin{lemma}
Reduction Rule \ref{rr7} is safe.
\end{lemma}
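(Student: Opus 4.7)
The plan is to prove safeness by establishing both implications $(G,k) \Leftrightarrow (G',k)$ via explicit size-preserving transformations of (clique, tree)-deletion sets. Since $V(G') \subseteq V(G)$, the backward direction will take $X = X'$ verbatim, while the forward direction must reroute any vertices of the given solution that lie in $Z = \{v_3,\ldots,v_{\ell-2}\}$ to vertices present in $V(G')$. In both directions I would assume, toward a contradiction, that an obstruction survives (a paw, diamond, or induced cycle $C_i$ with $i \geq 4$, by Lemma \ref{ctvd-forbidden-char}) and derive a contradiction using Lemma \ref{lemma:rr7-paws-diamonds-same} for paws and diamonds and a direct expansion/contraction analysis for induced cycles.

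For the forward direction, given a solution $X$ of $(G,k)$, I would split on whether $X$ meets $Z$. If $X \cap Z = \emptyset$, set $X' = X$; any paw or diamond obstruction $H$ of $G' - X'$ satisfies $V(H) \cap \{v_2, v_{\ell-1}\} = \emptyset$ by Lemma \ref{lemma:rr7-paws-diamonds-same}(ii), so $G[V(H)] = G'[V(H)]$ and $H$ persists in $G - X$, a contradiction; for an induced cycle $H$, either $\{v_2, v_{\ell-1}\} \not\subseteq V(H)$ and the same identification of induced subgraphs applies, or inducedness in $G'$ forces $H$ to use the new edge $v_2 v_{\ell-1}$, in which case I replace that edge by the path $v_2, v_3, \ldots, v_{\ell-1}$ to obtain a longer induced cycle in $G - X$ of length $\geq 4$ (no chords are introduced because every $z \in Z$ has $G$-degree two with both neighbors on $P$). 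If instead $X \cap Z \neq \emptyset$, set $X' = (X \setminus Z) \cup \{v_2\}$, which has size at most $|X| \leq k$; because $v_2 \in X'$, no obstruction of $G' - X'$ can contain $v_2$, so $G[V(H)] = G'[V(H)]$ trivially and any such $H$ would persist in $G - X$, a contradiction.

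For the backward direction, set $X = X'$ and suppose $H$ is a surviving obstruction in $G - X$. If $H$ is a paw or diamond, Lemma \ref{lemma:rr7-paws-diamonds-same}(i) gives $V(H) \cap Z = \emptyset$; moreover, each of $v_2$ and $v_{\ell-1}$ has exactly one $G$-neighbor outside $Z$ (namely $v_1$ and $v_\ell$ respectively), so each contributes degree at most one to $G[V(H)]$, which is incompatible with being in a paw (only one vertex of degree $\leq 1$) or a diamond (no vertex of degree $\leq 1$); hence $\{v_2, v_{\ell-1}\} \not\subseteq V(H)$, giving $G[V(H)] = G'[V(H)]$ and contradicting $X'$ being a solution. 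If $H$ is an induced cycle of length $i \geq 4$, I again split on whether $V(H) \cap Z = \emptyset$: if so, the same degree argument forces $v_2, v_{\ell-1} \notin V(H)$ and $H$ transfers directly; if not, picking any $v_j \in V(H) \cap Z$ and iterating the fact that each internal vertex of $P$ has $G$-degree two with both neighbors on $P$ forces $V(H) \supseteq \{v_1, \ldots, v_\ell\}$ with $P$ appearing as a contiguous sub-path of the cycle, and contracting $Z$ to the edge $v_2 v_{\ell-1}$ in $G'$ produces an induced cycle of length $i - (\ell - 4) \geq 4$ in $G' - X'$, the required contradiction.

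The main obstacle is precisely the induced-cycle analysis in both directions: I need to ensure that the expansion of $v_2 v_{\ell-1}$ into the $P$-segment (forward) and the contraction of the $P$-segment into that edge (backward) both preserve the absence of chords. This is settled by the single structural observation that every vertex of $P \setminus \{v_1, v_\ell\}$ has $G$-degree exactly two with both neighbors on $P$, which simultaneously rules out chords in the expanded/contracted cycles and fuels the degree argument that bars both $v_2$ and $v_{\ell-1}$ from sitting in a paw or diamond together.
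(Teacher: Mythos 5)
Your proof is correct and follows essentially the same route as the paper's: both directions argue by contradiction through the forbidden-subgraph characterization of Lemma \ref{ctvd-forbidden-char}, dispatch paws and diamonds via Lemma \ref{lemma:rr7-paws-diamonds-same}, and handle long induced cycles by expanding the new edge $v_2v_{\ell-1}$ into the overbridge (forward) or contracting the overbridge into that edge (backward), with the degree-two structure of $P$ guaranteeing chordlessness. Your forward-direction modification of the solution ($X'=X$, or $(X\setminus Z)\cup\{v_2\}$ when $X$ meets $Z$) is a slightly cleaner variant of the paper's case-by-case replacement, but the argument is the same in substance.
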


\begin{proof}
Since the degree-2-overbridge $P$ of length $\ell$ has been converted  to a degree-2-overbridge of length four in $G'$, let $P' = (v_1, v_2, v_{\ell-1}, v_{\ell})$ be the degree-2-overbridge of length four.

First, we give the forward direction ($\Rightarrow$) of the proof.
Let $X$ be a solution to $(G, k)$.
We define $X'$ as follows. 
\begin{itemize}
	\item If $v_2 \in X$, set $X'$ to $(X \setminus \{v_2\}) \cup \{v_1\}$.
	\item  If $v_{\ell-1} \in X$, set $X'$ to $(X \setminus \{v_{\ell-1}\}) \cup \{v_\ell\}$.
	\item If $v_2, v_{\ell-1} \in X$, set $X'$ to $(X \setminus \{v_2, v_{\ell-1}\}) \cup \{v_1, v_2\}$.
	\item  Else, if $X \cap Z \neq \emptyset$, set $X'$ to $(X \setminus Z) \cup \{v_1\}$.
	\item Else,  set $X'$ to $X$.
\end{itemize}

Clearly, $|X'| \leq |X|$. 
Observe that $v_2, v_{\ell-1} \notin X'$.
We now claim that $X'$ is a solution to the instance $(G', k)$. 
Targeting a contradiction, suppose that there is a connected component of $G'-X'$ that is neither a clique nor a tree.
{%
As $X$ is a solution to $(G, k)$, $X$ intersects all double-edges of $G$.
Due to item (\ref{same-double-edge}) of Lemma \ref{lemma:rr7-paws-diamonds-same}, the set of all double-edges of $G$ and $G'$ are the same.
Hence, $X$ intersects all double-edges of $G'$.
Also, $X'$ contains all vertices of $X$ except the vertices of $Z \cup \{v_2, v_{\ell-1}\}$.
Therefore, $X'$ intersects all double-edges of $G'$.
So, we assume that $G' - X'$ is a simple graph.}

As we assume $X'$ is not a solution to $G'$ but $G' - X'$ is s simple graph, it follows from Lemma \ref{ctvd-forbidden-char} that $G' - X'$ contains a paw, or a diamond or a cycle $C_i$ with $i \geq 4$ as an induced subgraph.
It follows from item (\ref{paw-diamond-G-prime}) of Lemma \ref{lemma:rr7-paws-diamonds-same} that a diamond or a paw from $G'$ cannot contain both $v_2$ as well as $v_{\ell-1}$.
If a paw $Q$ of $G' - X'$ contains $v_2$ or $v_{\ell-1}$, then $Q$ must contain $v_1$ or $v_{\ell}$, respectively.
We consider both these cases one by one.

Let $Q$ be a paw of $G' - X'$ and $v_1, v_2 \in Q$.
Then $v_1, v_2 \notin X'$ and $Q$ is also a paw of $G'$.
It follows from item (\ref{same-diamond-paw}) of Lemma \ref{lemma:rr7-paws-diamonds-same}  that $Q$ is a paw of $G$ as well.
Since we have assumed that $v_1, v_2 \notin X'$, it must be that $v_1, v_2 \notin X$ (otherwise, $v_2 \in X$ implies $v_1 \in X'$ and $v_1 \in X$ implies $v_1 \in X'$).
As the other vertices of $X'$ are the same as $X$ except the vertices of $Z \cup \{v_2, v_{\ell-1}\}$, and $v_1, v_2 \in Q$, and $v_1, v_2 \notin X$, it ensures us that the vertex set $Q$ is also present in $G - X$.
Then, $G - X$ contains a paw contradicting our initial assumption that $X$ is a (clique, tree)-deletion set of $G$.
Similarly, in case a paw $Q$ of $G' - X'$ contains $v_{\ell-1}$ and $v_{\ell}$, then we can use similar arguments to justify that $Q$ is a paw in $G - X$ arriving at a contradiction that $X$ is a (clique, tree)-deletion set of $G$.

Observe that there cannot exist a diamond $O$ of $G' - X'$ such that $v_1, v_2 \in O$ or $v_{\ell-1}, v_{\ell} \in O$.
Hence, the only possibility is that  $G'-X'$ contains an induced cycle $C_i$ with $i \geq 4$.
The cycles in $G'-X'$ that are not present in $G-X$ must contain $v_2$ or $v_{\ell-1}$ and therefore, the entire degree-2-overbridge $P'$.
Then, any such chordless cycle of $G' - X'$ must also contain $v_1, v_{\ell}$, implying that $v_1, v_{\ell} \notin X'$.
In such a case, by construction of $X'$ from $X$, $Z \cup \{v_2, v_{{\ell}-1}\} \cap X = \emptyset$. 
In particular, $v_1, v_{\ell} \notin X$, (as otherwise by construction $v_1 \in X'$ or $v_{\ell} \in X'$).
Hence, $P \cap X = \emptyset$.
Let this cycle in $G'-X'$ be $C'$ that contains $P'$.
We obtain $C$ from $C'$ by replacing the degree-2-overbridge $P'$ by the degree-2-overbridge $P$. 
From the definition of $X'$, it follows that $C$ is an induced cycle of length at least four present in $G-X$.
It contradicts that $X$ is a (clique, tree)-deletion set to $(G, k)$.
Thus, $X'$ is a solution to $(G', k)$.

Conversely, for the backward direction ($\Leftarrow$), let $X'$ be a (clique, tree)-deletion set of size at most $k$ in $G'$. 
We claim that $X'$ is also a solution of $G$. 
Targeting a contradiction, suppose that $X'$ is not a solution to $G$.
Observe from item (\ref{same-double-edge}) of Lemma \ref{lemma:rr7-paws-diamonds-same} that every double-edge of $G'$ is also a double-edge of $G$.
As $X'$ intersects all double-edges of $G'$, $X'$ intersects all double-edges of $G$.
So, $G - X'$ is a simple graph. 
Then by Lemma~\ref{ctvd-forbidden-char}, $G - X'$ contains a paw, or a diamond or a cycle $C_i$ with $i \geq 4$ as an induced subgraph.
Such an obstruction (i.e. paw or diamond or chordless cycle) also exists in $G$.
 From item (\ref{same-diamond-paw}) of Lemma \ref{lemma:rr7-paws-diamonds-same}, the sets of paws and diamonds in $G$ and $G'$ are the same.
As $X'$ intersects all diamonds and all paws of $G'$, $X'$ intersects all paws and all diamonds of $G$ as well.
Hence, the only possible obstruction for $G - X'$ is an induced cycle of length at least four.
Note that the set of induced cycles in $G$ that do not contain any vertex from $P \setminus \{v_1, v_{\ell}\}$ are also present in $G'$.
As $X'$ is a solution to $G'$, $X'$ intersects all induced cycles of $G'$.
Hence, the the chordless cycle $C$ of in $G - X'$ must contain some vertex from $Z \cup \{v_2, v_{{\ell} - 1}\}$.
Then, $C$ is an induced cycle of length at least $4$ that contains the entire degree-2-overbridge $P$ and $X'$ is disjoint from $P$.
 Then the cycle $C'$ obtained by replacing $P$ with $P'$ is present in $G'-X'$.
 Moreover, $C'$ is also an induced cycle of $G' - X'$ containing at least four vertices contradicting that $X'$ is a solution to $(G', k)$.
This completes the proof.
\end{proof}

We use another reduction rule whose safeness is similar to the crown reduction rule for {\VC}.

\begin{reduction rule}
\label{rule:pendant-cycle}
If $D$ is a pendant hole attached to $v$, then delete $D$ from the graph.
The new instance is $(G - D, k - 1)$.
\end{reduction rule}

\begin{lemma}
\label{lemma:safeness-pendant-cycle}
Reduction Rule \ref{rule:pendant-cycle} is safe.
\end{lemma}

\begin{proof}
The backward direction ($\Leftarrow$) is trivial.
For the forward direction ($\Rightarrow$), observe that $D$ is an induced cycle of $G$ with at least 4 vertices.
Hence, at least one vertex from $D$ must be part of the solution.
Since $D$ is attached to $v$, deleting $v$ creates a connected component $D \setminus \{v\}$ that is a path that has no vertex adjacent to any vertex other than $v$ in the graph.
Therefore, if some vertex $w$ other than $v$ is in a solution $X$ of size at most $k$, then $(X \setminus \{w\}) \cup \{v\}$ is a solution to $(G, k)$.
Hence, this reduction rule is safe.
\end{proof}

%


\subsection{Bounding the Clique Vertices in $G - S$}
\label{sec:clique-vertex-bounding}

Let $V_1 \subseteq V(G) \setminus S$ denote the set of vertices of the connected components that form cliques of size at least 3.
We now bound the number of connected components in $G[V_1]$ (which are cliques).
Let us create an auxiliary bipartite graph $H = (S,\CC)$ with $S$ on one side and $\CC$ having a vertex set corresponding to each of the clique connected components in $V_1$
  on the other side. 
  We add an edge $(s,C)$ with $s \in S$ and $C \in \CC$ if $s$ is adjacent to at least one vertex in $C$.
We now show how to ensure that $|\CC| \leq 2|S|$.
Note that by Reduction Rule \ref{rr1}, no component in $\CC$ is an isolated vertex in $H$. So, we have the following reduction rule, where we rely on the Expansion Lemma. 

\begin{reduction rule}
\label{rr2}
If $|\CC| \geq 2|S|$, then call the algorithm provided by the
$q$-Expansion Lemma with $q = 2$ (Lemma~\ref{lemma:expansion}) to compute sets $X \subseteq S$ and $\mathcal{Y} \subseteq \CC$ such that there is a $2$-expansion $M$ of $X$ into $\mathcal{Y}$ in $H$ and $N_H(\mathcal{Y}) \subseteq X$.
The new instance is $(G - X, k - |X|)$.
\end{reduction rule}



\begin{lemma}
\label{lemma:clique-expansion-rule-safeness}
Reduction Rule \ref{rr2} is safe.
\end{lemma}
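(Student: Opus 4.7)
The approach I would take is the standard swap argument that accompanies the Expansion Lemma. Write $V(\mathcal{Y}) = \bigcup_{C \in \mathcal{Y}} V(C)$, and for each $s \in X$ let $C_1(s), C_2(s) \in \mathcal{Y}$ be the two components matched to $s$ by the $2$-expansion $M$. The first thing I would record is the structural observation that $N_G(C) \subseteq X$ for every $C \in \mathcal{Y}$: indeed, $C$ is a connected component of $G - S$, so every edge out of $C$ ends in $S$, and the Expansion Lemma guarantees $N_H(\mathcal{Y}) \subseteq X$. In particular, deleting $X$ from $G$ isolates each $C \in \mathcal{Y}$ from the rest of the graph.

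The backward direction ($\Leftarrow$) is immediate: any (clique, tree)-deletion set $Z'$ of size at most $k - |X|$ in $G - X$ yields $Z' \cup X$, a (clique, tree)-deletion set of $G$ of size at most $k$, since $G - (Z' \cup X) = (G - X) - Z'$.

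For the forward direction ($\Rightarrow$), I would take any solution $Z$ of $(G, k)$ and transform it into
$$Z^{*} = (Z \setminus V(\mathcal{Y})) \cup X,$$
then show $|Z^{*}| \leq |Z|$ and that $Z^{*}$ is itself a solution to $(G, k)$; the set $Z^{*} \setminus X$ then witnesses that $(G - X, k - |X|)$ is a yes-instance. Since $X \subseteq S$ is disjoint from $V(\mathcal{Y})$, bounding $|Z^{*}|$ reduces to proving $|Z \cap V(\mathcal{Y})| \geq |X \setminus Z|$; and because the $2|X|$ cliques $\{C_1(s), C_2(s) : s \in X\}$ are pairwise distinct (each vertex of $\mathcal{Y}$ is saturated by at most one matching edge) and hence vertex-disjoint, it suffices to show that for every $s \in X \setminus Z$ the set $Z$ hits $V(C_1(s)) \cup V(C_2(s))$.

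This last claim is the crux, and I would prove it by contradiction. If $s \in X \setminus Z$ and $Z$ misses both $V(C_1(s))$ and $V(C_2(s))$, then picking neighbors $v_1 \in V(C_1(s))$ and $v_2 \in V(C_2(s))$ of $s$ in $G$ (which exist since $sC_1(s)$ and $sC_2(s)$ are edges of $H$), all of $v_1$, $s$, $v_2$ survive in $G - Z$ and lie in the same connected component. The edge $v_1 v_2$ is absent, because $C_1(s)$ and $C_2(s)$ are different components of $G - S$, so this component is not a clique; and since $|V(C_1(s))| \geq 3$ the same component contains the triangle inside $C_1(s)$ and so is not a tree either, contradicting Lemma~\ref{ctvd-forbidden-char}. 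Once this is established, checking that $Z^{*}$ is a solution is routine: each $C \in \mathcal{Y}$ appears as a standalone clique component of $G - Z^{*}$ by the isolation observation, while the remaining vertices of $G - Z^{*}$ induce a subgraph of $G - Z$, and the property of being a disjoint union of cliques and trees is preserved under taking induced subgraphs. I expect the obstruction argument centered on the path $v_1$--$s$--$v_2$ together with the triangle of $C_1(s)$ — the step that crucially uses that the clique components in $V_1$ have size at least $3$ — to be the main technical point of the proof.
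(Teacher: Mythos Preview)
Your proposal is correct and follows essentially the same Expansion-Lemma swap argument as the paper: define $Z^{*}=(Z\setminus V(\mathcal{Y}))\cup X$, use the $2$-expansion to show $|Z\cap V(\mathcal{Y})|\ge |X\setminus Z|$, and then verify that $Z^{*}$ is a solution. The only cosmetic difference is in how the obstruction is exhibited: the paper explicitly builds a paw for each $s\in X$ (doing a small case analysis on whether $s$ has one or several neighbours in $C_1(s)$) and argues these paws are pairwise disjoint, whereas you argue more directly that the surviving component contains both a non-edge $v_1v_2$ and a triangle from $C_1(s)$, hence is neither a clique nor a tree; similarly, your verification that $Z^{*}$ is a solution via the partition into isolated cliques $C\in\mathcal{Y}$ and an induced subgraph of $G-Z$ is a clean alternative to the paper's obstruction-through-$X$ contradiction.
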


\begin{proof}
In one direction, it is clear that if $Z$ is a solution to $(G - X, k - |X|)$, then $Z \cup X$ is a solution to $(G, k)$ as $|Z \cup X| \leq (k-|X|) + |X| = k$.
For the other direction, let $Z$ be a solution to $(G, k)$.
Let $Y$ denote the set of vertices in the connected components of $\mathcal{Y}$ in $H$.
We denote $Z' = (Z \setminus Y) \cup X$.
Since there is a $2$-expansion $M$ of  $X$ into $\mathcal{Y}$ in $H$, it follows from Lemma \ref{lemma:expansion} that for every $x \in X$,  the there are two different connected components $C_1$ and $C_2$ of $\mathcal{Y}$ such that $x$ is adjacent to $C_1$ and $C_2$ in $H$.
 Thus, $x$ is adjacent to $u \in C_1$ in $G$, and is adjacent to $v \in C_2$ in $G$.

{ For every $x \in X$, one of the three conditions must hold.
\begin{description}
  	\item[(i)]  
  	There is a paw that contains $x$ and the vertices from two connected components $C_x^1, C_x^2 \in \mathcal{Y}$ such that $(x, C_x^1), (x, C_x^2) \in M$.
  	
  	\item[(ii)] There is a paw containing $x$ and 3 vertices of a connected component $C_x \in \mathcal{Y}$ such that $(x, C_x) \in M$.

  	\item[(iii)] There is a double-edge containing $x$ and a connected component $C_x \in \YY$ such that $(x, C_x) \in M$.
  	
  	To see this, if $xu$ or $xv$ is a double-edge such that $u \in C_1$ and $v \in C_2$, then there is a component in $\mathcal{Y}$ containing a vertex 
\end{description} 

We explain why one of the above three conditions must hold.
Suppose that $x$ is adjacent to (at least) two vertices $u, u'$ of $C_1$ and $xu, xu', xv$ have multiplicity one each, then $u',u,x,v$ form a paw.
  	Hence, there is a paw that contains $x$ and vertices of two connected components in $\mathcal{Y}$.
Hence, (i) holds true.

Suppose that $x$ is adjacent to $u$ alone in $C_1$ and $xu$ has multiplicity one.
 	Since $C_1$ is a clique of size at least $3$, there exist vertices $u_1,u_2$ such that $u_1, u_2, u$ form a triangle.
 	Thus, $u_1, u_2, u, x$ form a paw.
 	Hence, there is a paw containing $x$ and 3 vertices from a connected component of $\mathcal{Y}$.
 	Hence, (ii) is satisfied.
  
Finally, if $xu$ or $xv$ is a double-edge, then there is a double-edge that contains $x$ and a vertex $u \in C_1$.
Hence, condition (iii) is satisfied.}

{
As the above 3 cases are mutually exhaustive,
 therefore, for every $x \in X$, one of the following three things are satisfied}.
 
Furthermore, all these paws, and double-edges are pairwise vertex-disjoint.
As $Z$ is a solution to $(G, k)$, it must hit all such paws { and double-edges}.
Thus, if a vertex $x \notin Z$, then $Z$ must contain 
\begin{itemize}
	\item at least one vertex from a paw containing $x$ and vertices from $C_x^1 \cup C_x^2$, or
	\item at least one vertex from a paw containing $\{x\} \cup C_x$ such that $(x, C_x) \in M$, or
	\item at least one vertex from a double-edge $xu$ such that $u \in C_x$ with $(x, C_x) \in M$.
\end{itemize}  
Hence, $|X \setminus Z | \leq |Z \cap Y|$ and thereby, $|Z'| = |Z| - |Z \cap Y| + |X \setminus Z| \leq |Z | \leq k$.
We claim that $G - Z'$ is a collection of cliques and trees.
Targeting a contradiction, suppose not.
Then $G-Z'$ contains an obstruction for {\ctvd} containing a vertex in $Y$.
Let $O$ be the vertex set corresponding to such an obstruction.
Note that $O$ cannot be in $G - S$ as it is a collection of cliques and trees.
Thus $O$ contains a vertex in $S$.
But since $N_H(\mathcal{Y}) \subseteq X$, hence $N_G(Y) \subseteq X$.
As all the obstructions of {\ctvd} are connected graphs, an obstruction $O$ must contain vertices from in $S \setminus (X \cup Y)$.
In particular, $O$ must contain vertices from $S \setminus (Z \cup Y)$.
Hence, $O$ is present in $G - (Z' \cup Y)$.
As $O$ contains vertices from $Z' \setminus (X \cup Y)$, and $Z'$ contains all vertices of $Z$ except the vertices of $Y$, $O$ is  also present in $G - Z$.
This contradicts that $Z$ is a solution to $G$.
Thus, we conclude that $Z'$ is a solution to $(G, k)$, and as $X \subseteq Z'$, we have that $Z' \setminus X$ is a solution to $(G - X, k - |X|)$.
Thus, $(G - X, k - |X|)$ is a yes-instance.
\end{proof}

Thus, we have the following observation.

\begin{observation}
\label{lemma:clique-number-atleast-3}
After exhaustive applications of Reduction Rules \ref{rr1}- \ref{rr2}, $|\CC| \leq 8k$.
\end{observation}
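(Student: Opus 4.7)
The plan is to argue directly from the trigger condition of Reduction Rule \ref{rr2}. The rule fires whenever $|\CC| \geq 2|S|$, so once no further rule is applicable we automatically have $|\CC| < 2|S|$, and combining this with the standing bound $|S| \leq 4k$ (which we have from Proposition \ref{lemma-ctvd-approximation}, else we would already have returned a trivial no-instance) yields $|\CC| < 8k$.

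The only real point to verify is that Reduction Rule \ref{rr2} is actually \emph{applicable} whenever its numerical precondition $|\CC| \geq 2|S|$ holds; otherwise, one could have $|\CC| \geq 2|S|$ but with the $q$-Expansion Lemma failing to return a nonempty expansion. For this one invokes Reduction Rule \ref{rr1}: after it has been applied exhaustively, every connected component $C$ of $G - S$ has at least one neighbor in $S$, so in the auxiliary bipartite graph $H = (S, \CC)$ no vertex of $\CC$ is isolated. Combined with $|\CC| \geq 2|S|$, the hypotheses of Lemma \ref{lemma:expansion} with $q = 2$ are met, so the algorithm in Reduction Rule \ref{rr2} indeed produces a nonempty $X$ and the reduction can be executed.

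I would therefore organize the short proof as: (i) note that if $|S| > 4k$ we have already returned a trivial no-instance, so $|S| \leq 4k$; (ii) note that Reduction Rules \ref{rr1} and \ref{rr2} are safe (already shown) and that after exhaustively applying Reduction Rule \ref{rr1}, every $C \in \CC$ is non-isolated in $H$, so the hypothesis of Lemma \ref{lemma:expansion} is satisfied as soon as $|\CC| \geq 2|S|$; (iii) conclude that once neither rule is applicable, $|\CC| < 2|S| \leq 8k$.

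There is essentially no obstacle here: the observation is a bookkeeping consequence of the trigger condition of Reduction Rule \ref{rr2} and the fact that Reduction Rule \ref{rr1} removes exactly the obstruction (isolated components in $H$) that could prevent the Expansion Lemma from being applied. The substantive work was done in Lemma \ref{lemma:clique-expansion-rule-safeness}.
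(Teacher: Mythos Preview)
Your proposal is correct and matches the paper's reasoning: the paper treats this observation as an immediate consequence of the trigger condition $|\CC| \geq 2|S|$ of Reduction Rule~\ref{rr2} together with $|S| \leq 4k$, having already noted just before stating the rule that Reduction Rule~\ref{rr1} guarantees no component in $\CC$ is isolated in $H$. Your write-up merely makes explicit the check that the Expansion Lemma's hypotheses are met, which the paper leaves implicit.
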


We now give one of the most crucial reduction rules that gives us an upper bound the size of every clique in $G[V_1]$.
First, we perform the following marking scheme for each of the cliques in $G[V_1]$.
{ For every connected component $K$ of $G[V_1]$, we use the following procedure {\sf Mark-Clique($K$)} that works as follows.}



\paragraph{{Procedure {\sf Mark-Clique($K$)}}:}
{
For every non-empty subset $Z \in {{S}\choose{\leq 3}}$, for every function $f: Z \rightarrow \{0,1\}$, let $K_{f, Z}$ be those vertices $v \in K$ that satisfies the following properties:
\begin{itemize}
\item if $f(z) =1$, then $v$ is adjacent to $z$ such that $vz$ is not a double-edge. 
\item if $f(z) =0$, then $v$ is not adjacent to $z$.
\end{itemize}}

{ In the first step, for every $v \in S$, we mark every vertex $u \in K$ such that there is a double-edge between $u$ and $v$.
Subsequently, for every nonempty set $Z \in {{S}\choose{\leq 3}}$ and for every $f: Z \rightarrow \{0, 1\}$, we arbitrarily mark $\min\{|K_{f, Z}|, k+4\}$ vertices from $K_{f, Z}$.
This completes the description of marking procedure.}

\medskip

As Reduction Rule \ref{rule:high-degree-double-edge} is not applicable, for every $v \in S$, there are at most $k$ vertices from $K$ that are adjacent to $v$ with a double-edge.
Hence, in the first part, we have marked at most $4k^2$ vertices from $K$.
Subsequently, we have marked at most ${\eps}^*(k) = (2^3 {4k \choose 3} + 2^2 {4k \choose 2} + 2{{4k}\choose{1}})(k+4)$ vertices in $K$.
We set ${\eps}(k) = {\eps}^*(k) + 4k^2 = (2^3 {4k \choose 3} + 2^2 {4k \choose 2} + 2{{4k}\choose{1}})(k+4) + 4k^2$ which is $\OO(k^3)$.
Let $v \in K$ be a vertex that is not marked by the above procedure {\sf Mark-Clique($K$)}.
The following set of lemmas illustrates that $v$ is an irrelevant vertex of $G$.
\begin{figure}[t]
    \centering
    \includegraphics[scale=0.35]{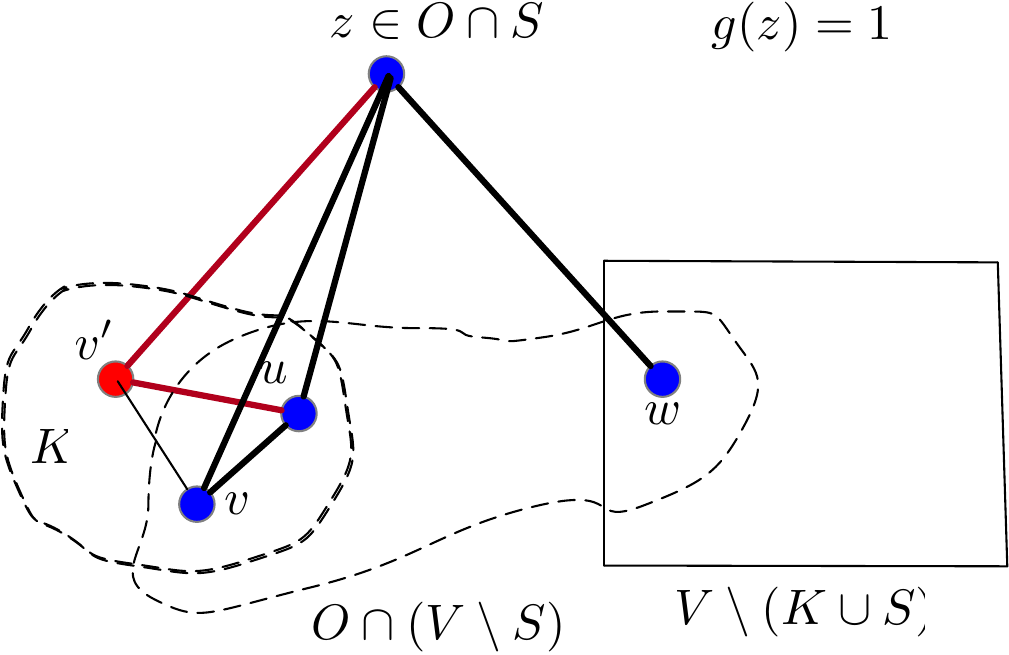}
    \caption{Illustration for obstruction $O = \{z, u, v, w\}$ where we find an isomorphic obstruction $O = \{z, u, v', w\}$. Here, $v'$ marked in the procedure {\sf Mark-Clique($K$)} with respect to $O_S = O \cap S$ and $g$ and the unmarked vertex $v$.}
    \label{fig:clique-tree-unmarked-vertex_new}
\end{figure}

\begin{lemma}
\label{lemma:unmarked-fact-1clique-marking-oneshot}
Let $S$ be a (clique, tree)-deletion set of at most $4k$ vertices and $K$ be a connected component of $G - S$ that is a clique.
Moreover, let $v \in K$ be a vertex that is not marked by the procedure {\sf Mark-Clique($K$)} and $X \subseteq V(G) \setminus \{v\}$ be a set of at most $k$ vertices.
If $G - X$ is a simple graph that has a vertex subset $O$ and $G[O]$ is isomorphic to a $C_4$, or a diamond or a paw, then $G - (X \cup \{v\})$ also contains a $C_4$, or a diamond, or a paw as an induced subgraph.
\end{lemma}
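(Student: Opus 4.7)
The plan is to swap the unmarked vertex $v$ out of any obstruction in which it appears for a suitable ``twin'' vertex $v' \in K$. If $v \notin O$, then $O$ is still an induced subgraph of $G - (X \cup \{v\})$ and we are done, so I would assume $v \in O$ for the rest of the argument.

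First I would analyze the structure of $O$. Set $O_S = O \cap S$ and $O_K = O \cap K$. Since $v \in O_K$ and $|O| = 4$, we have $|O_S| \le 3$. I would next rule out $O_S = \emptyset$: because $K$ is a connected component of $G-S$, no vertex of $K$ has a neighbor in $V(G) \setminus (K \cup S)$, and $G[O_K]$ is a subgraph of the clique $K$, hence complete. So if $O_S = \emptyset$, then $G[O]$ would either be complete on at most four vertices or disconnected, whereas $C_4$, the paw, and the diamond are all connected and non-complete, a contradiction. Hence $1 \le |O_S| \le 3$, so $O_S$ is one of the nonempty subsets of $S$ considered by the marking procedure.

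Next, I would exploit the unmarkedness of $v$. Define $f \colon O_S \to \{0,1\}$ by $f(z) = 1$ iff $zv \in E(G)$. The vertex $v$ itself matches the pattern $(O_S, f)$, yet is not marked by {\sf Mark-Clique-$K$} for this pair; this forces $\min\{q, k+4\} = k+4$, so there are at least $k+4$ vertices of $K \setminus \{v\}$ marked for $(O_S, f)$, each sharing $v$'s adjacency pattern to $O_S$. Removing the at most $k$ vertices of $X$ and the at most three vertices of $O \setminus \{v\}$ still leaves some $v' \in K$ with $v' \notin X \cup O$.

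Finally, I would verify that $(O \setminus \{v\}) \cup \{v'\}$ induces the same graph as $O$. The only adjacencies that could change are those between $v$ (respectively $v'$) and the three vertices of $O \setminus \{v\}$: to $O_S$ they agree by the choice of $v'$; to $O_K \setminus \{v\}$ both $v$ and $v'$ are adjacent because $K$ is a clique; and to $O \setminus (K \cup S)$ both are non-adjacent because $K$ is a component of $G-S$. Hence $G[(O \setminus \{v\}) \cup \{v'\}] \cong G[O]$, and this obstruction lies in $G - (X \cup \{v\})$, as desired. The main obstacle is the tight calibration of the threshold $k+4$ together with the case analysis showing $O_S \neq \emptyset$: both are essential so that the accounting $(k+4) - k - 3 \ge 1$ actually yields a usable replacement, and both rely crucially on $K$ being a clique component of $G - S$.
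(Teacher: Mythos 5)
Your proof is correct and takes essentially the same route as the paper's: define the adjacency pattern $f$ of $v$ to $O_S$, use the $k+4$ vertices marked for $(O_S,f)$ to find a replacement $v' \notin X \cup O$, and check that swapping $v$ for $v'$ preserves the induced subgraph via the trichotomy $O_S$ / $O_K$ / $O \setminus (S \cup K)$. The only difference is that you explicitly rule out $O_S = \emptyset$ (using that $K$ is a clique component of $G-S$, so such an $O$ would be complete or disconnected), a case the paper's proof leaves implicit.
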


\begin{proof}
Targeting a contradiction, suppose that $G - X$ is a simple graph that has a $C_4$, or a diamond, or a paw the vertex set of which is $O$ but $G - (X \cup \{v\})$ has none of paw, diamond, or $C_4$ as induced subgraphs.
Note  that $v \in O$ (hence $v \in O \setminus S$) as otherwise $X$ is not a (clique, tree)-deletion set for $G - \{v\}$.

Let $R = V(G) \setminus (S \cup K)$. Also, let $O_S = O \cap S, O_K = O \cap K$ and $O_R = O \cap R$. We have, $O = O_S \uplus O_K \uplus O_R$.
{ Observe that $O$ cannot be contained in $G - S$.
Hence, $O$ must contain a vertex from $S$ and $O$ must contain a vertex from $K$, implying that $O_S, O_K \neq \emptyset$.
Hence, $|O_S| \leq 3$.}
Let us define a function $g : O_S \rightarrow \{0,1\}$ where for each $z \in O_S$, $g(z)=1$ if $z$ is adjacent to $v$ and $g(z)=0$ if $z$ is not adjacent to $v$. 
In addition, $v \in O_K$ and hence, $v \in K$.
In particular, $v$ is not a marked vertex from $K$ with respect to $g$ and $O_S$.
Therefore, $|K_{g, O_S}| > k + 4$ and the procedure {\sf Mark-Clique($K$)} has marked $k+4$ vertices in $K$ corresponding to the set $O_S$ and the function $g: O_S \rightarrow \{0,1\}$.

Since $|X| \leq k$ and $|O \setminus O_S| \leq 3$, there is at least one vertex $v' \in K$ with $v' \notin O$ that is marked corresponding to $O_S$ and $g$.
Let $O' = (O \setminus \{v\}) \cup \{v'\}$. Also, let $H = G[O]$ and $H' = G[O']$ (see Figure \ref{fig:clique-tree-unmarked-vertex_new} for an example).
We have the following claim.

\noindent
{\bf Claim:}
$H'$ is isomorphic to $H$.

\noindent
{\bf Proof of Claim:} We define a mapping $h: O \rightarrow O'$ where $h(u) = u$ if $u \in O \setminus \{v\}$ and $h(v) = v'$. We now claim that $h$ is an isomorphism. For this, we need to prove that $u_1u_2 \in E(H)$ if and only if $h(u_1)h(u_2) \in E(H')$ for all $u_1, u_2 \in O$.
We have the following cases.

\begin{description}
	\item[Case 1:] - $u_1u_2 \in E(H)$ where $u_1, u_2 \in O \setminus \{v\}$.
Since, $u_1, u_2 \in O \setminus \{v\}$, $h(u_1) = u_1$ and $h(u_2) = u_2$. Thus $h(u_1)h(u_2) \in E(H')$.

	\item[Case 2:] - $u_1 u_2 \notin E(H)$ where $u_1, u_2 \in O \setminus \{v\}$.
	Same as before, since $h(u_1) = u_1$ and $h(u_2) = u_2$, it follows that $h(u_1)h(u_2) \notin E(H')$.
	
	\item[Case 3:] - $uv \in E(H)$ where $u \in O_S$.
	We have $h(u) = u$ and $h(v) = v'$. Since $g(u)=1$ (as $u$ is adjacent to $v$), and $v'$ is marked with respect to $O_S$ and $g$, $u$ is also adjacent to $v'$.
	Hence, $h(u) h(v) \in E(H')$.
	
	\item[Case 4:] - $u v \notin E(H)$ where $u \in O_S$.
	We have $h(u) = u$ and $h(v) = v'$. Since $g(u)=0$ (as $u$ is not adjacent to $v$), and $v'$ is marked with respect to $O_S$ and $g$, $u$ is also not adjacent to $v'$. Thus $h(u) h(v) \notin E(H')$.
	
	\item[Case 5:] - $uv \in E(H)$ where $u \in O_K$.
	We have $h(u) = u$ and $h(v) = v'$. Since $u, v, v' \in K$, $u$ is adjacent to $v'$ as well.	Thus $h(u)h(v) \in E(H')$.
	
	\item[Case 6:] - $u v \notin E(H)$ where $u \in O_K$.
	Since $u, v \in K$, $u$ is adjacent to $v$. Thus, this case does not exist.
	
	\item[Case 7:] - $u v \in E(H)$ { where $u \in O_R$}.
	Since $u \in R$ and $v \in K$, $u$ is not adjacent to $v$ as $u$ is in a different component of $G-S$ than $K$. Thus, this case cannot occur.
	
	\item[Case 8:] - $u v \notin E(H)$ where $u \in O_R$.
	We have $h(u) = u$ and $h(v) = v'$. Since $v, v' \in K$ and $u$ is a component of $G-S$ that is different from $K$, $u$ is not adjacent to $v'$ as well.	Thus $h(u) h(v) \notin E(H')$.
\end{description}

We have exhaustively gone over the cases of all possible pair of vertices in $H$. Thus, we have proved the claim that $H'$ is isomorphic to $H$.

Thus, the obstruction $O'$ is isomorphic to $O$ and $O'$ is present in the graph $G - (X \cup \{v\})$. This is a contradiction to the initial assumption that $G - (X \cup \{v\})$ has no paw, no diamond, no $C_4$ as induced subgraphs.
\end{proof}

Our previous lemma has illustrated that if $G - X$ has a paw or a diamond or a $C_4$ as an induced subgraph, then $G - (X \cup \{v\})$ also has a paw or a diamond or a $C_4$ respectively.
We will now illustrate and prove an analogous statement when $G - X$ has an induced cycle of length larger than 4. 
We begin with the following observation.

\begin{observation}\label{observation:cycle-5-above-hole-diamond}
Let $C = v - u-u_1-u_2- \dotsc - u'-v$ be a cycle of length at least 5 in $G$ where the path $P = u-u_1-u_2- \dotsc - u'$ is an induced path in $G$.
Then there exists a cycle of length at least $4$ or a diamond as an  induced subgraph in $G$.
\end{observation}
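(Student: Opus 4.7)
The plan is to look at how $v$ attaches to the induced path $P = u\text{-}u_1\text{-}\cdots\text{-}u'$ and do a small case analysis on the neighbors of $v$ on $P$. Enumerate the neighbors of $v$ that lie on $P$ in the order they appear along $P$: call them $w_0, w_1, \ldots, w_t$ with $w_0 = u$ and $w_t = u'$ (we know $t \geq 1$ because both endpoints of $P$ are neighbors of $v$, by virtue of $C$ being a cycle).

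\textbf{Case 1: $t = 1$.} Then $v$ has no neighbor among the internal vertices of $P$, which combined with the hypothesis that $P$ itself is induced means that $C = v\text{-}u\text{-}u_1\text{-}\cdots\text{-}u'\text{-}v$ is already an induced cycle in $G$, of length at least $5$. We are done.

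\textbf{Case 2: $t \geq 2$, and some pair $w_i, w_{i+1}$ is at distance $\geq 2$ along $P$.} Then the cycle $C' = v\text{-}w_i\text{-}\cdots\text{-}w_{i+1}\text{-}v$ consisting of the $P$-subpath between $w_i$ and $w_{i+1}$ together with the two edges from $v$ has length at least $4$. Moreover, $C'$ is induced: edges inside the $w_i$-to-$w_{i+1}$ subpath are fine because $P$ is induced, and $v$ has no other neighbor on that subpath by the choice of $w_i$ and $w_{i+1}$ as consecutive neighbors. This produces the required induced cycle.

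\textbf{Case 3: $t \geq 2$ and every pair $w_i, w_{i+1}$ is at distance exactly $1$ on $P$.} Then $w_0, w_1, \ldots, w_t$ are pairwise consecutive on $P$, and since $w_0 = u$, $w_t = u'$ are the endpoints, this forces $\{w_0,\ldots,w_t\} = V(P)$; i.e., $v$ is adjacent to every vertex of $P$. Because $|C| \geq 5$ implies $|P| \geq 4$, we may consider the four vertices $v, u, u_1, u_2$. The edges $vu, vu_1, vu_2, uu_1, u_1u_2$ are all present, while $uu_2 \notin E(G)$ because $P$ is induced. Hence $\{v, u, u_1, u_2\}$ induces a diamond (triangle $v, u, u_1$ together with $u_2$ adjacent to exactly $v$ and $u_1$), which is the required induced subgraph.

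The only subtlety I expect is making the ``consecutive neighbors of $v$ on $P$'' argument in Case 2 precise, and in Case 3 correctly reading off the diamond from the four vertices $v, u, u_1, u_2$ using the fact that $P$ is induced to rule out the chord $uu_2$; both are straightforward once the case split is laid out as above.
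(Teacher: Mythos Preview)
Your proof is correct and follows essentially the same approach as the paper's. The paper picks a non-neighbor $w$ of $v$ on $P$ and walks in both directions to the nearest neighbors $p_w, s_w$ to extract the induced cycle, while you list all neighbors of $v$ on $P$ and look for a consecutive pair with a gap; these are dual descriptions of the same idea, and the diamond case (your Case~3, the paper's ``$v$ is adjacent to all of $C$'' case) is handled identically.
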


\begin{figure}[t]
\centering
	\includegraphics[scale=0.3]{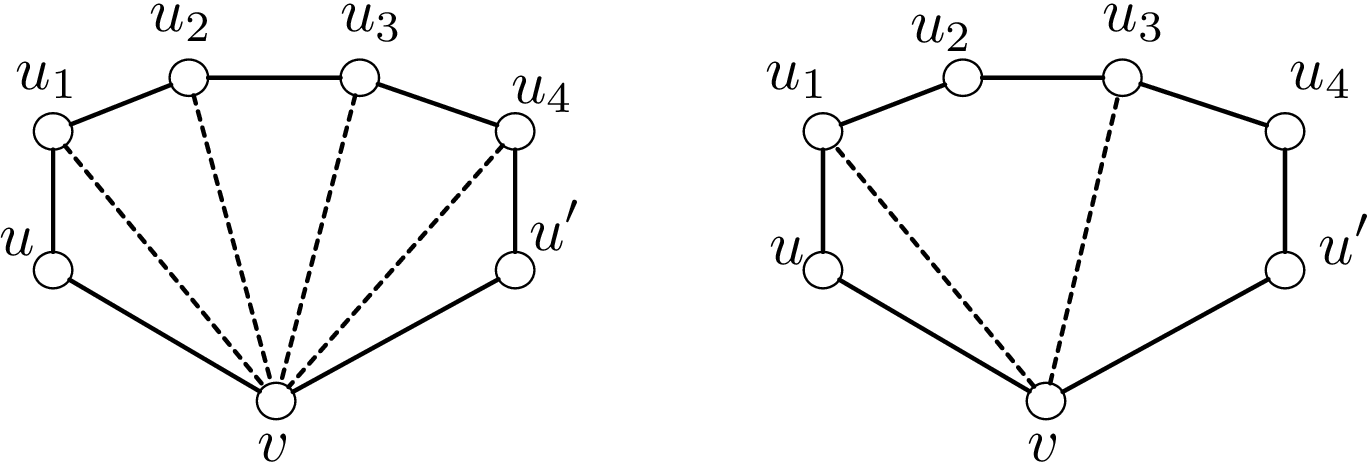}
	\caption{An illustration of Observation \ref{observation:cycle-5-above-hole-diamond} proof. Figure on the left side illustrates Case-(i) of the proof. Figure on the right side illustrates an example for Case-(ii) of the proof. If we choose $w = u_2$, then $p_w = u_1$ and $s_w = u_3$. Similarly, if $w = u_4$, then $p_w = u_3$ and $s_w = u'$.}
\label{fig:long-cycle-clique-part}
\end{figure}

\begin{proof}
Note that the only edges of $C$ in $G$ other than the edges of the cycle $C$ itself are those from $v$ to the other vertices in $C$.
In other words, if there is an edge between two vertices of $C$ that is not an edge of the cycle $C$ itself is a chord in $C$.
As $C$ has length at least 5, $u_1$ and $u_2$ exist in $C$ such that $u_1 \neq u, u'$ and $u_2 \neq u, u'$.
We have two cases.
\begin{description}
	\item[Case-(i):] This case occurs when $v$ is adjacent to all the vertices of $C$.
	Then notice that the graph induced by the vertices $v$ and the first three vertices $u, u_1, u_2$ is a diamond because $u - u_1 - u_2$ is an induced path.
	Therefore, a diamond exists in $G$ as an induced subgraph.
	We refer to the left part of Figure \ref{fig:long-cycle-clique-part} for an illustration of this case.

	\item[Case-(ii):] This case occurs when there is $w \in C$ such that $v$ is not adjacent to $w$.
	Note that $w \neq u$ and $w \neq u'$.
	 Traveling from $w$ in the left direction via $P$, let $p_w$ be the first vertex that is adjacent to $v$. Such a vertex must exist as $v$ is adjacent to $u$. 
	 Similarly, traveling from $w$ in the right direction via $P$, let $s_w$ be the first vertex that is adjacent to $v$. Such a vertex must exist as $v$ is adjacent to $u'$. 
	 Note that $p_w \neq s_w$. Let us look at the cycle $C' = v-p_w- \dotsc -w- \dotsc -s_w-v$. 
	 Since $P$ is an induced path in $G$ and $v$ is not adjacent to any  vertex in $C'$ other than $p_w$ and $s_w$, $C'$ is an induced cycle of $G$.
	 Since the vertices $v, p_w, w$ and $s_w$ are all distinct,  $C'$ of length at least $4$.
	 We refer to the right part of Figure \ref{fig:long-cycle-clique-part} for an illustration of this case.
\end{description}
As the above cases are mutually exhaustive, this concludes the proof.
\end{proof}

Using the above observation, we can prove the following lemma.

\begin{lemma}
\label{lemma:unmarked-fact-1clique-marking-2}
Let $S$ be a (clique, tree)-deletion set of at most $4k$ vertices and $K$ be a connected component of $G - S$ that is a clique.
Moreover, let $v \in K$ be a vertex that is not marked by the procedure {\sf Mark-Clique($K$)} and $X \subseteq V(G) \setminus \{v\}$ be a set of at most $k$ vertices.
If $G - X$ has an induced cycle of length at least 5, then there exists a cycle of length at least 4 or a diamond as an induced subgraph in $G - (X \cup \{v\})$.
\end{lemma}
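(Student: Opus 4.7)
The plan is: if $v \notin V(C)$, the cycle $C$ already sits inside $G-(X \cup \{v\})$ and is itself an induced cycle of length at least $5 \geq 4$, so I focus on the case $v \in V(C)$. In that case I swap $v$ out for a carefully chosen marked vertex $v' \in K$ to produce a (possibly non-induced) cycle $C'$ of the same length as $C$ that avoids $v$ and $X$, and then I invoke Observation~\ref{observation:cycle-5-above-hole-diamond} on $C'$ to extract either an induced cycle of length at least $4$ or an induced diamond inside $G - (X \cup \{v\})$.

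Assume $v \in V(C)$ and let $a, b$ be the two neighbors of $v$ on $C$. The key structural observation is that $|V(C) \cap K| \leq 2$ and at most one of $a, b$ lies in $K$: since $K$ is a clique, any $w \in V(C) \cap K$ distinct from $v$ is adjacent to $v$ in $G$, hence must be one of the two $C$-neighbors of $v$; and if both $a, b \in K$ then the edge $ab$ would be a chord of the induced cycle $C$, contradicting $|V(C)| \geq 5$. Writing $C = v - a - c_1 - c_2 - \cdots - c_t - b - v$, this gives $t = |V(C)| - 3 \geq 2$, and the internal vertices $c_1, \ldots, c_t$ of the induced path $P = a - c_1 - \cdots - c_t - b$ lie outside $K$.

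Next I invoke the marking procedure {\sf Mark-Clique-$K$} with $O_S := \{a, b\} \cap S$ and the constant-$1$ function $g : O_S \to \{0,1\}$. Since $v$ is adjacent to both $a$ and $b$, $v$ satisfies the marking constraints for $(O_S, g)$; because $v$ is unmarked, the procedure must have marked $k+4$ distinct vertices of $K \setminus \{v\}$, each adjacent to every element of $O_S$. After discarding the at most $|X| \leq k$ of them lying in $X$ and the at most one lying in $V(C)$ (only $a$ or $b$ can possibly be in $K$), at least three candidates remain; I pick any such $v'$. By the marking, $v'$ is adjacent to every vertex of $\{a,b\} \cap S$, and since $K$ is a clique $v'$ is adjacent to every vertex of $\{a,b\} \cap K$, so $v'a, v'b \in E(G)$. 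Hence $C' := v' - a - c_1 - \cdots - c_t - b - v'$ is a cycle in $G$ of length $|V(C)| \geq 5$ whose interior path $P$ is induced in $G$ (being induced in $G-X$), and $V(C') \subseteq V(G) \setminus (X \cup \{v\})$.

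Finally, Observation~\ref{observation:cycle-5-above-hole-diamond} applied to $C'$ with $v'$ playing the role of the observation's $v$ yields an induced cycle of length at least $4$ or an induced diamond whose vertex set is contained in $V(C')$; since $V(C') \cap (X \cup \{v\}) = \emptyset$, this is the required obstruction inside $G - (X \cup \{v\})$. The main obstacle in executing the plan is ensuring that a usable replacement $v'$ truly survives after deleting $X$ and $V(C)$ from the marking pool of $K$; this rests precisely on the structural fact that at most one $C$-neighbor of $v$ lies in $K$, so only one marked candidate can be ``wasted'' by $V(C)$, leaving plenty of room against the budget of $|X| \leq k$.
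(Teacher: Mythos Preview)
Your proof is correct and follows the same strategy as the paper: swap $v$ for a marked vertex $v' \in K$ adjacent to both $C$-neighbours of $v$, obtain a cycle $C'$ of length $\ge 5$ whose remaining path is induced, and invoke Observation~\ref{observation:cycle-5-above-hole-diamond}. Your execution is in fact a bit cleaner than the paper's: where the paper splits into cases $|C\cap K|\in\{1,2\}$ and in the second case uses the marking pair $(\{z_1,z_2\},(1,0))$ to force $v'\neq x$, you treat both cases uniformly by marking with the constant-$1$ function on $\{a,b\}\cap S$ and simply discarding the at most one marked candidate that can lie in $V(C)$.
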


\begin{proof}
{
Suppose that the premise of the statement is true but for the sake of contradiction, we assume that $G - (X \cup \{v\})$ does not have  cycles of length at least 4 or diamonds as induced subgraphs}.
Since $v$ is the only vertex that is in $G$ but not in $G - \{v\}$, it follows that there is an induced cycle $C$ of length at least 5 in $G - X$ such that $v \in C$.
Note that $C$ has at most two vertices from $K$ including $v$ as $K$ is a clique.
Furthermore, it must have two non-adjacent vertices from $S$ as otherwise $C$ contains a triangle or an induced $C_4$, contradicting that $C$ is an induced cycle of length at least 5. 
There are two cases. 
\begin{description}
	\item[Case (i):] The first case is $|C \cap K| = 1$ and let $C \cap K = \{v\}$.
	{ Then, $v$ has two neighbors that are in $S \cap C$.
	Consider $z_1, z_2 \in S \cap C$ are adjacent to $v \in C$.}
	Let us define a function $f: \{z_1, z_2\} \rightarrow \{0,1\}$ with $f(z_1) = 1$ and $f(z_2) = 1$.
	{
	For the set  $Z = \{z_1, z_2\}$ and the function $f$, the vertex $v \in K_{f, Z}$ but $v$ is unmarked by the procedure {\sf Mark-Clique($K$)}.
	 Then, $|K_{f, Z}| > k+4$.
	Hence, there are $k+4$ vertices in $K_{f, Z}$ that are marked by the procedure {\sf Mark-Clique($K$)}.
	All the marked vertices are in $K \setminus \{v\}$ out of which at most $k$ vertices are in $X$.
	Hence, there is $v' \in K_{f, Z} \setminus X$ such that $v'$ is adjacent to both $z_1$ and $z_2$ and is marked by the procedure.}
	Let us look at the cycle $C' = (C \setminus \{v\}) \cup \{v'\}$.
	Observe that $C'$ is a cycle in $G - (X \cup \{v\})$ as a subgraph.
	Note that $C'$ is a cycle where the path in $C'$ from $z_1$ to $z_2$ that excludes $v'$ is an induced path in $G-(X \cup \{v\})$. 
	There could be edges from $v'$ to other vertices of $C'$. 
	By Observation \ref{observation:cycle-5-above-hole-diamond}, there exists a vertex subset $J'$ in $G - (X \cup \{v\})$ that induces an induced cycle of length at least $4$ or a diamond.
	Since $J' \cap (X \cup \{v\}) = \emptyset$ and induces a cycle of length at least 4 or a diamond, this contradicts our initial assumption that $G - (X \cup \{v\})$ does not have  cycles of length at least 4 and diamonds as induced subgraphs.
	
	\item[Case (ii):] The second and last case is $|C \cap K| = 2$.
	Let $v, x \in C \cap K$.
	Observe that $v$ and $x$ are two consecutive vertices in $C$.
	Since $vx \in E(G)$, it must be that $v$ is adjacent to $z_1 \in S \cap C$ and $x$ is adjacent to $z_2 \in S \cap C$.
	As $C$ is an induced cycle of length at least 5, it must be that $z_1$ is not adjacent to $x$ and $z_2$ is not adjacent to $v$.
	
	{
	Let us define a function $f: \{z_1, z_2\} \rightarrow \{0,1\}$ with $f(z_1) = 1$ and $f(z_2) = 0$.
	For the set $Z = \{z_1, z_2\}$ and the function $f$, the vertex $v \in K_{f, Z}$.
	But $v$ is unmarked by the procedure {\sf Mark-Clique($K$)}. 
	Hence, there are $k+4$ vertices in $K_{f, Z}$ that are marked by the procedure.
	All the marked vertices are in $K \setminus \{v\}$ out of which at most $k$ vertices are in $X$.
	Hence, there is $v' \in K_{f, Z} \setminus X$ such that $v'$ is not marked by the procedure}.
	
	We replace $v$ in $C$ by $v'$ to get a new cycle $C'$ that has the same number of vertices as $C$ (see Figure \ref{fig:clique-K-C5-case} for an illustration).
	Note that $C'$ is a cycle as a subgraph in $G - (X \cup \{v\})$ where the path from $z_1$ to $x$ that excludes $v'$ is an induced path in $G- (X \cup \{v\})$ and $|C'| = |C|$.
	There could be edges from $v'$ to other vertices of $C'$.
	By Observation \ref{observation:cycle-5-above-hole-diamond}, there exists a vertex subset $J'$ that induces a cycle of length at least $4$ or a diamond as  in $G - (X \cup \{v\})$.
	This contradicts our initial assumption that $G - (X \cup \{v\})$ does not have  cycles of length at least 4 and diamonds as induced subgraphs.
\end{description}
Since the above cases are mutually exhaustive, this completes the proof.
\end{proof}

\begin{figure}[t]
\centering
	\includegraphics[scale=0.3]{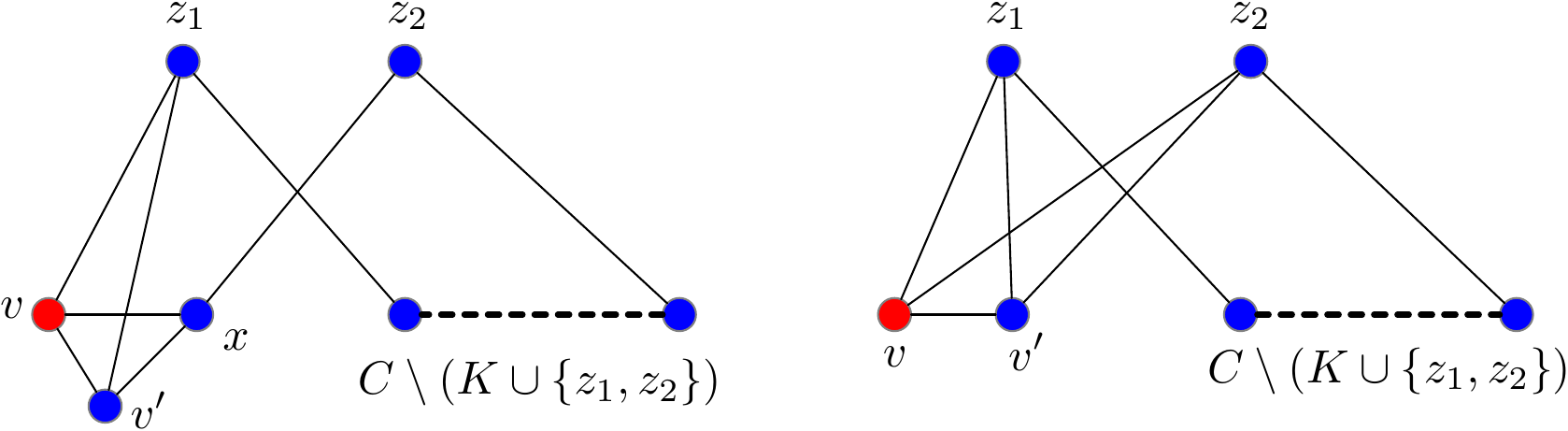}	
	\caption{An illustration of $C_5$.}
	\label{fig:clique-K-C5-case}
\end{figure}

Consider a connected component $K$ of $G - S$ such that $S$ is a (clique, tree)-deletion set of $G$ with at most $4k$ vertices and $K$ is a clique.
Lemma \ref{lemma:unmarked-fact-1clique-marking-oneshot} and Lemma \ref{lemma:unmarked-fact-1clique-marking-2} illustrate { that} if a vertex $v \in K$ is not marked by the procedure {\sf Mark-Clique($K$)}, then deleting $v$ from the graph is safe.
As a consequence of this, we have the following reduction rule the safeness of which follows from the above two lemmas.

\begin{reduction rule}
\label{rr6}
Let $K$ be a connected component in $G[V_1]$.
If $v \in K$ is an unmarked vertex after invoking the procedure {\sf Mark-Clique($K$)}, then remove $v$ from $G$.
The new instance is $(G-v,k)$.
\end{reduction rule}

\begin{lemma}
\label{lemma:rr6-safeness}
Reduction Rule \ref{rr6} is safe.
\end{lemma}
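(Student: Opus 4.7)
The plan is to prove the two directions of equivalence separately, leveraging the two preceding lemmas for the nontrivial direction. For the forward direction, if $(G,k)$ is a yes-instance with solution $Y$, then by Observation \ref{observation:subgraph-yes-instance}, $Y \setminus \{v\}$ is a solution to $(G-v, k)$, so $(G-v, k)$ is also a yes-instance.

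For the backward direction, suppose $X \subseteq V(G) \setminus \{v\}$ is a (clique, tree)-deletion set of $(G-v)$ with $|X| \leq k$. I claim $X$ is also a solution for $(G, k)$. Assume for contradiction that $G - X$ has a connected component that is neither a clique nor a tree. By Lemma \ref{ctvd-forbidden-char}, $G - X$ contains an induced obstruction $O$, which is either a paw, a diamond, or a cycle $C_i$ with $i \geq 4$. Since $X$ is a solution for $G - v$, the obstruction $O$ cannot be present in $(G-v) - X$, so we must have $v \in O$.

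Now I would split into two cases based on the type of $O$. If $O$ is a paw, a diamond, or $C_4$, then I invoke Lemma \ref{lemma:unmarked-fact-1clique-marking-oneshot} with this $X$ (of size at most $k$), which guarantees that $G - (X \cup \{v\})$ also contains a paw, a diamond, or $C_4$ as an induced subgraph. If $O$ is an induced cycle of length at least $5$, then I invoke Lemma \ref{lemma:unmarked-fact-1clique-marking-2}, which guarantees that $G - (X \cup \{v\})$ contains either an induced cycle of length at least $4$ or an induced diamond. In either case, $G - (X \cup \{v\})$ contains an obstruction for {\ctvd}. However, $G - (X \cup \{v\})$ is exactly $(G - v) - X$, which must be a disjoint union of cliques and trees by the assumption on $X$. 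This is a contradiction, and thus $X$ must be a (clique, tree)-deletion set of $G$.

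The main work is already encapsulated in the two preceding lemmas, so the proof is essentially a short case analysis combining them with Lemma \ref{ctvd-forbidden-char}. No step appears to pose a genuine obstacle; the only subtlety is ensuring that the union $X \cup \{v\}$ respects the hypothesis $v \notin X$ required by both lemmas, which holds automatically since $X \subseteq V(G-v)$.
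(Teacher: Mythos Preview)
Your proposal is correct and takes essentially the same approach as the paper: both directions are handled identically, with the backward direction reducing to a case split on the obstruction type and invoking Lemma~\ref{lemma:unmarked-fact-1clique-marking-oneshot} for paw/diamond/$C_4$ and Lemma~\ref{lemma:unmarked-fact-1clique-marking-2} for longer induced cycles. Your explicit remark that $v \notin X$ is automatic because $X \subseteq V(G-v)$ is a nice clarification the paper leaves implicit.
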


\begin{proof}
The forward direction ($\Rightarrow$) is trivial as if $X$ is a solution of size at most $k$ in $G$, then by Observation \ref{observation:subgraph-yes-instance}, $X \setminus \{v\}$ is a solution of size at most $k$ in $G-v$. 
For the backward direction ($\Leftarrow$), let $X$ be a solution of size at most $k$ in $G-v$. Targeting a contradiction, suppose $X$ is not a solution in $G$.
Due to our procedure {\sf Mark-Clique($K$)}, every $w \in K$ is marked when there is a double edge incident to $w$.
As $v$ is an unmarked vertex, there is no double edge incident to $v$ in $G$.
Therefore, $G - X$ is a simple graph.
{ As $X$ is not a solution and $G - X$ is a simple graph, there exists an obstruction $O$ of {\ctvd} in $G-X$ containing $v$ that is a diamond, or a paw, or $C_i$ where $i \geq 4$ due to Lemma \ref{ctvd-forbidden-char}}.
If $O$ is isomorphic to a $C_4$, or a diamond, or a paw, then by Lemma \ref{lemma:unmarked-fact-1clique-marking-oneshot}, $G-(X \cup \{v\})$ also contains a $C_4$, or a diamond, or a paw as induced subgraph.
It contradicts Lemma \ref{ctvd-forbidden-char} that $X$ is a solution to $G-\{v\}$.
Else, $O$ is isomorphic to $C_i$ where $i \geq 5$.
By Lemma \ref{lemma:unmarked-fact-1clique-marking-2}, $G-(X \cup \{v\})$ contains a $C_j$, where $j \geq 4$, or a diamond as induced subgraph.
This contradicts Lemma \ref{ctvd-forbidden-char} as $X$ is a solution to $G-\{v\}$.
Hence $X$ is a solution to $G$.
\end{proof}

We have the following lemma that bounds $|V_1|$, i.e. the number of vertices that are part of cliques of size at least 3 in $G - S$.

\begin{lemma}
\label{lemma:cliques-bound}
Let $G$ be the graph obtained after exhaustive application of Reduction Rules \ref{rr1} to \ref{rr6}.
Then $|V_1| \leq 8k(k+4)(8{{4k}\choose{3}} + 4{{4k}\choose{2}} + 2{{4k}\choose{1}}) + 32k^3$.
\end{lemma}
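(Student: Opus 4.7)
The plan is to combine two ingredients that have essentially already been proved in this section: a bound on the number of clique components in $G[V_1]$, and a per-component size bound coming from the marking procedure {\sf Mark-Clique-$K$} underlying Reduction Rule \ref{rr6}. Multiplying the two bounds will give the desired estimate on $|V_1|$.

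First, I will use the bound on the number of clique components. After exhaustive application of Reduction Rules \ref{rr1} and \ref{rr2}, Observation \ref{lemma:clique-number-atleast-3} guarantees $|\CC| \leq 2|S| \leq 8k$. So $G[V_1]$ consists of at most $8k$ clique components, each of size at least $3$. It therefore suffices to show that after exhaustively applying Reduction Rule \ref{rr6}, every such clique $K$ contains at most $\eps(k) := (k+4)\bigl(8\binom{4k}{3} + 4\binom{4k}{2} + 2\binom{4k}{1}\bigr)$ vertices, since then $|V_1| \leq 8k\cdot \eps(k)$, which is exactly the bound claimed.

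The per-clique bound is a direct counting argument. Once Reduction Rule \ref{rr6} is no longer applicable, every vertex of $K$ must be marked by the procedure {\sf Mark-Clique-$K$}, because otherwise Lemmas \ref{lemma:unmarked-fact-1clique-marking-oneshot} and \ref{lemma:unmarked-fact-1clique-marking-2} allow us to delete it safely. The procedure iterates over every nonempty subset $Z \subseteq S$ with $|Z|\leq 3$ and every function $f:Z\to\{0,1\}$, and marks at most $k+4$ vertices for each such pair $(Z,f)$. Since $|S|\leq 4k$, the number of candidate subsets of size $i\in\{1,2,3\}$ is $\binom{4k}{i}$ and the number of functions on such a subset is $2^i$. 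Summing, the total number of marked vertices in $K$ is at most
\[
(k+4)\sum_{i=1}^{3} 2^i \binom{4k}{i} \;=\; (k+4)\Bigl(8\binom{4k}{3}+4\binom{4k}{2}+2\binom{4k}{1}\Bigr) \;=\; \eps(k).
\]

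There is no substantive obstacle here: the two nontrivial pieces, namely the Expansion-Lemma-based bound on $|\CC|$ and the safeness of the marking-based Reduction Rule \ref{rr6}, have already been established. The only care required is bookkeeping: making sure the counting is done only over nonempty subsets of size at most $3$ (so that there is room in an obstruction of size $4$ for the replacement vertex $v'$) and that the per-pair marking cap of $k+4$ (chosen so that even after a solution of size $k$ removes candidates, enough marked vertices remain for the swapping argument) is applied correctly. Combining the two bounds yields $|V_1| \leq 8k(k+4)\bigl(8\binom{4k}{3}+4\binom{4k}{2}+2\binom{4k}{1}\bigr)$, completing the proof.
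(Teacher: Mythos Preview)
Your proposal is correct and follows essentially the same approach as the paper: bound the number of clique components by $8k$ via Observation~\ref{lemma:clique-number-atleast-3}, bound each clique's size by the number of vertices marked by {\sf Mark-Clique-$K$}, and multiply. Your counting of the marked vertices is slightly more explicit than the paper's, but the argument is identical in substance.
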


\begin{proof}
Recall that we defined $\eps(k) = 8{{4k}\choose{3}} + 4{{4k}\choose{2}} + 2{{4k}\choose{1}} + 4k^2$.
Since Reduction Rules \ref{rr1}-\ref{rr2} are not applicable, it follows from Observation \ref{lemma:clique-number-atleast-3} that the number of connected components in $G[V_1]$ is at most $8k$.
Every connected component of $G[V_1]$ is a clique.
For every connected component $K$ of $G[V_1]$, observe that the procedure {\sf Mark-Clique($K$)} marks at most $\eps(k)$ vertices from $K$.
Since Reduction Rule \ref{rr6} is not applicable, $G[V_1]$ has no unmarked vertices.
As $G[V_1]$ has at most $8k$ connected components, { the equation below follows.}

	$$|V_1| \leq 8k \eps(k) = 8k(k+4)\Bigg(8{{4k}\choose{3}} + 4{{4k}\choose{2}} + 2{{4k}\choose{1}}\Bigg) + 8k \cdot 4k^2$$
	
	$$= 8k(k+4)\Bigg(8{{4k}\choose{3}} + 4{{4k}\choose{2}} + 2{{4k}\choose{1}}\Bigg) + 32k^3.$$

This completes our claimed upper bound on the number of vertices in $V_1$.
\end{proof}

\subsection{Bounding the Tree Vertices in $G - S$}
\label{sec:tree-vertices-G-S}
In this section, we describe the set of reduction rules that we use to reduce the number of vertices that participate in the forests of $G - S$.
Let $V_2 = V  \setminus (S \cup V_1)$.  
Note that  $G[V_2]$ is the collection of trees in $G-S$.

\begin{reduction rule}\label{reduction-rule-leaf-neighbor-not-adj}
Let $v$ be a pendant vertex in a connected component $C$ of $G[V_2]$ such that neither $v$ nor its neighbor in $C$ is adjacent to any vertex of $S$. 
{
Then, delete $v$ from $G$ and the new instance is $(G - v, k)$.}
\end{reduction rule}

\begin{lemma}
Reduction Rule \ref{reduction-rule-leaf-neighbor-not-adj} is safe.
\end{lemma}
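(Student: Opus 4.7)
The plan is to prove the two directions of the equivalence separately. The forward direction ($\Rightarrow$) is immediate from Observation \ref{observation:subgraph-yes-instance}: if $X$ is a (clique, tree)-deletion set of size at most $k$ for $G$, then $X \setminus \{v\}$ is one for $G - v$. So I would only need to spend real work on the backward direction ($\Leftarrow$), assuming $X$ is a solution to $(G - v, k)$, and showing that $X$ remains a solution to $(G, k)$.

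For the backward direction I would argue by contradiction: suppose $G - X$ contains an obstruction $O$ (i.e.\ a paw, diamond, or induced cycle $C_i$ with $i \geq 4$, by Lemma \ref{ctvd-forbidden-char}). Since $X$ already kills all obstructions in $G - v$, the set $O$ must contain $v$. Before analysing $O$, I would first establish the key structural fact that $d_G(v) = 1$: the vertex $v$ is a leaf of the tree component $C$ in $G[V_2]$, so it has a single neighbor $u \in C$; by hypothesis $v$ is not adjacent to any vertex of $S$; and since the connected components of $G - S$ are pairwise non-adjacent, $v$ has no further neighbours in $G$. Similarly, $u$'s neighbourhood satisfies $N_G(u) \subseteq V(C)$, because $u$ also has no neighbour in $S$ and its connected component of $G - S$ is exactly $C$.

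With $d_G(v) = 1$ in hand I would dispose of $O$ by cases. A cycle $C_i$ and a diamond both have minimum degree $\geq 2$, so $v$ cannot belong to either; hence $O$ must be a paw in which $v$ plays the role of the pendant vertex. By the definition of a paw the unique neighbour of the pendant, namely $u$, is the apex of a triangle $\{u,w_1,w_2\}$ with $u w_1, u w_2, w_1 w_2 \in E(G)$. But $w_1, w_2 \in N_G(u) \subseteq V(C)$, and $C$ is a tree, so the two neighbours $w_1$ and $w_2$ of $u$ cannot themselves be adjacent (otherwise $u w_1 w_2$ would be a triangle in the tree $C$). This contradicts $w_1 w_2 \in E(G)$ and finishes the proof.

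I do not expect any real obstacle here; the only subtle point is being careful to justify $N_G(u) \subseteq V(C)$, which relies crucially on the hypothesis that $u$ has no neighbour in $S$ and on the fact that components of $G - S$ are disjoint. The rest is a short case analysis driven by the degree-$1$ status of $v$.
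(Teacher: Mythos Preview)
Your proposal is correct and follows essentially the same route as the paper's proof: both establish that any obstruction in $G-X$ containing $v$ must be a paw with $v$ as its pendant vertex, and then derive a contradiction from the neighbourhood of $u$. The only cosmetic difference is that the paper phrases the final contradiction as ``some vertex of the paw lies in $S$, hence $u$ has a neighbour in $S$,'' whereas you phrase it as ``$w_1,w_2\in V(C)$ and $C$ is a tree, so there is no triangle''; these are equivalent.
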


\begin{proof}
Let $u$ be the unique neighbor of $v$ in $G$.
For the forward direction ($\Rightarrow$), let $X$ be a solution of size $k$ in $G$. 
Since $G-v$ is an induced subgraph of $G$, it follows from Observation \ref{observation:subgraph-yes-instance} that $X \setminus \{v\}$ is a solution in $G-v$ as well.

Conversely, for the backward direction ($\Leftarrow$), let $X'$ be a solution of size $k$ in $G-v$. 
Targeting a contradiction, suppose that $X'$ is not a solution to $G$.
As $G - (X' \cup \{v\})$ is a simple graph, and $v$ is a pendant vertex of $G$, therefore, $G - X'$ is a simple graph as well.
As we assume that $X'$ is not a solution to $G$, there is a connected component of $G-X'$ is neither a clique nor a tree.
Then by Lemma \ref{ctvd-forbidden-char}, $G-X'$ contains an obstruction $O$ that induces a paw, or a diamond or a cycle of length at least 4. 
Note that this obstruction $O$ must contain $v$ which is a pendant vertex in $G$.
Therefore, $O$ is a paw but not a diamond or an induced cycle with at least four vertices.
Hence, $v$ is the pendant vertex in $O$.
Since $G - S$ has no { paw, no diamond, and no induced cycle with at least 4 vertices}, one of the vertices of $O$ is in $S$.
As $v$ is the pendant vertex of the paw $O$, it must be that $u$, unique neighbor of $v$ has two neighbors $z, w \in V(G - X')$ such that $zw \in E(G - X')$.
Since $v \notin S$, it must be that $z \in S$ or $w \in S$.
This contradicts our premise of the reduction rule that $u$ has no neighbor in $S$.
Hence, $X'$ be a (clique, tree)-deletion set of $G$.
\end{proof}

Let $C$ be a connected component of $G[V_2]$.
We say that $C$ is a {\em pendant tree} of $G[V_2]$ if either $C$ is a pendant vertex of $G$ or has a unique vertex $u$ that has a unique neighbor $v$ in $S$ { such that $uv$ is not a double-edge}, and no other vertex of $C$ has any neighbor in $S$.
We have the following observation.

\begin{observation}
\label{obs:unique-T-neighbor-property}
Let $C$ be a  connected component $C$ of $G[V_2]$.
Then, $C$ is a pendant tree if and only if $E(C, V(G) \setminus C)$ contains a single edge.
\end{observation}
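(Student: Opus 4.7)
The plan is to reduce both directions of the equivalence to the single structural fact that every edge leaving $C$ to $V(G)\setminus C$ must terminate in $S$. Once that is established, the two directions of the biconditional each follow directly from the definition of a pendant tree, with one minor case split on whether $|C|=1$ or $|C|>1$.

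First, I would decompose $V(G)\setminus C = S \uplus V_1 \uplus (V_2 \setminus C)$. Because $C$ is a connected component of $G[V_2]$, there are no edges from $C$ to $V_2 \setminus C$ in $G$. Moreover, $V_1$ is the union of the vertex sets of clique components of $G-S$, which are connected components of $G-S$ distinct from the component containing $C$; hence there are no edges from $C$ to $V_1$ either. Consequently $E(C, V(G)\setminus C) = E(C, S)$, so counting edges leaving $C$ reduces to counting edges from $C$ into $S$.

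For the forward direction, suppose $C$ is a pendant tree. If $C$ consists of a single pendant vertex $v$ of $G$, then $v$ has exactly one edge in $G$, whose other endpoint lies outside $C$; by Reduction Rule~\ref{rr1} this edge must go to $S$. Otherwise the definition gives a unique vertex $u \in C$ with a unique neighbour in $S$ and no other vertex of $C$ adjacent to $S$; combined with the observation above, this means $|E(C, V(G)\setminus C)| = |E(C,S)| = 1$.

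For the backward direction, suppose $|E(C, V(G)\setminus C)| = 1$, say the unique edge is $uv$ with $u\in C$. By the first step, $v \in S$. If $|C|=1$, then $u$ has degree $1$ in $G$, so $C = \{u\}$ is a pendant vertex of $G$. If $|C|>1$, then $u$ is the unique vertex of $C$ with a neighbour in $S$, namely the single vertex $v$, and no other vertex of $C$ is adjacent to any vertex of $S$; thus $C$ is a pendant tree. In either case $C$ satisfies the definition, completing the argument. The proof is purely bookkeeping; the only substantive observation is that $V_1$ and the other components of $G[V_2]$ contribute no edges to $E(C, V(G)\setminus C)$, so the main ``obstacle'' is merely making this observation explicit.
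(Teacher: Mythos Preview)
Your proof is correct and follows essentially the same approach as the paper's own proof. The only difference is that you make explicit the key structural fact $E(C, V(G)\setminus C) = E(C,S)$ (by noting that $C$, being a tree component of $G-S$, has no edges to $V_1$ or to other components of $V_2$), whereas the paper leaves this implicit; your case split on $|C|=1$ versus $|C|>1$ in the backward direction is also slightly more careful than the paper's treatment, but the argument is otherwise the same.
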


\begin{proof}
Consider a connected component $C$ of $G[V_2]$.
Observe that $(C, V(G) \setminus C)$ is a cut of $G$.
If $C$ is a pendant tree of $G$, then there are two possibilities.
The first case is $C$ is a pendant vertex of $G$.
Then trivially $E(C, V(G) \setminus C)$ has a single edge.
The second case is $C$ is a pendant tree with at least two vertices.
Then, there is a unique vertex $u$ of $C$ such that $N_G(C) = N_G(u) \cap S = \{v\}$.
{ Additionally, as $C$ is a pendant tree, $uv$ cannot be a double-edge.}
Since no other vertex of $C$ has a neighbor in $S$ and $uv$ is the only edge in the cut $E(C, V(G) \setminus C)$ we have that $E(C, V(G) \setminus C)$ has a single edge.

Conversely, if $E(C, V(G) \setminus C)$ is a single edge, then there is a unique vertex $u \in C$ such that $u$ has only one neighbor $v$ in $S$ and no other vertex of $C$ has a neighbor in $S$.
{ Additionally, $uv$ is not a double-edge.}
In fact, no other vertex of $C$ has a neighbor in $V(G) \setminus C$.
Then, $C$ is a pendant tree of $G[V_2]$.
\end{proof}

Given a pendant tree $C$ of $G[V_2]$, we call $x \in S$ the {\em unique $S$-neighbor} of $C$ if $N_G(C) = \{x\}$ and $|N_G(x) \cap C| = 1$.
Furthermore, given a vertex $x \in S$, we call $C$ a {\em pendant tree-neighbor} of $x$ if $C$ is a pendant tree of $G[V_2]$ and $x$ is the {\em unique $S$-neighbor} of $C$.

\begin{reduction rule}
\label{rule:contract-pendant-tree-neighbor}
Let $C$ be a pendant tree in $G[V_2]$ such that $x \in S$ is a unique $S$-neighbor of $C$.
Then, delete all the vertices of $C$ except for the vertex $u$ that has the unique $S$-neighbor $x \in S$ in $C$ to obtain the graph $G'$.
{ The output instance is $(G', k)$}.
\end{reduction rule}

\begin{lemma}
\label{lemma:contract-tree-neighbor-safeness}
Reduction Rule \ref{rule:contract-pendant-tree-neighbor} is safe.
\end{lemma}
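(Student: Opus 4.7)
My plan is to prove both directions of the equivalence, exploiting the structural fact captured by Observation~\ref{obs:unique-T-neighbor-property}: the edge $ux$ is the unique edge between $C$ and $V(G) \setminus C$, and $C$ itself is a tree. Consequently every vertex of $C \setminus \{u\}$ has all its neighbors inside $C$, while $u$'s only neighbor outside $C$ is $x$, making $ux$ a bridge separating the tree $C$ from the rest of $G$.

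For the forward direction, given a solution $X$ of $(G,k)$, I will take $X' = X \setminus (C \setminus \{u\})$; this has size at most $k$ and lies in $V(G')$. Since $G' - X'$ coincides with the induced subgraph of $G - X$ on $V(G')$, any forbidden induced subgraph of $G' - X'$ (paw, diamond, or $C_i$ with $i \geq 4$) would also appear in $G - X$, contradicting the choice of $X$.

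For the backward direction, given a solution $X'$ of $(G',k)$, I will take $X = X'$ and show no obstruction survives in $G - X$. My first step is the structural claim that no cycle of $G$ passes through any vertex of $C$: for $v \in C \setminus \{u\}$, any cycle through $v$ would remain inside the tree $C$ (contradicting acyclicity of $C$), and any cycle through $u$ would require $u$ to have two neighbors in the cycle, at least one of which must lie in $C$, so that the cycle would have to re-exit $C$ through the same bridge $ux$, which is impossible. This claim rules out cycles $C_i$ ($i \geq 4$) and diamond obstructions meeting $C$ (all four diamond vertices lie in a triangle, hence in a cycle), and for a paw $O$ meeting $C$, only its pendant vertex can lie in $C$.

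The hard part is handling this remaining paw case. I will argue that the pendant's unique neighbor $u_1$ in $O$ (the center of the paw's triangle) lies in $N_G(v) \subseteq C \cup \{x\}$; since the triangle is a cycle, $u_1 \notin C$, forcing $u_1 = x$, and hence $v = u$ by adjacency to $x$. The resulting paw $\{u, x, w_1, w_2\}$ has all four vertices in $V(G')$ (since $w_1, w_2$ lie in the triangle, hence outside $C$) and preserves all its edges in $G'$, so the same paw appears in $G' - X'$, contradicting that $X'$ is a solution. The leftover case $O \cap C = \emptyset$ gives $O \subseteq V(G')$ with $G[O] = G'[O]$, so $O$ is already an obstruction of $G' - X'$, completing the proof.
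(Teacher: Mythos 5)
Your proof is correct, but your backward direction takes a genuinely different route from the paper's. The paper argues at the level of connected components: it splits on whether $u$ belongs to the solution $X'$ of $G'$, observes that if $u \in X'$ then the pieces of $C \setminus \{u\}$ become stand-alone tree components of $G - X'$, and that if $u \notin X'$ then the component of $G' - X'$ containing the (now pendant) vertex $u$ is a tree that remains a tree after the acyclic appendage $C \setminus \{u\}$ is glued back on through $u$; all other components are untouched. You instead argue at the level of obstructions via Lemma \ref{ctvd-forbidden-char}: using that $ux$ is the unique edge of the cut $(C, V(G)\setminus C)$ (Observation \ref{obs:unique-T-neighbor-property}) together with the acyclicity of $C$, you show that no cycle of $G$ meets $C$, which excludes long induced cycles and diamonds meeting $C$ and forces any surviving paw to have its pendant vertex at $u$ and its triangle outside $C$, hence to survive verbatim in $G' - X'$. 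Both arguments hinge on the same structural fact (the bridge $ux$ plus acyclicity of $C$). Yours avoids the paper's case split on $u \in X'$ and sidesteps its slightly glossed-over assertion that the component $C^*$ of $G'-X'$ containing $u$ is a tree (a priori it could be a clique, though pendancy of $u$ in $G'$ limits it to $K_1$ or $K_2$, which are trees), while the paper's version is shorter and does not need the forbidden-subgraph characterization. The forward directions coincide: both are instances of Observation \ref{observation:subgraph-yes-instance}.
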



\begin{proof}
Let $C$ be a pendant tree in $G - S$ such that $u$ is the unique $S$-neighbor of $C$.
Let $x$ be the unique neighbor of $u$ in $C$.
Since $k' = k$, if $X$ is a solution to $(G, k)$, the forward direction ($\Rightarrow$) follows from the first item of Observation \ref{observation:subgraph-yes-instance}.

{ For the  backward direction ($\Leftarrow$), it suffices to prove that the vertices of $C \setminus \{u\}$ do not intersect any obstruction in $G$.
A very crucial observation is that no vertex of $C \setminus \{u\}$ participates in any cycle of $G$.
Consider any induced cycle of length at least 4.
Observe that the vertices of $C \setminus \{u\}$ do not participate in any induced cycle of length at least 4 in $G$.
Hence, the vertices of $C \setminus \{u\}$ is disjoint from every induced cycle of length at least 4 in $G$.
Consider a diamond of $G$.
Observe that every vertex of a diamond participates in a triangle in the diamond itself.
But, the vertices of $C \setminus \{u\}$ do not participate in any triangle of $G$.
Hence, the vertices of $C \setminus \{u\}$ are disjoint from every diamond of $G$.
Consider a paw $O$ of $G$.
If $O$ intersects a vertex $w$ from $C \setminus \{u\}$, then the neighbor of $w$ must be part of a triangle in $G$.
But every neighbor of $w$ is in $C$.
But no vertex of $C$ is part of any triangle in $G$.
Hence, the vertices of $C \setminus \{u\}$ is disjoint from every obstruction appearing in $G$.
Due to the second item of Observation \ref{observation:subgraph-yes-instance}, any solution of size at most $k$ to $(G', k)$ is a solution of size at most $k$ to $(G, k)$.
Therefore the reduction rule is safe.}
\end{proof}

\begin{lemma}
\label{lemma:pendant-tree-rule-not-applicable}
If Reduction Rule \ref{rule:contract-pendant-tree-neighbor} is not applicable to the input instance $(G, k)$, then any pendant tree of $G[V_2]$ is a pendant vertex of $G$.
\end{lemma}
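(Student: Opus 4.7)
My plan is to prove the lemma by a short direct argument (or equivalently, contradiction), leaning on the definition of a pendant tree together with the statement of Reduction Rule \ref{rule:contract-pendant-tree-neighbor}. Let $C$ be an arbitrary pendant tree of $G[V_2]$. By the definition preceding Observation \ref{obs:unique-T-neighbor-property}, there is a unique vertex $u \in C$ and a vertex $x \in S$ such that $ux \in E(G)$, and moreover (by Observation \ref{obs:unique-T-neighbor-property}) this is the only edge of $G$ leaving $C$. In particular, $u$ has no neighbors in $S$ other than $x$, and no vertex of $C \setminus \{u\}$ has a neighbor outside $C$.

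I will then split on $|C|$. If $|C| = 1$, then $C = \{u\}$, and since $C$ has only one outgoing edge (to $x$) and no internal edges, we get $d_G(u) = 1$, so $u$ is a pendant vertex of $G$ and $C$ is, by definition, a pendant vertex of $G$, so the conclusion holds vacuously in this case.

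Now assume towards contradiction that the lemma fails, i.e.\ there is a pendant tree $C$ of $G[V_2]$ that is not a pendant vertex of $G$. By the previous paragraph we must then have $|C| \geq 2$. The hypotheses of Reduction Rule \ref{rule:contract-pendant-tree-neighbor} are exactly that $C$ is a pendant tree of $G[V_2]$ with unique $S$-neighbor $x \in S$, which we have just verified. Applying the rule removes the non-empty set $C \setminus \{u\}$ from $G$, so the rule is applicable to $(G,k)$ in a non-trivial way. This contradicts the assumption that Reduction Rule \ref{rule:contract-pendant-tree-neighbor} is not applicable to $(G,k)$, and completes the proof.

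I do not foresee a real obstacle here: the main content of the lemma is simply that Reduction Rule \ref{rule:contract-pendant-tree-neighbor} is defined precisely so that its exhaustive application shrinks every pendant tree of $G[V_2]$ down to its unique $S$-attachment vertex $u$, which is then automatically a pendant vertex of $G$. The only mild subtlety is making sure the singleton case $|C|=1$ is handled cleanly, and for this Observation \ref{obs:unique-T-neighbor-property} (``a single outgoing edge'') is exactly what we need to conclude $d_G(u)=1$.
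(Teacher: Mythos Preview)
Your proposal is correct and follows essentially the same argument as the paper's own proof: assume a pendant tree $C$ is not a pendant vertex of $G$, observe that the premise of Reduction Rule~\ref{rule:contract-pendant-tree-neighbor} is then satisfied, and derive a contradiction. Your treatment is slightly more explicit in separating out the $|C|=1$ case and in invoking Observation~\ref{obs:unique-T-neighbor-property}, but the core reasoning is identical.
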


\begin{proof}
Suppose that Reduction Rule \ref{rule:contract-pendant-tree-neighbor} is not applicable but $C$ is a pendant tree of $G[V_2]$ that is not a pendant vertex of $G$.	
Then, $C$ has a vertex $u$ such that $x \in S$ is the unique $S$-neighbor of $C$ in $G$.
In such a case, the premise of the Reduction Rule \ref{rule:contract-pendant-tree-neighbor} is satisfied implying that Reduction Rule \ref{rule:contract-pendant-tree-neighbor} is applicable to the instance $(G, k)$.
This leads to a contradiction to our assumption that Reduction Rule \ref{rule:contract-pendant-tree-neighbor} is not applicable.
\end{proof}

Recall the Definition~\ref{defn:flower} of $v$-flower.
Our next reduction rule uses the concept of $v$-flower and Proposition \ref{theorem:flower-cycle-hitting} as follows.

\begin{reduction rule}
\label{rr8}
For $v \in S$, we invoke Proposition \ref{theorem:flower-cycle-hitting} in $G[V_2 \cup \{v\}]$ with the integer $3k+1$.
If this gives a $v$-flower of order $3k+2$, then delete $v$ from $G$ and the new instance is $(G- \{v\}, k-1)$.
\end{reduction rule}

\begin{lemma} 
\label{lemma:flower-rule-safeness}
Reduction Rule \ref{rr8} is safe.
\end{lemma}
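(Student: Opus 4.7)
The plan is to handle the two directions of the equivalence between $(G,k)$ and $(G-v,k-1)$ separately. The backward direction is immediate: if $X'$ is a (clique, tree)-deletion set of $G-v$ of size at most $k-1$, then $X' \cup \{v\}$ is a (clique, tree)-deletion set of $G$ of size at most $k$, since by Observation \ref{observation:subgraph-yes-instance} every obstruction of $G$ either lies in $G - \{v\}$ (hence is hit by $X'$) or contains $v$.

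For the forward direction, I would let $X$ be any (clique, tree)-deletion set of $G$ of size at most $k$ and argue that necessarily $v \in X$; then $X \setminus \{v\}$ is a solution of size at most $k-1$ for $G-v$. Suppose for contradiction that $v \notin X$, and let $\widehat C$ denote the connected component of $G - X$ containing $v$; by Lemma \ref{ctvd-forbidden-char} it is either a tree or a clique. The plan is then to use the $3k+2$ petals $C_1, \ldots, C_{3k+2}$ of the $v$-flower returned by Proposition \ref{theorem:flower-cycle-hitting} (all lying in $G[V_2 \cup \{v\}]$ and pairwise sharing only $v$) to force $|X| > k$, a contradiction.

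The subcase where $\widehat C$ is a tree is direct: no cycle of $G - X$ passes through $v$, so every petal $C_i$ must contain a vertex of $X$ in $C_i \setminus \{v\}$. Pairwise disjointness of the petals off $v$ then yields $|X| \geq 3k+2 > k$.

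The subcase where $\widehat C$ is a clique is the main obstacle, because single triangles are not obstructions for CTVD and hence a triangular petal can a priori survive in $\widehat C$. The key observation I plan to exploit is that $G[V_2]$ is a forest. If a petal $v-a_1-\cdots-a_\ell-v$ is unhit by $X$, then $a_1, \ldots, a_\ell$ all lie in $\widehat C$, and since $\widehat C$ is a clique they are pairwise adjacent in $G$ and hence form a clique in $G[V_2]$. The forest property then implies: (i) every petal of length at least $4$ must be hit, since otherwise $a_1, a_2, a_3$ would form a triangle inside the forest $G[V_2]$; and (ii) at most one triangular petal $\{v, a_i, b_i\}$ can be unhit, since two simultaneously unhit triangular petals would place the four distinct vertices $a_i, b_i, a_j, b_j \in V_2$ as an induced $K_4$ in $G[V_2]$. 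Combining (i) and (ii), at most one of the $3k+2$ petals remains unhit, so $X$ must contain at least $3k+1 > k$ distinct non-$v$ vertices (one per hit petal, by the pairwise disjointness off $v$), contradicting $|X| \leq k$. This closes the case analysis and the argument.
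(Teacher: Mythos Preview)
Your proof is correct and takes a genuinely different route from the paper's. The paper does not look at the component $\widehat C$ of $v$ in $G-X$; instead it argues directly about obstructions: if at least $k+1$ of the petals have length at least~$4$, each such petal is itself an induced $C_{\geq 4}$ and must be hit off $v$, giving $|X|\geq k+1$; otherwise at least $2(k+1)$ petals are triangles, and pairing them up, each pair $\{v,u_1,u_2\}\cup\{v,w_1,w_2\}$ cannot induce a $K_5$ (since $u_1,u_2,w_1,w_2\in V_2$ and $G[V_2]$ is a forest), hence contains a paw, diamond, or $C_4$, so each of the $k+1$ pairs must be hit off $v$. Your approach instead exploits the tree/clique dichotomy for $\widehat C$ and turns the forest property of $G[V_2]$ into the sharper statement that in the clique case at most one petal can survive unhit. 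This yields $|X|\geq 3k+1$ rather than $k+1$, so your argument would already go through with a $v$-flower of order $k+2$; the paper's counting, by contrast, genuinely consumes all $3k+2$ petals (up to $k$ long ones plus $2(k+1)$ triangles to pair). Your proof is thus not only different but quantitatively tighter, at the modest cost of one extra case split.
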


\begin{proof}
Clearly if $X$ is a solution in $G - \{v\}$, then $X \cup \{v\}$ is a solution in $G$.
Thus, if $(G- \{v\}, k-1)$ a yes-instance, then $(G,k)$ is also a yes-instance.
So, the backward direction ($\Leftarrow$) follows.

For the forward direction ($\Rightarrow$), let $X$ be a (clique, tree)-deletion set of $G$ of size at most $k$.
We claim that $v \in X$. 
Suppose not. 
Consider the set of cycles $C_1, C_2, \dots C_{3k+2}$ in the $v$-flower.
Consider those cycles from $\{C_1,\ldots,C_{3k+2}\}$ that are induced cycles of length at least 4 (holes) or double-edges. 
If there are at least $k+1$ such cycles, then at least one vertex from each of these cycles must be part of $X$. 
As $v \notin X$, due to Lemma \ref{ctvd-forbidden-char}, $X$ should contain at least one vertex from each of these $k+1$ cycles, which are disjoint apart from $v$.
This contradicts that $X$ is of size at most $k$. 
Thus, at most $k$ of these cycles are spanned by induced cycles length at least 4 and double edges.
Then, at least $3k + 2 - k = 2(k+1)$ of the cycles are triangles.
We arrange them arbitrarily into $k+1$ pairs and focus on one of the pairs of triangles $C = \{u_1,u_2,v\}$ and $C' = \{w_1,w_2,v\}$.
{ Since $G[V_2]$ is a forest and $u_1, u_2, w_1, w_2 \in V_2$, the graph $G[C \cup C']$ is not the clique $K_5$.
In addition, $G[\{u_1, u_2, w_1, w_2\}$ is a bipartite graph that has a maximum matching with edges $u_1 u_2$ and $w_1 w_2$.
Then, due to K{\H{o}}nig's Theorem, the bipartite graph $G[\{u_1, u_2, w_1, w_2\}]$ has vertex cover of size 2.
Hence, $G[\{u_1, u_2, w_1, w_2\}]$ has an independent set of size 2.
The independent set must contain one vertex from $\{u_1, u_2\}$ and one vertex from $\{w_1, w_2\}$.
Let $u_1$ and $w_1$ are pairwise nonadjacent.
Then, consider the vertices $G[\{u, u_1, u_2, w_1\}]$.
If $u_2 w_1 \in E(G)$, then $G[\{u, u_1, u_2, w_1\}]$ forms a diamond.
If $u_2 w_1 \notin E(G)$, then $G[\{u, u_1, u_2, w_1\}]$ forms a paw.
Hence, there is a paw or a diamond as an induced subgraph in $G[C \cup C']$.
By similar arguments, we can prove that if $u_1w_2 \notin E(G)$, or $u_2 w_1 \notin E(G)$, or $u_2 w_2 \notin E(G)$, then $G[C \cup C']$ contains a paw or a diamond as an induced subgraph.}
Thus, each of the $k+1$ pairs of triangles contains a graph that is a paw, or a diamond as induced subgraphs.
If $v \notin X$, then by Lemma  \ref{ctvd-forbidden-char}, there is a distinct vertex in $X$ from each of the $k+1$ pairs of cycles other than $v$. 
This again contradicts that $X$ is of size at most $k$ completing the proof of this lemma.
\end{proof}

Since Reduction Rule \ref{rr8} is not applicable, invoking Proposition \ref{theorem:flower-cycle-hitting} of order $(3k + 2)$ at $G[\{v\} \cup V_2]$ gives us a set $H_v \subseteq V_2$ that intersects all cycles of $G[\{v\} \cup V_2]$ that passes through $v$.
Our following lemma proves that the same vertex subset $H_v$ also intersects all paws and diamonds of $G[\{v\} \cup V_2]$ passing through $v$.

\begin{lemma}
\label{lemma:hitting-obstructions-containing-v}
If Reduction Rule \ref{rr8} is not applicable, then in polynomial time, we can obtain a vertex subset $H_v \subseteq V_2$ with  $|H_v| \leq 6k+4$ such that $H_v$ intersects every double-edge, every induced cycle of length at least 4, every paw, and every diamond in $G[\{v\} \cup V_2]$ that passes through $v$.
\end{lemma}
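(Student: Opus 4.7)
The plan is to invoke Proposition \ref{theorem:flower-cycle-hitting} on the graph $G[V_2 \cup \{v\}]$ with parameter $3k+2$. The algorithm either outputs a $v$-flower of order $3k+3$, or else a set $Z \subseteq V(G[V_2 \cup \{v\}]) \setminus \{v\} \subseteq V_2$ of size at most $2(3k+2) = 6k+4$ which intersects every cycle of $G[V_2 \cup \{v\}]$ passing through $v$. In the first case, by dropping any single petal we would obtain a $v$-flower of order $3k+2$, which would make Reduction Rule \ref{rr8} applicable and thus contradict the hypothesis. Hence we must be in the second case, and I would set $H_v := Z$, which already satisfies the size bound.

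The remaining step is to argue that this single set $Z$ in fact hits every paw and every diamond of $G[V_2 \cup \{v\}]$ that passes through $v$, not merely every cycle. The crucial observation is that $G[V_2]$ is a forest, since by definition every connected component of $G[V_2]$ is a tree; in particular $G[V_2]$ contains no triangle. Both a paw and a diamond contain at least one triangle on three of their vertices (a paw contains one such triangle, a diamond contains two). Any such triangle, lying entirely in $V_2 \cup \{v\}$, cannot lie inside $V_2$ alone, and therefore must contain $v$. Consequently, the triangle is a cycle through $v$ in $G[V_2 \cup \{v\}]$, so $Z$ intersects it, and hence intersects the whole paw or diamond.

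I do not expect any real obstacle here: the argument reduces to a single invocation of Proposition \ref{theorem:flower-cycle-hitting} combined with the elementary fact that $G[V_2]$ is triangle-free because its components are trees. Polynomial running time and the cardinality bound $|H_v| \leq 6k+4$ then follow immediately from the guarantees of Proposition \ref{theorem:flower-cycle-hitting}.
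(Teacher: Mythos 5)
Your proposal is correct and follows essentially the same route as the paper: obtain $H_v$ from Proposition \ref{theorem:flower-cycle-hitting} (the flower-order bookkeeping is handled properly, yielding $|H_v|\leq 6k+4$), and then use the fact that $G[V_2]$ is a forest to force every triangle of $G[\{v\}\cup V_2]$ to pass through $v$. Your unified observation that any paw or diamond contains a triangle, which must therefore be a cycle through $v$ hit by $H_v\subseteq V_2$, is a slightly cleaner packaging of the paper's separate case analyses for paws and diamonds, but the underlying idea is identical.
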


\begin{proof}
Since Reduction Rule \ref{rr8} is not applicable, graph $G[V_2 \cup \{v\}]$ has no $v$-flower of order $3k+2$.
It follows from the Proposition \ref{theorem:flower-cycle-hitting} that we obtain a set $H_v \subseteq V_2$ in polynomial time that intersects all cycles of $G[V_2 \cup \{v\}]$ that pass through $v$.
Therefore, $H_v$ trivially intersects all double edges of $G[\{v\} \cup V_2]$ and all induced cycles of $G[\{v\} \cup V_2]$ containing at least 4 vertices.

Consider a diamond $O$ in $G[\{v\} \cup V_2]$ such that $v \in O$.
Observe that $v$ must have at least two neighbors in $O \cap V_2$ and $|O \cap V_2| = 3$.
Since $G[V_2]$ is a forest, the subgraph $G[O \cap V_2]$ has no cycle.
Hence, the only way $O$ can form a diamond that $G[V_2 \cap O]$ is an induced $P_3$ and $v$ is adjacent to all other vertices of $O \cap V_2$.
Then, there are two triangles and a $C_4$ as subgraph in $G[\{v\} \cup V_2]$ that is in $G[O]$.
Since $H_v$ intersects both the triangles and this subgraph $C_4$, it must be that $H_v \cap O \neq \emptyset$.

Now, we consider a paw $O$ of $G[\{v\} \cup V_2]$ such that $v \in O$.
Since $O \cap V_2$ cannot have any cycle, it must be that the pendant vertex of this paw $O$ is in $V_2$.
Then, the only way for $O$ to form a paw containing $v$ is that $v$ is adjacent to two pairwise adjacent vertices $x_1, x_2$ of $V_2$ and $x_3$ is pendant vertex of $O$ such that $O = \{v, x_1, x_2, x_3\}$.
But, this provides us a cycle $\{v, x_1, x_2\}$ that passes through $v$.
As $H_v$ intersects $\{x_1, x_2\}$, it follows that $H_v$ intersects $O$ that passes through $v$.

This completes the proof that $H_v$ intersects all double-edges, all induced cycles of length at least 4, all paws, and all diamonds of $G[\{v\} \cup V_2]$ that pass through $v$.
\end{proof}


\paragraph{Construction of Auxiliary Bipartite Graph:}
If none of the above reduction rules are applicable, we exploit the structural properties of the graph using the above mentioned lemmas and construct an auxiliary bipartite graph that we use in some reduction rules later.
Let $\mathcal{C}$ denote the connected components of $G[V_2 \setminus H_v]$ that are adjacent to $v$.
In other words, if $D \in \mathcal{C}$, then $D$ is a connected component of $G[V_2 \setminus H_v]$ such that $v$ is adjacent to $D$.
We formally construct the following auxiliary bipartite graph $\mathcal{H}$ as follows.
\begin{definition}\label{definition:aux-bipartite}
Given $v \in S$, let $H_v$ denote the set of at most $6k+4$ vertices obtained by Lemma \ref{lemma:hitting-obstructions-containing-v} to the graph $G[\{v\} \cup V_2]$.
Consider the graph $G[V_2 \setminus H_v]$ that is a forest.
We define an auxiliary bipartite graph $\mathcal{H}_v = (H_v \cup (S \setminus \{v\}), \mathcal{C})$ where $H_v \cup (S \setminus \{v\})$ is on one side, and 
$\mathcal{C}$ 
 on the other side.
The set $\mathcal{C}$ contains a vertex for each connected component $C$ of $G[V_2 \setminus H_v]$ such that $C$ has a vertex adjacent to $v$.
We add an edge between $h \in H_v \cup (S \setminus \{v\})$, and connected component $C \in \mathcal{C}$ if $h$ is adjacent to a vertex in component $C \in \mathcal{C}$. 
\end{definition}

We prove the following observation that is crucial to the next reduction rule which upper bounds the number of edges incident to a vertex $v \in S$ with other endpoint being in $V_2$.

%
\begin{observation}\label{observation:lower-bound-C-aux-bip}
Let $\mathcal{H}_v = (H_v \cup (S \setminus \{v\}), \mathcal{C})$ be the auxiliary bipartite graph as defined in Definition \ref{definition:aux-bipartite}.
If $v$ has degree more than $52(k + 1)$ in $G[\{v\} \cup V_2]$, then
\begin{enumerate}[(i)]
	\item\label{caligraphic-C-size} $\mathcal{C}$ has more than $4(|S| +|H_v|)$ components, and
	\item\label{caligraphic-C-v-connectivity} every connected component $C$ of $\mathcal{C}$ has exactly one vertex $x$ adjacent to $v$ and $vx$ is not a double-edge.
\end{enumerate}  
\end{observation}


\begin{proof}
By hypothesis, there are more than $52k + 52$ edges that are incident to $v$ with the other endpoint in $V_2$.
Note that $|H_v| \leq 6k+4$, and for every $w \in S \setminus \{v\}$.
{ As Reduction Rule \ref{rule:multiplicity-reduction} is not applicable, there can be at most two edges between $v$ and $w$}.
In addition, as Reduction Rule \ref{rule:high-degree-double-edge}, there are at most $k$ vertices $w \in V_2$ such that there is a double edge between $v$ and $w$.
In fact, if $vw$ is a double-edge and $w \in V_2$, then due to Lemma \ref{lemma:hitting-obstructions-containing-v}, $w \in H_v$.
Hence, there are at most $k$ vertices $w \in H_v \cup (S \setminus \{v\})$ such that there is a double edge between $v$ and $w$. 
As $|H_v \cup (S \setminus \{v\})| \leq 10k + 4$ and at most $k$ of these vertices are adjacent to $v$ with a double edge, therefore, there are at most $12k + 4$ edges are incident to $v$ with other endpoint being in $H_v \cup (S \setminus \{v\})$.
Due to Lemma \ref{lemma:hitting-obstructions-containing-v}, $H_v$ intersects every cycle, every paw, every diamond, and every double-edge of $G[\{v\} \cup V_2]$ passing through $v$.
It implies that there does not exist any $z \in V_2 \setminus H_v$ that is adjacent to $v$ by a double edge.
In addition, there cannot exist any cycle $C^*$ containing vertices only from $v$ and some connected component $C \in \CC$.
Therefore, $v$ is adjacent to at most one vertex in each connected component $C$ of $\mathcal{C}$.
In particular, if one such vertex $x$ exists in $C$, then $xv$ is not a double-edge.
Additionally, due to Definition \ref{definition:aux-bipartite}, every $C \in \mathcal{C}$ has at least one vertex $x$ such that $vx$ is an edge.
Hence, for every $C \in \mathcal{C}$, there is exactly one vertex $x \in C$ such that $xv \in E(G)$ and $xv$ is not a double-edge.
This completes the proof for the last item.
{Finally, note that $v$ is adjacent to at least $52k + 52 - (12k + 4) = 40k + 48$ vertices other than $H_v$ in $G[\{v\} \cup V_2]$.}
Hence, the number of connected components in $\mathcal{C}$ is at least $40k + 48$ implying that $|\mathcal{C}| \geq 40k + 44 > 4(10k + 4) = 4(|S| + |H_v|)$. 
This completes the proof of the first item.
\end{proof}




\paragraph{Applying the New Expansion Lemma:}
If there is a vertex $v \in S$ such that there are at least $52(k + 1)$ edges incident to $v$ with the other endpoints being in $V_2$, then Observation \ref{observation:lower-bound-C-aux-bip} implies that $|\mathcal{C}| > 4(|S| + |H_v|)$.
Suppose that we apply new $4$-expansion lemma (Lemma \ref{lemma:new-expansion-lemma} with $q=4$) on $\mathcal{H}_v$ to obtain $A \subseteq (S \setminus \{v\}) \cup H_v, \BB \subseteq \CC$ with a $4$-expansion $\widehat M$ of $A$ into $\BB$.
Then, it satisfies that (i) $|\CC \setminus \BB| \leq 4|((S \setminus \{v\}) \cup H_v) \setminus A|$ and $N_{\mathcal H}(\BB) \subseteq A$.
As $|\CC \setminus \BB| \leq 4|((S \setminus \{v\}) \cup H_v) \setminus A|$ and $|\CC| > 4(|S| + |H_v|)$, it must be that $|{\BB}| > 4|A|$.
Then, there must be a component $C^* \in \BB$ such that $C^*$ is not an endpoint of $\widehat M$ (or not saturated by $\widehat M$).
Let $\widehat{\BB} \subseteq \BB$ denote the components of $\BB$ that are saturated by $\widehat M$.
As some component of $\BB$ is not in $\widehat{\BB}$, it must be that $\widehat{\BB} \subset \BB$.
We use these characteristics crucially to prove that our next reduction rule is safe.

\begin{lemma}
\label{lemma:A-should-be-nonempty}
Let $(G, k)$ be an input instance to which Reduction Rules \ref{rr1}-\ref{rr8} are not applicable and consider $v \in S$ with degree at least $52(k+1)$ in $G[V_2 \cup \{v\}]$.
Next, we construct the auxiliary bipartite graph $\mathcal{H}_v = (H_v \cup (S \setminus \{v\}), \CC)$ as per Definition \ref{definition:aux-bipartite} and subsequently invoke Lemma \ref{lemma:new-expansion-lemma} (i.e. new $q$-expansion lemma) with $q = 4$ and compute the sets $A \subseteq (H_v \cup S) \setminus \{v\}$ and ${\BB} \subseteq {\CC}$ such that
\begin{itemize}
	\item there is a $4$-expansion $\widehat{M}$ of ${A}$  onto ${\BB}$ in $\mathcal{H}_v$, and
	\item $N_{\mathcal{H}_v}({\BB}) \subseteq A$.
\end{itemize} 
Then, the following statements are satisfied.
\begin{enumerate}[(i)]
	\item\label{expansion-A-nonempty} ${A} \neq \emptyset$,
	\item\label{expansion-unsaturated-C-exists} there is a component $C \in {\BB}$ that is not saturated by $\widehat{M}$, and
	\item\label{expansion-C-neighbors} for any component $C$ of $\BB$, $N_G(C) \subseteq A \cup \{v\}$.
\end{enumerate} 
\end{lemma}

\begin{proof}
Suppose that the premise of the lemma are satisfied and we obtain the sets $A \subseteq (H_v \cup S) \setminus \{v\}$, ${\BB} \subseteq {\CC}$ and a 4-expansion $\widehat{M}$ of $A$ onto $\BB$.
As $v$ has degree more than $52(k+1)$ in $G[V_2 \cup \{v\}]$, due to item-(\ref{caligraphic-C-size}) of Observation \ref{observation:lower-bound-C-aux-bip} that $|\CC| > 4(|S| + |H_v|)$.
Then due to the second item (ii) of Lemma \ref{lemma:new-expansion-lemma}, it follows that $|\CC \setminus \BB| \leq 4|((H_v \cup S) \setminus \{v\}) \setminus A)|$.
Then, $|\BB| > 4|A|$, implying that $\BB \neq \emptyset$.

First, we prove that $A \neq \emptyset$.
Now, we assume for the sake of contradiction that $A = \emptyset$.
As $N_{\mathcal{H}_v}(\BB) \subseteq A$, it implies that the components of $\BB$ have no neighbor in $(H_v \cup S) \setminus \{v\}$.
Comparing the values that $|\CC \setminus \BB| \leq 4|((H_v \cup S) \setminus \{v\}) \setminus A)|$ and $|\CC| > 4|((H_v \cup S) \setminus \{v\}) \setminus A)| + 4$, there are 4 components $D_1, D_2, D_3, D_4$ in $\BB$ that are not adjacent to any vertex of $(H_v \cup S) \setminus \{v\}$.
But all these components $D_1, D_2, D_3, D_4$ have a unique vertex $u$ adjacent to $v$ such that $uv$ is not a double-edge.
Then, observe that every component of $\BB$ is a component of $G[V_2]$ and is a pendant tree in $G[V_2]$.
Additionally, $v$ is the unique $S$-neighbor of each component of $\BB$.
Then, as Reduction Rule \ref{rule:contract-pendant-tree-neighbor} is not applicable each component of $\BB$ is a component of $G[V_2]$ with a single vertex.
It implies that each of $D_1, D_2, D_3, D_4$ are pendant vertices of $G$.
This implies that Reduction Rule \ref{rr1} is applicable that leads to a contradiction to our assumption in the premise.
Hence, $A \neq \emptyset$, completing the proof of item-(\ref{expansion-A-nonempty}).

Now, we prove the item-(\ref{expansion-unsaturated-C-exists}).
Due to item-(\ref{expansion-A-nonempty}), $A \neq \emptyset$.
It also holds that $|\BB| > 4|A|$, it implies that exactly $4|A|$ components from $\BB$ are saturated by $\widehat{M}$.
Then, clearly, some component $C$ exists in $\BB$ that is unsaturated by $\widehat{M}$, completing the proof of item-(\ref{expansion-unsaturated-C-exists}).
Finally, consider any component $C \in \BB$.
As $\BB \subseteq \CC$, $C$ has a vertex that is adjacent to $v$.
As $N_{\mathcal{H}_v}(\BB) \subseteq A$, and $C \in \BB$, there cannot exist any neighbor of $C$ in $G$ that is outside $A \cup \{v\}$.
Hence, $N_G(C) \subseteq A \cup \{v\}$, completing the item-(\ref{expansion-C-neighbors}).
\end{proof}

Now, we state the next reduction rule and subsequently prove that it is safe.

\begin{reduction rule}
\label{reduction-rule-expansion-lemma-trees}
Let $v \in S$ be a vertex with degree at least $52(k + 1)$ in $G[\{v\} \cup V_2]$ and let $\mathcal{H}_v$ be the auxiliary bipartite graph as illustrated in Definition \ref{definition:aux-bipartite}.
We invoke the algorithm provided by Lemma \ref{lemma:new-expansion-lemma} (i.e. new $q$-expansion lemma with $q=4$) to compute sets $A \subseteq H_v \cup (S \setminus \{v\})$ and $\BB \subseteq \CC$ such that
\begin{itemize}
	\item  $A$ has a $4$-expansion $\widehat{M}$ into $\BB$ in $\mathcal{H}_v$, and
	\item $N_{\mathcal{H}_v}(\BB) \subseteq A$.
\end{itemize}
 
Let $\widehat{\BB} \subseteq \BB$ denote the vertices of $\BB$ that are saturated by $\widehat{M}$ (endpoints of $\widehat{M}$ in $\BB$).
Remove the edges between $v$ and the connected components in $\widehat{\BB}$ in $G$ and create a double edge between $v$ and every vertex of $A$ to obtain the graph $G'$.
The new instance is $(G', k')$ with $k = k'$.
We refer to the Figure \ref{fig:new-expansion-updated-rule} for an illustration.
\end{reduction rule}

Before we prove the safeness of the above reduction rule, we prove the following two lemmas.

\begin{lemma}
\label{lemma:solution-properties-with-H}
Let $X$ be an optimal (clique, tree)-deletion set of $G$ of size at most $k$ and $A, \BB, \widehat{\BB}$ denote the vertex subsets obtained from Reduction Rule \ref{reduction-rule-expansion-lemma-trees}.
Then, $v \in X$ or $A \subseteq X$.
\end{lemma}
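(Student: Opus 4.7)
The plan is a contradiction-by-exchange argument relying on the optimality of $X$. I would assume $v\notin X$ and $A':=A\setminus X\ne\emptyset$, and aim to construct a strictly smaller (clique, tree)-deletion set, contradicting optimality. For each $a\in A'$, write $C_1^a,\dots,C_4^a\in\BB$ for the four components saturated to $a$ by $\widehat{M}$; these are pairwise disjoint over all $a\in A'$ because $\widehat{M}$ is a matching. A preliminary structural observation I would prove first is that $v$ has exactly one neighbor inside each $C_i^a$: if $v$ had two neighbors $c,c'\in C_i^a$, the unique tree path from $c$ to $c'$ in $C_i^a$ closed with the two $v$-edges would form a cycle of $G[\{v\}\cup V_2]$ passing through $v$ but disjoint from $H_v$, contradicting Lemma~\ref{lemma:hitting-obstructions-containing-v}.

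The heart of the argument is a charging step showing that for every pair $i\ne j$ the solution $X$ contains a vertex of $C_i^a\cup C_j^a$. Set $H_{ij}:=G[\{v,a\}\cup C_i^a\cup C_j^a]$. Both $v$ and $a$ have a neighbor in each of $C_i^a$ and $C_j^a$, which produces two internally vertex-disjoint $(v,a)$-paths through the two trees (hence a cycle in $H_{ij}$); meanwhile, vertices of $C_i^a$ and $C_j^a$ are pairwise non-adjacent because they lie in distinct components of $G[V_2\setminus H_v]$, so $H_{ij}$ is not a clique. By Lemma~\ref{ctvd-forbidden-char}, $H_{ij}$ then contains an obstruction $O_{ij}$. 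Connectivity of $O_{ij}$ together with the fact that $G[C_i^a\cup C_j^a]$ is a forest forces $O_{ij}$ to include $v$ or $a$; since $v,a\notin X$, $X$ must hit $O_{ij}$ inside $C_i^a\cup C_j^a$. Ranging over all $\binom{4}{2}$ pairs, at most one of the four components can be $X$-free, so $|X\cap\bigcup_i C_i^a|\ge 3$ for every $a\in A'$.

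The exchange would now define $X^*:=\bigl(X\setminus\bigcup_{a\in A'}\bigcup_i C_i^a\bigr)\cup\{v\}\cup A'$, which by the above charge satisfies $|X^*|\le|X|-3|A'|+1+|A'|\le|X|-1$. To finish I would verify that $X^*$ remains a (clique, tree)-deletion set: for any obstruction $O$ of $G$ pick some $u\in O\cap X$; either $u\in X^*$ already, or $u\in C_i^a$ for some $a\in A'$, and in the latter case the connectivity of $O$ together with $N(C_i^a)\subseteq\{v\}\cup A$ forces $O$ to meet $\{v\}\cup A$. The inclusion $N(C_i^a)\subseteq\{v\}\cup A$ is what pins everything to the expansion-lemma hypothesis $N_{\mathcal{H}}(\BB)\subseteq A$: the remaining types of would-be neighbors are ruled out because distinct components of $G[V_2\setminus H_v]$ are pairwise non-adjacent and because $V_1$ and $V_2$ live in separate components of $G-S$. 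Since $\{v\}\cup A'\subseteq X^*$ and $A\cap X\subseteq X\setminus\bigcup C_i^a\subseteq X^*$ (as $A$ is disjoint from $V_2\setminus H_v$), $X^*$ does hit $O$, contradicting the optimality of $X$. The main obstacle in this plan is the charging step: guaranteeing that the small subgraph $H_{ij}$ genuinely contains an obstruction forces a careful use of the one-neighbor-per-tree structural fact, without which chords from $v$ into the trees could spoil the induced cycles and break the count.
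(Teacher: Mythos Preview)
Your proposal is correct and follows essentially the same exchange-by-optimality strategy as the paper: assume $v\notin X$ and $A\not\subseteq X$, charge each $a\in A\setminus X$ against at least three of its four matched $\BB$-components that must intersect $X$, and swap those component vertices for $\{v\}\cup(A\setminus X)$ to obtain a strictly smaller solution. The only cosmetic difference is that the paper removes all of $X\cap\BB$ in the exchange (making the feasibility check trivial since $A\cup\{v\}$ separates every $\BB$-component), whereas you remove only the matched components and verify feasibility via the neighborhood inclusion $N_G(C_i^a)\subseteq\{v\}\cup A$; your charging step is in fact written more uniformly than the paper's, which only spells out the case where $v$ and $x$ are adjacent.
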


\begin{proof}
Let $X$ be an optimal (clique, tree)-deletion set of $G$ of size at most $k$.
{ Note that we have already invoked Lemma \ref{lemma:new-expansion-lemma} (new $q$-expansion lemma) with $q = 4$ in Reduction Rule \ref{reduction-rule-expansion-lemma-trees}.
The sets obtained are $A \subseteq (S \cup H_v) \setminus \{v\}$ and $\BB \subseteq \CC$ such that $N_{\mathcal{H}_v}(\BB) \subseteq A$ (due to item-(ii) of Lemma \ref{lemma:new-expansion-lemma})}.
As { the} degree of $v$ in $G[V_2 \cup \{v\}]$ is at least $52(k+1)$, due to item-(\ref{expansion-A-nonempty}) of Lemma \ref{lemma:A-should-be-nonempty}, it holds that $A \neq \emptyset$.
Additionally, due to item-(\ref{caligraphic-C-v-connectivity}) of Observation \ref{observation:lower-bound-C-aux-bip}, every component $C \in \CC$ has exactly one vertex that is adjacent to $v$.

Suppose for the sake of contradiction that there is an optimal (clique, tree)-deletion set $X^*$ of $G$ of size at most $k$ such that $v \notin X^*$ and $A \not\subseteq X^*$.
Let $X_{\BB}^*$ denote the intersection of $X^*$ with the vertices that are in the connected components in $\BB$.
We set $\widehat{X} = (X^* \setminus X^*_{\BB}) \cup (A \setminus X^*) \cup \{v\}$. 
We claim that $\widehat{X}$ is a (clique, tree)-deletion set of $G$ and $|\widehat{X}| < |X^*|$.

First, we prove that $\widehat{X}$ is a (clique, tree)-deletion set of $G$.
For every component $C$ of $\BB$, it holds due to item-(\ref{expansion-C-neighbors}) of Lemma \ref{lemma:A-should-be-nonempty} that $N_G(C) \subseteq A \cup \{v\}$.
By construction of $\widehat{X}$, as $A \cup \{v\} \subseteq \widehat{X}$.
Consider a double-edge of $G$ between $x$ and $y$.
Every double-edge must be incident to some vertex of $S$.
If $x$ or $y$ is in $A \cup \{v\}$, then $\widehat{X}$ intersects the double-edge between $x$ and $y$.
Otherwise, consider the double-edges that are not incident to any vertex of $A \cup \{v\}$.
Then, those double-edges are incident to the vertices of $S \setminus (A \cup \{v\})$.
Due to item-(\ref{expansion-C-neighbors}) of Lemma \ref{lemma:A-should-be-nonempty}, for every component $C \in \BB$, it holds that $N_G(C) \subseteq A \cup \{v\}$.
If a double-edge between $x$ and $y$ being incident to $C \in \BB$, then neither $x$, nor $y$ can be part of $S \setminus (A \cup \{v\})$.
Hence, a double-edge incident to the vertices of $S \setminus (A \cup \{v\})$ cannot be incident to any component $C \in \BB$.
Hence, those double-edges are incident to the components outside $\BB$ and incident to $S \setminus (A \cup \{v\})$.
As those vertices from $X^*$ are not deleted while constructing $\widehat{X}$, and $X^*$ intersects a double-edge between $x$ and $y$, it implies that $\{x, y\} \cap \widehat{X} \neq \emptyset$.
Hence, all double-edges are intersected by $\widehat{X}$.


Let us now consider the other obstructions, paw, diamond, and induced cycles of length at least 4 of $G$.
If those obstructions are not intersected by any component $C \in \BB$, then such obstructions are already intersected by $X^*$.
As those vertices from $X^*$ appearing in the obstructions disjoint from $C$ are not deleted when constructing $\widehat{X}$, the obstructions disjoint from $C$ are intersected by $\widehat{X}$.
So, we consider those paws, diamonds, induced cycles of length at least 4 that intersect a component $C \in \BB$.
As $C$ is a tree, to form any such obstruction, paw, diamond, or induced cycles of length at least 4 in $G$, it must be that the obstruction contains some vertex from $N_G(C)$.
Due to item-(\ref{expansion-C-neighbors}) of Lemma \ref{lemma:A-should-be-nonempty}, $N_G(C) \subseteq A \cup \{v\}$.
As $A \cup \{v\} \subseteq \widehat{X}$, the obstructions that are paws, diamonds, induced cycles of length at least 4 and intersect $C \in \BB$ are also intersected by $\widehat{X}$.
Hence, $\widehat{X}$ is a (clique, tree)-deletion set of $G$.

Now, we claim that $|\widehat X| < |X^*|$.
Since $A \not\subseteq X^*$ and $v \notin X^*$, there is $x \in A \setminus X^*$.
Due to Lemma \ref{lemma:new-expansion-lemma}, there are four connected components $D_1, D_2, D_3, D_4$ in $\widehat{\BB}$ such that $v$ is adjacent to one vertex from each of $D_1, D_2, D_3, D_4$ and $x$ is adjacent to each of $D_1, D_2, D_3, D_4$ in $\mathcal{H}_v$.
If $v$ is adjacent to $x$, then $G[\{v, x\} \cup D_1 \cup D_2 \cup D_3 \cup D_4]$ has several diamonds or subdivision of diamonds that contains $\{x, v\}$ and vertices from exactly two connected components from $\{D_1, D_2, D_3, D_4\}$.
In particular, for every pair $i, j \in \{1, 2, 3, 4\}$, there is a diamond or a subdivision of diamond that contain $\{x, v\}$, and vertices from $D_i$ and $D_j$.
{ Consider an induced subgraph that is a proper subdivision of a diamond, i.e. at least one edge is subdivided by at least one vertex.
Then, note that this procedure converts a triangle into an induced cycle of length at least 4.
This justifies the existence of an induced cycle with at least 4 vertices.
Therefore, for every pair $i, j \in [4]$, there is a diamond or an cycle of length at least 4 as an induced subgraph.}
Since $v, x \notin X^*$, it must be that $X^*$ must have at least one vertex from at least three of these connected components $\{D_1, D_2, D_3, D_4\}$. 
As our construction procedure of $X^*$ to $\widehat{X}$ removes the vertices of the components of $\BB$ from $X^*$ and adds the vertices of $A \setminus X^*$, it follows that for every $x \in A \setminus X^*$, one vertex is added and at least three additional vertices appearing in the components of $\{D_1, D_2, D_3, D_4\}$ are removed from $X^*$.
Finally, after adding $v$, it ensures that $\widehat{X}$ has strictly lesser vertices than $X^*$ contradicting the optimality of $X^*$.
This completes the proof.
\end{proof}

\begin{figure}[t]
\centering
	\includegraphics[scale=0.28]{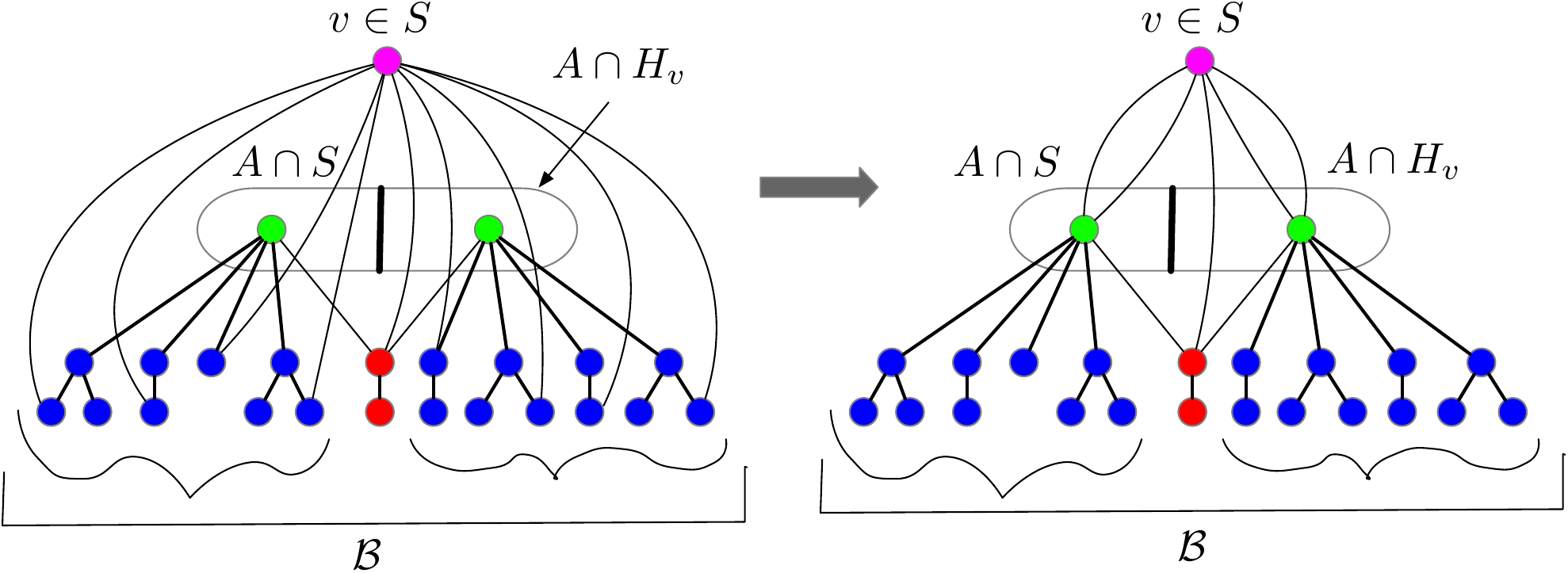}
	\caption{An illustration of Reduction Rule \ref{reduction-rule-expansion-lemma-trees}. The blue components are the components of $\BB$ and the red component is the chosen component $C$ for which the edge between $v$ and $C$ is not deleted. The blue components of $\BB$ are the ones that are the endpoints of expansion $\widehat M$.}
\label{fig:new-expansion-updated-rule}
\end{figure}

We use the previous lemmata and observations to prove that  Reduction Rule \ref{reduction-rule-expansion-lemma-trees} is safe.

\begin{lemma}
\label{lemma:reduction-rule-expansion lemma-trees-safeness}
Reduction Rule \ref{reduction-rule-expansion-lemma-trees} is safe.
\end{lemma}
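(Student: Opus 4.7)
The safeness proof has two directions, with the pivotal observation being that, under the rule's preconditions, $v$'s connected component in $G - X$ (for the forward direction) or in $G' - Y$ (for the backward direction) cannot be a clique of size at least three. I plan to establish this structural rigidity via a counting argument that combines Lemma~\ref{lemma:solution-properties-with-H}, the bound $|\CC| \ge 40k + 52$ established in the proof of Observation~\ref{observation:lower-bound-C-aux-bip}, and the size estimate $|A| \le |H_v| + |S| - 1 \le 10k + 3$.

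For the forward direction, I would start with an optimal $(G, k)$-solution $X$ and apply Lemma~\ref{lemma:solution-properties-with-H} to obtain $v \in X$ or $A \subseteq X$. If $v \in X$, then all edge modifications in $G'$ touch the removed vertex $v$, so $G - X$ and $G' - X$ agree as induced subgraphs and $X$ also solves $(G', k)$. Otherwise $v \notin X$ and $A \subseteq X$: the added double edges $v$-$A$ are killed by $A \subseteq X$, and $G' - X$ differs from $G - X$ only by deleting the single edges $v$-$w_D$ for $D \in \widehat{\BB}$, where $w_D$ is the unique vertex of $D$ adjacent to $v$ (uniqueness as shown in the proof of Lemma~\ref{lemma:solution-properties-with-H}). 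To complete this case, I would show that $v$'s component $T_v$ in $G - X$ is a tree. Suppose it were a clique of size at least three; then since distinct $w_D, w_{D'}$ for $D, D' \in \CC$ lie in distinct components of $G[V_2 \setminus H_v]$ and are therefore non-adjacent in $G$, at most one $w_D$ can belong to $T_v$. Each remaining $w_D$ (not in $T_v$) must lie in $X$, so $|X| \ge (|\CC| - 1) + |A| \ge 40k + 51$, contradicting $|X| \le k$. Thus $T_v$ is a tree, deleting edges yields a forest, and $G' - X$ is a collection of cliques and trees.

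For the backward direction, I would take any $(G', k)$-solution $Y$. Each double edge $v$-$a$ forces $v \in Y$ or $a \in Y$, which taken over all $a \in A$ gives $v \in Y$ or $A \subseteq Y$; the case $v \in Y$ again reduces to $G - Y = G' - Y$. In the remaining case $A \subseteq Y$ and $v \notin Y$, the graph $G - Y$ is obtained from $G' - Y$ by reinstating the edges $v$-$w_D$ for $D \in \widehat{\BB}$ with $w_D \notin Y$, so any new obstruction in $G - Y$ must use such an edge. Cycles and diamonds using such an added edge are ruled out because $w_D$'s only non-$v$ external neighbors in $G$ lie in $A \subseteq Y$ (by $N_{\mathcal{H}}(\BB) \subseteq A$) while $v$'s unique neighbor inside $D$ is $w_D$; a new paw must therefore have $w_D$ as the pendant vertex at $v$, with the triangle sitting inside $v$'s component $T_v$ in $G' - Y$. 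I would then repeat the counting argument, now with $\CC \setminus \widehat{\BB}$ in place of $\CC$ since the $\widehat{\BB}$-edges of $v$ are absent in $G'$: if $T_v$ were a clique of size at least three, then $|Y| \ge (|\CC \setminus \widehat{\BB}| - 1) + |A| \ge (|\CC| - 4|A| - 1) + |A| \ge 40k + 51 - 3|A|$. Together with $|A| \le 10k + 3$ and $|Y| \le k$, this yields $3(10k + 3) \ge 39k + 51$, impossible for $k \ge 0$. Hence $T_v$ is a tree, the added subtrees attach at $v$ via single edges to keep $v$'s component a tree, and $G - Y$ is a collection of cliques and trees, so $Y$ solves $(G, k)$.

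The main obstacle is calibrating the two counting inequalities so that the hypothetical clique $T_v$ of size at least three becomes infeasible under the budget $k$. The argument critically exploits (i) the lower bound $|\CC| \ge 40k + 52$ from the proof of Observation~\ref{observation:lower-bound-C-aux-bip}, which rests on the rule's precondition that $v$ has degree at least $60(k+1)$ into $V_2$; (ii) the pairwise non-adjacency of distinct $w_D$-vertices in $G$, capping how many of them can sit together in a clique; and (iii) the disjointness of the $w_D$-vertices from $A$, so that both sets contribute independently to $|X|$ and $|Y|$.
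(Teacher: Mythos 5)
Your proof is correct, and while the forward direction follows the paper's route (invoke Lemma~\ref{lemma:solution-properties-with-H} to get $v \in X$ or $A \subseteq X$), your backward direction is genuinely different. The paper argues by contradiction: any obstruction $O$ of $G - X'$ must be a paw through a deleted edge $vu$ with $u$ in a saturated component, and it then uses the deliberately retained edge $vu^*$ to the unsaturated component $C \in \BB \setminus \widehat{\BB}$ to exhibit an isomorphic paw in $G'$, forcing $u^* \in X'$ and permitting the swap $X^* = (X' \setminus \{u^*\}) \cup \{v\}$; so the paper produces a possibly different solution for $G$. You instead show that the very same solution $Y$ works for $G$, by proving via a counting argument (pairwise non-adjacency of the unique $v$-neighbors $w_D$ of distinct components, $|\CC| \geq 40k+52$, $|A| \leq 10k+3$, budget $k$) that $v$'s component in $G'-Y$ cannot be a clique of size at least three, hence is a tree to which the reattached subtrees of the $\widehat{\BB}$-components hang by single edges. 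Notably, your argument never uses the retained edge to $C \in \BB \setminus \widehat{\BB}$. Your treatment of the forward direction is also more careful than the paper's: the paper merely observes that $G'-X$ is a subgraph of $G-X$, which by itself does not preserve the property (deleting an edge from a $K_4$ component creates a diamond); your tree argument for $T_v$ closes exactly this gap. Both proofs are valid; yours trades the paper's local obstruction-swapping for a global counting argument that exploits the degree precondition of the rule more directly.
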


\begin{proof}
For the forward direction ($\Rightarrow$), let $X$ be a (clique, tree)-deletion set of $G$ with at most $k$ vertices.
We assume without loss of generality that $X$ is an optimal (clique, tree)-deletion set of $G$.
Due to Lemma \ref{lemma:solution-properties-with-H}, it follows that $v \in X$ or $A \subseteq X$ or both.
For each of those cases, observe that $G' - X$ is an induced subgraph of $G - X$. 
Hence, $X$ is a (clique, tree)-deletion set of $G'$ of size at most $k'$.

For the backward direction ($\Leftarrow$), let $X'$ be a (clique, tree)-deletion set of $G'$ of size at most $k' (= k)$.
Since there is a double-edge between $v$ and every vertex of $A$, it implies that $A \subseteq X'$ or $v \in X'$.
In case $v \in X'$, then the graphs $G - X'$ and $G' - X'$ are precisely the same.
Therefore, $X'$ is a (clique, tree)-deletion set of $G$ as well.
For the other case, we have that $A \subseteq X'$ but $v \notin X'$.
Suppose for the sake of contradiction that $G - X'$ has a double-edge, or some component of $G - X'$ is neither a clique, nor a tree.
Then, $G - X'$ has an obstruction $O$.
Due to the item-(\ref{expansion-C-neighbors}) of Lemma \ref{lemma:A-should-be-nonempty}, it holds that $N_G(C) \subseteq A \cup \{v\}$.
By construction of $G'$, an edge of $G$ is not an edge of $G'$ only if one endpoint $u$ is in some connected component $C' \in \widehat{\BB}$ and the other endpoint is $v$.
The obstruction $O$ in $G - X'$ must contain such an edge $uv$.
Due to Observation \ref{observation:lower-bound-C-aux-bip}, $uv$ cannot be a double-edge.
Hence, we can safely assume that $G - X'$ has no double-edge.
{ Also, $N_G(C) \subseteq A \cup \{v\}$ for any $C \in \BB$}, and $A \subseteq X'$, the only possible way $uv$ edge can be part of an obstruction $O$ in $G - X'$ is a paw.
{ As $uv$ is not a double-edge, this obstruction $O$ is a paw that contains $u$ and $v$}.
Now, what is crucial here is that due to item-(\ref{expansion-unsaturated-C-exists}) of Lemma \ref{lemma:A-should-be-nonempty}, there is a component $C \in \BB$ such that $C$ is not saturated by $\widehat{M}$.
Hence, such a component $C \in \BB \setminus \widehat{\BB}$.
{ Furthermore, Reduction Rule \ref{reduction-rule-expansion-lemma-trees} deletes the edges that have one endpoint in $C \in \widehat{\BB}$ and the other endpoint is $v$}.
Hence, Reduction Rule \ref{reduction-rule-expansion-lemma-trees} has chosen not to delete the edge $vu^*$ such that  $u^* \in C$, $vu^* \in E(G)$ and $C \in \BB \setminus \widehat{\BB}$.
But by construction of $G'$, $vu^* \in E(G')$.
Observe that $O^* = (O \setminus \{u\}) \cup \{u^*\})$ also induces a paw in the graph $G$.
Then, $u^*$ must be in the set $X'$ as otherwise it would contradict that $X'$ is a (clique, tree)-deletion set of $X'$.
So, we set $X^* = (X' \setminus \{u^*\}) \cup \{v\}$ and clearly by construction $|X^*| = |X'|$.
As the item-(\ref{expansion-C-neighbors}) of Lemma \ref{lemma:A-should-be-nonempty}, the neighborhood of any connected component of $\BB$ is contained in $A \cup \{v\}$, this ensures us that $X^*$ is a (clique, tree)-deletion set of $G$.
This completes the proof of the lemma.
\end{proof}

Observe that the above mentioned reduction rules does not increase the degree of $v$.
When none of the above mentioned reduction rules are applicable, no connected component $C$ (with at least three vertices) of $G[V_2]$ can have two leaves $u$ and $v$ both of which are pendant vertices in $G$.
But, it is possible that $C$ has one leaf $u$ that is a pendant vertex of the whole graph $G$.


\begin{lemma}
\label{lemma:trees-bound}
Let $G$ be the graph obtained after applying Reduction Rules \ref{rr1}-\ref{reduction-rule-expansion-lemma-trees} exhaustively. 
Then the number of vertices in $V_2$ is at most $1325k|S|$.
\end{lemma}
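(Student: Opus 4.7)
The plan is to partition, for each tree $T$ in $V_2$, its vertex set into structural categories, bound each one linearly in $A(T)$ (the number of vertices of $T$ adjacent to $S$), and then sum, using Reduction Rule \ref{reduction-rule-expansion-lemma-trees} to control $\sum_T A(T)$. Since that rule has been exhaustively applied, every $v \in S$ has degree at most $60(k+1)$ in $G[\{v\} \cup V_2]$, so the number of edges between $S$ and $V_2$, and hence $\sum_T A(T)$, is at most $60(k+1)|S|$.

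For a fixed tree $T$, I would partition $V(T)$ into $L_S(T)$ ($T$-leaves with at least one $S$-neighbor), $L_P(T)$ ($T$-leaves with no $S$-neighbor, i.e.\ pendant in $G$), $M_S(T)$ ($T$-degree-two vertices with at least one $S$-neighbor), $M_P(T)$ ($T$-degree-two vertices with no $S$-neighbor), and $B(T)$ (branching vertices of $T$-degree at least three). Clearly $L_S(T) + M_S(T) \leq A(T)$. To bound $L_P(T)$, the exhaustive application of Reduction Rule \ref{reduction-rule-leaf-neighbor-not-adj} forces the unique $T$-neighbor of every vertex of $L_P(T)$ to be $S$-attached, and Reduction Rule \ref{rr4} guarantees that each vertex has at most one pendant-in-$G$ neighbor, so distinct $L_P$-leaves have distinct $S$-attached parents; hence $L_P(T) \leq A(T)$. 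Combined with the tree inequality $B(T) \leq L(T) - 2$ and $L(T) = L_S(T) + L_P(T) \leq 2A(T)$, this yields $|V''(T)| \leq 4A(T) - 2$, where $V''(T) := L_S(T) \cup L_P(T) \cup M_S(T) \cup B(T)$ is the ``skeleton'' of $T$.

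The vertices of $M_P(T)$ form chains along $T$ whose endpoints lie in $V''(T)$; contracting these chains yields a tree on $V''(T)$, so the number of chains is at most $|V''(T)|-1$. I would show each such chain has at most two internal $M_P$-vertices as follows. Every $v \in M_P(T)$ has $G$-degree exactly $2$, so the chain is a degree-$2$-path in $G$. If both chain-endpoints have $G$-degree greater than $2$, the chain itself is a degree-$2$-overbridge and Reduction Rule \ref{rr7} bounds its length by $4$. Otherwise an endpoint must be a $T$-leaf with exactly one $S$-neighbor (any other vertex of $V''(T)$ has $G$-degree at least $3$, while a pendant-in-$G$ endpoint is ruled out by Reduction Rule \ref{reduction-rule-leaf-neighbor-not-adj}); extending the chain through the $S$-neighbor of such an endpoint produces a strictly longer maximal degree-$2$-path in $G$, whose length is still at most $4$ by exhaustive application of Reduction Rules \ref{rr5} and \ref{rr7}. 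Since the original $T$-chain sits as a contiguous sub-path, its length is at most $4$, meaning at most two internal vertices. Hence $|M_P(T)| \leq 2(|V''(T)|-1)$, and consequently $|V(T)| \leq 3|V''(T)| \leq 12A(T)$.

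Summing over all trees gives $|V_2| \leq 12\sum_T A(T) \leq 12 \cdot 60(k+1)|S| = 720(k+1)|S|$, which is at most $1525k|S|$ for every $k \geq 1$. The main obstacle is the chain-length argument: when a chain-endpoint has $G$-degree $2$ (a $T$-leaf with exactly one $S$-neighbor), the chain itself is neither a degree-$2$-overbridge nor a degree-$2$-tail, so Reduction Rules \ref{rr5} and \ref{rr7} cannot be invoked directly; one has to extend through that $S$-neighbor and argue that the resulting maximal degree-$2$-path in $G$ is still of length at most $4$ after exhaustive application of the reductions, which is what constrains the original $T$-chain.
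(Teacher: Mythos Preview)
Your argument follows the same skeleton as the paper's: bound the $S$-attached vertices of $V_2$ via Reduction Rule~\ref{reduction-rule-expansion-lemma-trees}, bound the leaves (using Reduction Rules~\ref{rr4} and~\ref{reduction-rule-leaf-neighbor-not-adj}) and branching vertices of the forest in terms of that count, and finally bound the remaining $G$-degree-$2$ vertices by invoking Reduction Rules~\ref{rr5} and~\ref{rr7} on the chains between skeleton vertices. The paper does this more loosely (it speaks of ``degree-$2$-overbridges of $G[V_2]$'' and contracts them to an auxiliary forest $H$ with at most $5|N|$ edges, each standing for at most four vertices), arriving at $25|N|\le 25\cdot 61k|S|=1525k|S|$; your per-tree partition into $L_S,L_P,M_S,M_P,B$ is cleaner and yields the sharper $720(k{+}1)|S|$.

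The obstacle you isolate is exactly the soft spot in both arguments. When a $T$-chain endpoint is a $T$-leaf with a single $S$-edge (hence $G$-degree $2$), you extend through that $S$-neighbour and appeal to Reduction Rules~\ref{rr5}/\ref{rr7} on the resulting maximal degree-$2$-path in $G$. This works whenever the extension terminates at a vertex of $G$-degree $\neq 2$, but it does \emph{not} cover the case where the extension closes up into a cycle---e.g.\ an entire connected component of $G$ that is a long cycle $C_i$ with a single vertex in $S$: every vertex has $G$-degree $2$, neither Reduction Rule~\ref{rr5} nor~\ref{rr7} fires, yet the lone $T$-chain has $i-3$ internal $M_P$-vertices. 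The paper's sentence ``each edge of $H$ corresponds to at most four vertices'' has precisely the same blind spot. The fix is easy (add a rule shrinking any all-degree-$2$ cycle component to a $C_4$, or count such components separately), but as written both proofs share this small gap.
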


\begin{proof}
We use $N$ to denote the vertices of $V_2$ that are adjacent to some vertex of $S$.
As Reduction Rule \ref{reduction-rule-expansion-lemma-trees} is not applicable, for every $v \in S$, there are at most $52(k + 1) (\leq 53k)$ edges incident to $v$ with the other endpoint being in $V_2$ (hence in $N$).
Hence, the number of vertices of $N$ is at most $53k|S|$.

Let us bound the number of leaves in the forest $G[V_2]$ that is not in $N$.
Since Reduction Rule \ref{reduction-rule-leaf-neighbor-not-adj} is exhaustively applied, such a leaf is adjacent to a vertex that is adjacent to some vertex in $S$. 
Since Reduction Rule \ref{rr4} is exhaustively applied, two such leaves are not adjacent to the same vertex.
Hence, we can define an injective function from the leaves of $G[V_2 \setminus N]$ to the internal vertices in the forest $G[V_2]$ that is in $N$.

Thus, the total number leaves in $G[V_2]$ is at most $2|N|$.
Since, the number of vertices with degree at least 3 in $G[V_2]$ is at most the number of leaves in $G[V_2]$, the number of vertices of $G[V_2]$ with degree at least three is at most $2|N|$.
Therefore, the sum of the number leaves in $G[V_2]$, and the number of vertices with degree at least 3 of $G[V_2]$ is at most $4|N|$.
Additionally, there are some vertices of $N$ that are neither counted as a leaf of $G[V_2]$ nor is counted as a vertex of degree at least three in $G[V_2]$.
{The number of such vertices} is at most $|N|$.

{It remains to consider the degree-2 vertices in $G[V_2]$ that are also degree-2 vertices of $G$.
Let $P$ be path in $G[V_2]$ whose endpoints are leaves of $G[V_2]$, degree at least 3 in $G[V_2]$ or in $N$. Note that the internal vertices of $P$ are degree-2 vertices in $G$. If one of the endpoints is a pendant vertex in $G$, since Reduction Rule \ref{rr5} is exhaustively applied, the size of the path is at most $2$. Else if both of the endpoints have degree at least 3 in $G$, since Reduction Rule \ref{rr7} is exhaustively applied, the size of the path is at most $4$. Hence both the end points have degree exactly 2 in $G$, and thus are in $N$. If both of the endpoints have a common neighbor, then as Reduction Rule \ref{rule:pendant-cycle} is not applicable, the size of the path is at most $2$. Thus, both the endpoints have different neighbors.
Then there exists a degree-2 overbridge $P'$ in $G$ such that $P$ is contained in $P'$. As  Reduction Rule \ref{rr7} is not applicable, the size of $P'$ (and thus $P$) is at most $4$. Thus, we conclude that the size of $P$ is at most $4$. If we replace all such paths $P$ in $G[V_2]$ by edges}, we get a forest $H$ with at most $5|N|$ vertices, and thus edges.
Since each edge of $H$ corresponds to at most 4 vertices, we have at most $20|N|$ vertices each of which have degree exactly two.

Thus the total number of vertices in $G[V_2]$ is bounded by $25|N|$.
Since $|N| \leq 53k|S|$, the total number of vertices in the forest $G[V_2]$ is bounded by $1325k|S|$.
\end{proof}

Combining Lemma \ref{lemma:cliques-bound} and Lemma \ref{lemma:trees-bound}, we are ready to prove our final result that we restate below.

{\MainResult*}

\begin{proof}
Given the input instance $(G, k)$, the kernelization algorithm invokes Reduction Rules \ref{rr1}-\ref{reduction-rule-expansion-lemma-trees} exhaustively.
Let $(G', k')$ denote the output instance such that $S'$ is a (clique, tree)-deletion set of $G'$ with at most $4k'$ vertices.
When a reduction rule is applied, the parameter $k$ never increases and no vertex is added to $S$.
Hence, $k' \leq k$ and $|S'| \leq |S| \leq 4k$.
Suppose that $V_1 \subseteq G' - S'$ denotes the vertices such that every connected component of $G[V_1]$ is a clique and $V_2 \subseteq G' - S'$ denotes the vertices such that every connected component of $G[V_2]$ is a tree.
Since Reduction Rules \ref{rr1}-\ref{rr6} are not applicable, it follows from Lemma \ref{lemma:cliques-bound} that $|V_1|$ is $\OO(k^5)$. 
Additionally, as Reduction Rules \ref{rr1}-\ref{reduction-rule-expansion-lemma-trees} are not applicable, it follows from Lemma \ref{lemma:trees-bound} that $|V_2| \leq 1325k|S'| = 5300k^2$. 
Therefore, the total number of vertices in $G'$ is $|S'| + |V_1| + |V_2|$  which is $\OO(k^5)$.
\end{proof}

\section{Conclusions and Future Research}
\label{sec:conclusion}

Our paper initiates a study of polynomial kernelization for vertex deletion to pairs of scattered graph classes.
Jacob et al. \cite{JacobMR23jcss} have provided a $4^k n^{\OO(1)}$-time algorithm for {\ctvd} and it was subsequently improved to a $3.46^k n^{\OO(1)}$-time algorithm by Tsur \cite{TSUR2025Kernel}.
It would also be interesting to design an FPT algorithm where the base of the exponent is (substantially) improved to $3$.
The asymptotically fastest randomized algorithm runs in $3.13^k n^{\OO(1)}$-time \cite{TSUR2025Kernel}.
Therefore, designing a randomized algorithm that runs in $3^k n^{\OO(1)}$-time is also an open problem.
On a broader level, it would be interesting to explore the possibility of getting a polynomial kernel for problems where the objective is to delete a set of at most $k$ vertices so that the connected components would belong to other interesting pairs of graph classes, such as (interval graph, trees), (chordal graph, bipartite permutation), (interval, cactus) etc.

In addition, vertex/edge deletion to scattered graph classes are also interesting from approximation algorithms perspective.
In fact, it would be interesting to improve the approximation guarantee of Proposition \ref{lemma-ctvd-approximation} that is also an open problem.
Additionally, the dual version of this problem, i.e. ``packing vertex-disjoint obstructions to the scattered class of cliques and trees'' is also interesting from the perspective of parameterized complexity. 
The same problem can be considered as packing vertex-disjoint induced subgraphs that are paws or diamonds or cycles of length at least 4.
A natural approach to solve this problem requires to design an Erdos-Posa style theorem for packing obstructions for scattered class of cliques and trees.
Finally, a more general open problem is to identify pairs of graph classes $(\Pi_1, \Pi_2)$ for which vertex deletion to $\Pi_1$ as well as vertex deletion to $\Pi_2$ admits polynomial sized kernels, but $(\Pi_1, \Pi_2)$-Deletion does not admit a polynomial kernel.



\end{document}